%August 18,2007, revised Nov 19, 2008
\documentclass[11pt]{article}
\usepackage{amssymb}

%%%%%%%%%%%%%%%%%%%%%%%%%%%%%%%%%%%%%%%%%%%%%%%%%%%%%%%%%%%%%%%%%%%%%%%%%%%%%%%%%%%%%%%%%%%%%%%%%%%%
\usepackage{amsmath}

%TCIDATA{OutputFilter=Latex.dll}
%TCIDATA{LastRevised=Tuesday, November 18, 2008 13:18:31}
%TCIDATA{<META NAME="GraphicsSave" CONTENT="32">}

\newcounter{resultnum}[section]\setcounter{resultnum}{0}
\newtheorem{conclusion}{Conclusion}[section]

\newcounter{conclusionnum}[section]\setcounter{conclusionnum}{0}

\newcounter{conditionnum}[section]\setcounter{conditionnum}{0}

\newcounter{conjecturenum}[section]\setcounter{conjecturenum}{0}
\newtheorem{example}{Example}[section]

\newcounter{examplenum}[section]\setcounter{examplenum}{0}

\newcounter{exercisenum}[section]\setcounter{exercisenum}{0}
\newtheorem{lemma}{Lemma}[section]

\newcounter{lemmanum}[section]\setcounter{lemmanum}{0}

\newcounter{notationnum}[section]\setcounter{notationnum}{0}
\newtheorem{theorem}{Theorem}[section]

\newcounter{theoremnum}[section]\setcounter{theoremnum}{0}
\newtheorem{definition}{Definition}[section]

\newcounter{definitionnum}[section]\setcounter{definitionnum}{0}
\newtheorem{corollary}{Corollary}[section]

\newcounter{corollarynum}[section]\setcounter{corollarynum}{0}

\newcounter{remarknum}[section]\setcounter{remarknum}{0}
\newtheorem{proposition}{Proposition}[section]

\newcounter{propositionnum}[section]\setcounter{propositionnum}{0}

\newcounter{acknowledgementnum}[section]\setcounter{acknowledgementnum}{0}

\newcounter{algorithmnum}[section]\setcounter{algorithmnum}{0}

\newcounter{axiomnum}[section]\setcounter{axiomnum}{0}

\newcounter{casenum}[section]\setcounter{casenum}{0}

\newcounter{claimnum}[section]\setcounter{claimnum}{0}

\newcounter{summarynum}[section]\setcounter{summarynum}{0}

\newcounter{problemnum}[section]\setcounter{problemnum}{0}
\newenvironment{proof}[1][]{\textbf{Proof.} }{}

\begin{document}

\title{Curve Flows and Solitonic Hierarchies \\
Generated by Einstein Metrics }
\date{November 19, 2008}
\author{ Sergiu I. Vacaru\thanks{%
Sergiu.Vacaru@gmail.com } \\
%EndAName
{\quad} \\
\textsl{The Fields Institute for Research in Mathematical Science} \\
\textsl{222 College Street, 2d Floor, } \textsl{Toronto \ M5T 3J1, Canada} \\
and \\
\textsl{Faculty of Mathematics, University "Al. I. Cuza" Ia\c si}, \\
\textsl{\ 700506, Ia\c si, Romania} }
\maketitle

\begin{abstract}
We investigate bi--Hamiltonian structures and mKdV hierarchies of solitonic equations generated by (semi) Riemannian metrics and curve flows of non--stretching curves. There are applied methods of the geometry of nonholonomic manifolds enabled with metric--induced nonlinear connection (N--connection) structure. On spacetime manifolds, we consider a nonholonomic splitting of dimensions and define a new class of liner connections which are 'N--adapted', metric compatible and uniquely defined
by the metric structure. We prove that for such a linear
connection, one yields couples of generalized sine--Gordon
equations when the corresponding geometric curve flows result
in solitonic hierarchies described in explicit form by
nonholonomic wave map equations and mKdV analogs of the
Schr\"{o}dinger map equation. All geometric constructions can be re--defined for the Levi--Civita connection but with ''noholonomic mixing'' of solitonic interactions. Finally, we speculate why certain methods and results from the geometry of
nonholonmic manifolds and solitonic equations have general importance in various directions of modern mathematics,
geometric mechanics, fundamental theories in physics and applications, and briefly analyze possible nonlinear wave configurations for modeling gravitational interactions by effective continuous media effects.

\vskip3pt \textbf{Keywords:}\ Curve flow, (semi) Riemannian spaces,
nonholonomic manifold, nonlinear connection, bi--Hamiltonian, solitonic
equations.

\vskip3pt MSC:\ 37K05, 37K10, 37K25, 35Q53, 53B20, 53B40, 53C21, 53C60
\end{abstract}

%\tableofcontents

%\newpage

\section{ Introduction}

The 'anholonomic frame method' \cite{vncg,ijgmmp1,rf2,rf3,ncrf} was recently
developed as a general geometric approach for constructing exact solutions
in gravity and Ricci flow theory following the formalism of nonlinear
connections and nonholonomic distributions in Riemann--Finsler geometry and
applications in modern physics and mechanics \cite{ma2,vrfg,vsgg}. In
parallel, the differential geometry of plane and space curves received
considerable attention in the theory of nonlinear partial differential
equations and applications to modern physics \cite{chou1,mbsw}. One proved
that curve flows on Riemannian spaces of constant curvature are described
geometrically by hierarchies defined by wave map equations and mKdV analogs
of Schr\"{o}dinger map equation. The main results on vector generalizations
of KdV and mKdV equations and the geometry of their Hamiltonian structures
are summarized in Refs. \cite{ath,saw,serg}, see also a recent work in \cite%
{fours,wang}.

In \cite{anc1,anc2}, the flows of non--stretching curves were analyzed using
moving parallel frames and associated frame connection 1--forms in a
symmetric spaces $M=G/SO(n)$ and the structure equations for torsion and
curvature encoding $O(n-1)$--invariant bi--Hamiltonian operators.\footnote{$%
G $ is a compact semisimple Lie group with an involutive automorphism that
leaves fixed a Lie subgroup $SO(n)\subset G,$ for $n\geq 2$} It was shown
that the bi--Hamiltonian operators produce hierarchies of integrable flows
of curves in which the frame components of the principal normal along the
curve satisfy $O(n-1)$--soliton equations. The crucial condition for
performing such constructions is that the frame curvature matrix is constant
on the curved manifolds like $M=G/SO(n).$

On a general (pseudo) Riemannian manifold, working only with the
Levi--Civita connection, it is not possible to define in explicit form such
systems of reference and coordinates when the curvature would be described
completely by constant ''matrix'' coefficients and satisfy the conditions
for solitonic encoding. One concluded that only for the (pseudo) Riemannian
manifolds of constant curvature the geometric data on curve flows can be
transformed into solitonic hierarchies.

Nevertheless, we argue that the geometry of (pseudo) Riemannian manifolds%
\footnote{%
and a number of generalized Lagrange--Finsler spaces and their Ricci flows
into (pseudo) Riemannian, Eisenhart--Moffat, nonholonomic Fedosov,
noncommutative and other structures, see recent results and references in %
\cite{rf2,rf3,rfns,plafq,ncrf}} can be encoded into corresponding
bi--Hamilton operators and solitonic hierarchies. This is possible if we
work with an auxiliary class of metric compatible linear connections which
are completely defined by a fixed metric structure (in a particular case, we
can chose an exact solution of the Einstein equations in general relativity).

Let us explain the main idea for constructions with alternative linear
connections defined by a fixed (pseudo) Riemannian/Einstein metric:\ For a
metric $\mathbf{g,}$ there is a unique Levi--Civita connection $\ ^{\mathbf{g%
}}\nabla $ satisfying the conditions of metricity, $\ ^{\mathbf{g}}\nabla
\mathbf{g}=0,$ and vanishing torsion, $\ _{\mathbf{\nabla }}\mathbf{T}=0.$%
\footnote{%
we shall use left ''up'' and ''low'' labels in order to emphasize that
certain geometric objects are defined by a fixed metric, connection or other
fields} From the same metric, we can construct an infinite number of metric
compatible linear connections $\{\ ^{\mathbf{g}}\mathbf{D}\}$ satisfying the
conditions
\begin{equation}
\ ^{\mathbf{g}}\mathbf{Dg=}0,\ \mbox{ when }\ ^{\mathbf{g}}\mathbf{D=\ ^{%
\mathbf{g}}\nabla +\ }^{\mathbf{g}}\mathbf{Z.}  \label{mcomp}
\end{equation}%
The distorsion tensor $\mathbf{\ }^{\mathbf{g}}\mathbf{Z}$ (for metric
compatible linear connections, this tensor is an algebraic combination of
the coefficients of torsion $_{\mathbf{D}}^{\ \mathbf{g}}\mathbf{T)}$ is
computed in explicit form: it is defined only by the coefficients of $%
\mathbf{g}$ if a well defined geometric principle is introduced into
consideration (such a principle has to be different from the condition of
zero torsion). For instance, for deformation quantization of the Einstein
gravity \cite{plafq}, it was important to construct a canonical almost K\"{a}%
hler connection $\ _{K}^{\mathbf{g}}\mathbf{D}$ in a form to be compatible
with the so--called canonical almost symplectic structure $\ ^{\mathbf{g}%
}\theta $ (defined by the coefficients of $\mathbf{g),}$ when $\ _{K}^{%
\mathbf{g}}\mathbf{D\ g=\ }_{K}^{\mathbf{g}}\mathbf{D}$ $\left( ^{\mathbf{g}%
}\theta \right) =0.$ We worked with a nontrivial torsion structure $\ _{%
\mathbf{K}}^{\ \mathbf{g}}\mathbf{T}\neq 0,$ which was very important for a
generalization of Fedosov quantization. Nevertheless, we emphasize that all
geometric constructions could be redefined equivalently for the Levi--Civita
connection $\mathbf{\ ^{\mathbf{g}}\nabla }$ because $\ _{\mathbf{K}}^{\
\mathbf{g}}\mathbf{T}$ is also constructed only from the metric
coefficients. Such a torsion field is completely different from that (for
instance) in Riemann--Cartan, or string, gravity, where torsion is considered
as a new physical field subjected to additional field equations, see
discussion in \cite{vrfg}.

In this article, we shall work with two metric compatible linear connections
\ $\mathbf{\ ^{\mathbf{g}}\nabla }$ and $\ ^{\mathbf{g}}\mathbf{D,}$ both
constructed from the coefficients of a metric $\mathbf{g,}$ when the
curvature tensor for the second connection can be represented by a constant
coefficients matrix (with respect to a well--defined frame structure).
Following the geometry of curve flows defined by the connection $\ ^{\mathbf{%
g}}\mathbf{D,}$ we shall derive the corresponding bi--Hamiltonian structure
and related solitonic hierarchies. This way, having encoded the geometrical
data for $\mathbf{g}\ $\ and $\ ^{\mathbf{g}}\mathbf{D}$ into solitonic
equations (and their solutions), we shall be able to re--define them for $%
\mathbf{g}\ $\ and $\ ^{\mathbf{g}}\mathbf{\nabla ,}$ computing $\mathbf{\ }%
^{\mathbf{g}}\mathbf{Z}$ and using the distorsion relation $\mathbf{\ ^{%
\mathbf{g}}\nabla =\ ^{\mathbf{g}}\mathbf{D-}\ }^{\mathbf{g}}\mathbf{Z}$ (%
\ref{mcomp}). Such values were formally introduced for certain
classes of connections in Finsler geometry but they can be similarly
constructed on (pseudo) Riemannian spaces and their nonholonomic
deformations. We shall follow the formalism and conventions established in
Ref. \cite{ijgmmp1}, see also reviews of results in \cite{vrfg,vsgg,vncg},
on the geometry of nonholonomic manifolds, Finsler--Lagrange methods and
applications to modern physics. The approach originates from the geometry of
nonlinear connections (N--connections) and Finsler geometry and
generalizations formally developed on tangent bundles and manifolds enabled
with generalized connections and applications in mechanics, see summaries of
results and references from \cite{ma2,bej}.

The aim of this paper is to prove that respective curve flow solitonic
hierarchies are generated by any (semi) Riemannian metric $\mathbf{g}%
_{\alpha \beta }$ on a manifold $\mathbf{V}$ of dimension $n+m,$ for $n\geq 2
$ and $m\geq 1,$ if such a space is enabled with a nonholonomic distribution
defining a spacetime splitting.\footnote{%
in physical literature, one uses the term ''pseudo Riemannian'' instead of
''semi Riemannian''} For such distributions with associated nonlinear
connection (N--connection) structure, we can define certain classes of frame
and linear connection nonholonomic deformations when the curvature is
characterized by constant matrix coefficients. This allows us to derive the
corresponding hierarchies of gravitational solitonic equations and
conservation laws. We also prove that any solution of the Einstein equations
(a vacuum one, or with nontrivial cosmological constant) can be encoded into
such solitonic hierarchies.

We would like to emphasize the multi--disciplinary character of our approach
based on modelling geometries and physical interactions with generic local
anisotropy. The first constructions and applications to locally anisotro\-pic
thermodynamics, generalized gravity theories and physics of continuous media
were proposed in Finsler geometry and its generalizations to Lagrange and
Hamilton spaces (see monographs and reviews \cite{ma2,bej,vsgg,vrfg} and
references therein; here we also cite \cite{bcs} based on the Chern
connection, which is not metric compatible and considered to be less
relevant for standard physical theories, but nevertheless very important for
a number of other type applications). Such geometrical and physical models
can be unified by the concept of nonholonomic manifold (usual manifolds
endowed with additional nonintegrable/nonholonomic distributions) \cite%
{vr1,vr2,hor}, see reviews of mathematical results in \cite{bejf} and of
various physical applications in \cite{vsgg,vrfg}. Our present work
establishes a bridge between the geometry of nonlinear connections
elaborated in Finsler geometry and nonholonomic physics and the theory of
solitonic equations defined on curved spaces with various applications in
gravity physics, geometric mechanics, locally anisotropic kinetic processes
and thermodynamics and nonholonomic Ricci flows \cite%
{vlank,vlsp,ancvac,vrepmp}.

The paper is organized as follows:

In section 2 we outline the geometry of N--adapted frame transforms on
(pseudo) Riemannian spaces enabled with N--connection structure. We
emphasize the possibility to work equivalently with different classes of
linear connections (the Levi--Civita and various N--adapted ones) completely
defined by a metric structure $\mathbf{g}_{\alpha \beta }$ for a prescribed
splitting $n+m.$ We show how alternative linear connections with constant
Riemannian tensor matrix coefficients can be derived from a (pseudo)
Riemannian metric. A class of nonholonomic Einstein spaces is analyzed.

In section 3 we consider curve flows on nonholonomic (pseudo) Riemannian
spaces. It is constructed a class of nonholonomic Klein spaces for which the
bi--Hamiltonian operators are derived for a linear connection adapted to the
nonlinear connection structure, for which the distinguished curvature
coefficients are constant.

Section 4 is devoted to the formalism of distinguished bi--Hamiltonian
operators and vector soliton equations for arbitrary (semi) Riemannian
spaces. We define the basic equations for nonholonomic curve flows. Then we
consider the properties of cosymplectic and symplectic operators adapted to
the nonlinear connection structure. Finally, there are constructed solitonic
hierarchies of bi--Hamiltonian anholonomic curve flows

We conclude the results in section 5. The Appendix contains necessary
definitions and formulas from the geometry of nonholonomic manifolds.

\section{Curvature Tensors with Constant Coefficients}

The idea behind an alternative description of general relativity is to
provide an equivalent re--formulation of geometric data for a (pseudo)
Riemannian metric $\mathbf{g}$ and Levi--Civita connection $\mathbf{\nabla }$
(in brief, we write such data $[\mathbf{g,\nabla }]),$ into a nonholonomic
structure with $[\mathbf{g,D}],$ where $\mathbf{D}$ is another metric
compatible linear connection, also defined by $\mathbf{g}$ in a unique form.

In this section, we prove that for any (semi) Riemannian metric $\mathbf{g}$
on a nonholonomic manifold $\mathbf{V}$ enabled with a nonlinear connection
(in brief, N--connection) structure $\mathbf{N,}$ defining a conventional
spacetime splitting of dimension $n+m,$ it is possible to construct a metric
compatible linear connection $\widetilde{\mathbf{D}}$ with constant matrix
coefficients of curvature, computed with respect to 'N--adapted' frames. We
outline in Appendix the basic definitions and notations from the geometry of
N--anholonomic Riemann manifolds, see details in \cite{ijgmmp1,vrfg,vsgg}.

\subsection{N--adapted frame transforms and (pseudo) Riemannian metrics}

We consider a manifold $\mathbf{V}$ of necessary smooth class and dimension $%
n+m,$ for $n\geq 2$ and $m\geq 1,$ enabled with a (semi) Riemannian metric
structure, i. e. with a second rank tensor of constant signature \footnote{%
in physical literature, there are used equivalent terms like (pseudo)
Riemannian or locally (pseudo )Euclidean/ Minskowski spaces} $\mathbf{g},$
see local formulas (\ref{metr}). On such a manifold, we can consider any $%
n+m $ splitting defined by a prescribed nonlinear connection (N--connection)
structure $\mathbf{N}$ with local coefficients $N_{i}^{a}(x,y)$ (\ref%
{coeffnc}), \ for indices of type $i,j,k,...=1,2,...n$ and $%
a,b,c,...=n+1,n+2,...,n+m.$\footnote{%
3+1 and 2+2 splitting, with different types of variables, are considered in
modern classical and quantum gravity; for instance, see discussion in Ref. %
\cite{vlqgfq}. Such a splitting establishes a local fibered structure
(holonomic or nonholonomic) and allows us to introduce Hamilton--momentum
like variables, or almost symplectic ones, which is convenient for
definition of conservation laws and quantization schemes of certain classes
of spacetimes.}

Haven defined a frame and respective co--frame (dual) structures on $\mathbf{%
V,}$ denoted correspondingly $e_{\alpha ^{\prime }}=(e_{i^{\prime
}},e_{a^{\prime }})$ and $e_{\ }^{\beta ^{\prime }}=(e^{i^{\prime
}},e^{a^{\prime }}),$ we can consider (in general, nonholonomic) \ frame
transforms
\begin{equation}
e_{\alpha }=A_{\alpha }^{\ \alpha ^{\prime }}(x,y)e_{\alpha ^{\prime }}%
\mbox{\ and\ }e_{\ }^{\beta }=A_{\ \beta ^{\prime }}^{\beta }(x,y)e^{\beta
^{\prime }}.  \label{nhft}
\end{equation}%
There are two important particular cases: 1) we can work with coordinate
bases (following our conventions, with underlined indices), $e_{\alpha
^{\prime }}\Leftrightarrow e_{\underline{\alpha }}=\partial _{\underline{%
\alpha }}=\partial /\partial u^{\underline{\alpha }}$ and $e^{\beta ^{\prime
}}\Leftrightarrow e^{\underline{\beta }}=du^{\underline{\beta }},$ and their
transform to arbitrary vielbeins $e_{\alpha }$ and $e_{\ }^{\beta };$ 2) it
is possible to introduce the so--called ''N--adapted'' bases $\mathbf{e}%
_{\alpha }$ (\ref{dder}) and cobases $\mathbf{e}^{\beta }$ (\ref{ddif})
using frame transforms (\ref{naft}) defined linearly by N--connection
coefficients $N_{i}^{a}.$

Under nonholonomic frame transforms $e_{\alpha ^{\prime }}\rightarrow
e_{\alpha },$ the metric coefficients of any metric structure $\mathbf{g}$
on $\mathbf{V}$ are computed following formulas
\begin{equation}
g_{\alpha \beta }(x,y)=A_{\alpha }^{~\ \alpha ^{\prime }}(x,y)~A_{\beta }^{\
\beta ^{\prime }}(x,y)g_{\alpha ^{\prime }\beta ^{\prime }}(x,y).
\label{auxm}
\end{equation}%
For a fixed frame structure $e_{\alpha ^{\prime }}$ on $\mathbf{V,}$ the
formula (\ref{auxm}) defines 'nonholonomic deformations' of metrics, $\
^{^{\prime }}\mathbf{g\rightarrow g}.$ In a particular case, we can
parametrize $A_{\alpha }^{~\ \alpha ^{\prime }}=\omega (x,y)\delta _{\alpha
}^{~\ \alpha ^{\prime }}$ and generate conformal transforms of metrics, $%
g_{\alpha \beta }=\omega ^{2}\ ^{^{\prime }}g_{\alpha \beta }.$\footnote{%
We shall use also equivalent denotations of type $\ ^{^{\prime }}\mathbf{g}%
_{\alpha \beta },$ instead of $\mathbf{g}_{\alpha ^{\prime }\beta ^{\prime
}},$ or $\ \underline{\mathbf{g}}_{\alpha \beta },$ instead of $\ \mathbf{g}%
_{\underline{\alpha }\underline{\beta }},$ with the coefficients of the same
metric computed with respect to different 'primed', or 'underlined'',
systems of reference (i.e. for frame transforms). In another turn, for a
fixed frame structure both for the ''prime'' and ''target'' geometric
configurations, $\ ^{^{\prime }}\mathbf{g=\{\mathbf{g}_{\alpha ^{\prime
}\beta ^{\prime }}\},}\ \mathbf{g=\{\mathbf{g}_{\alpha \beta }\}}$ and $\
\underline{\mathbf{g}}=\{\mathbf{g}_{\underline{\alpha }\underline{\beta }%
}\} $ mean, in general, different metric structures related via certain
nonholonomomic deformations of metrics (\ref{auxm}).}

\begin{definition}
A subclass of frame transforms (\ref{nhft}) [or deformations of metrics (\ref%
{auxm}), for fixed ''prime'' and ''target'' frame structures] is called
N--adapted if such nonholonomic transformations [deformations] preserve the $%
n+m$ splitting defined by a N--connection structure $\mathbf{N}%
=\{N_{i}^{a}\}.$
\end{definition}

For instance, N--adapted deformations of metrics are parametrized by such $%
A_{\alpha }^{~\ \alpha ^{\prime }}$ in (\ref{auxm}) when $\mathbf{g}_{\alpha
^{\prime }\beta ^{\prime }}=[g_{i^{\prime }j^{\prime }},g_{a^{\prime
}b^{\prime }}]\rightarrow \mathbf{g}_{\alpha \beta }=[g_{ij},g_{ab}].$ In an
alternative way, we can fix a metric structure $\mathbf{g}$ on $\mathbf{V}$
but consider N--adapted frame transforms (\ref{nhft}) preserving a locally
prescribed frame structure.

\begin{lemma}
\label{lemma1}For any fixed metric, \ $\mathbf{g},$ and N--connection, $%
\mathbf{N},$ structures, there are N--adapted frame transforms
\begin{eqnarray}
\mathbf{g} &=&g_{ij}(x,y)\ e^{i}\otimes e^{j}+h_{ab}(x,y)\ \mathbf{e}%
^{a}\otimes \mathbf{e}^{b},  \label{slme} \\
&=&g_{i^{\prime }j^{\prime }}(x,y)\ e^{i^{\prime }}\otimes e^{j^{\prime
}}+h_{a^{\prime }b^{\prime }}(x,y)\ \mathbf{e}^{a^{\prime }}\otimes \mathbf{e%
}^{b^{\prime }},  \notag
\end{eqnarray}%
where $\mathbf{e}^{a}$ and $\mathbf{e}^{a^{\prime }}$ are elongated
following formulas (\ref{ddif}), respectively by $N_{\ j}^{a}$ and
\begin{equation}
N_{\ j^{\prime }}^{a^{\prime }}=A_{a}^{~\ a^{\prime }}(x,y)A_{\ j^{\prime
}}^{j}(x,y)N_{\ j}^{a}(x,y),  \label{ncontri}
\end{equation}%
or, inversely,
\begin{equation}
N_{\ j}^{a}=A_{a^{\prime }}^{~\ a}(x,y)A_{\ j}^{j^{\prime }}(x,y)N_{\
j^{\prime }}^{a^{\prime }}(x,y)  \label{ncontr}
\end{equation}%
with prescribed $N_{\ j^{\prime }}^{a^{\prime }}.$
\end{lemma}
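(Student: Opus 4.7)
The plan is to argue directly from the definitions of the N--adapted (co)frames in (\ref{dder}) and (\ref{ddif}), which depend linearly on $N_{i}^{a}$, and then to use the definition of ``N--adapted frame transform'' to impose block structure on $A_{\alpha}^{\ \alpha'}$. The block form (\ref{slme}) should follow almost immediately from re--expressing a generic metric in the N--adapted coframe, while the transformation laws (\ref{ncontri}) and (\ref{ncontr}) should emerge by matching $1$--form coefficients in two adapted coframes.

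First I would write $\mathbf{g}=\underline{g}_{\underline{\alpha}\underline{\beta}}\,du^{\underline{\alpha}}\otimes du^{\underline{\beta}}$ in a coordinate coframe, then substitute $dy^{\underline{a}}=\mathbf{e}^{a}-N_{i}^{a}\,dx^{i}$ from (\ref{ddif}). Grouping the $e^{i}\otimes e^{j}$, $e^{i}\otimes\mathbf{e}^{a}$ and $\mathbf{e}^{a}\otimes\mathbf{e}^{b}$ terms, the vanishing of the mixed block required by (\ref{slme}) is equivalent to an algebraic identity between $N_{i}^{a}$ and the coordinate components of $\mathbf{g}$; conversely, once $N_{i}^{a}$ is fixed by $\mathbf{N}$, the h-- and v--coefficients $g_{ij}$ and $h_{ab}$ are simply defined as the restrictions of $\mathbf{g}$ to the horizontal and vertical distributions of $\mathbf{N}$, and the off--diagonal term is absent by construction in the adapted coframe. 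This yields the first line of (\ref{slme}); repeating the argument in the primed frame yields the second.

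Next, for the transformation rule, I would consider an N--adapted frame transform $A_{\alpha}^{\ \alpha'}$ as in (\ref{nhft}). By the preceding \textbf{Definition}, N--adaptation (preservation of the $n+m$ splitting) forces $A$ to be block--diagonal with respect to the h/v decomposition, so only the blocks $A_{i}^{\ i'}$ and $A_{a}^{\ a'}$ are nonzero, with inverses $A_{\ i}^{i'}$ and $A_{\ a}^{a'}$. Writing the adapted cobasis in both frames, $\mathbf{e}^{a}=dy^{a}+N_{\ i}^{a}\,dx^{i}$ and $\mathbf{e}^{a'}=dy^{a'}+N_{\ j'}^{a'}\,dx^{j'}$, and inserting the induced transforms $dy^{a'}=A_{\ a}^{a'}dy^{a}$ and $dx^{j'}=A_{\ j}^{j'}dx^{j}$, I would match coefficients of the independent $1$--forms $dx^{j}$ on both sides to read off $N_{\ j'}^{a'}=A_{a}^{\ a'}A_{\ j'}^{j}\,N_{\ j}^{a}$, which is (\ref{ncontri}); relation (\ref{ncontr}) then follows by composing with the inverse blocks $A_{a'}^{\ a}$ and $A_{\ j}^{j'}$.

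The main obstacle I anticipate is purely notational: keeping straight the three index regimes (coordinate underlined, primed N--adapted, unprimed N--adapted) and the placement of the horizontal and vertical blocks of $A$, so that the N--adaptation condition is applied at the correct step and does not accidentally produce spurious off--diagonal terms. Once the block structure of $A$ is made explicit, no integrability or PDE argument is needed, and both (\ref{slme}) and the pair (\ref{ncontri})--(\ref{ncontr}) reduce to coefficient matching of the independent $1$--forms $dx^{i}$ and $dy^{a}$.
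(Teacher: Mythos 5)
Your proposal is correct and follows essentially the same route as the paper: represent $\mathbf{g}$ as a d--metric in the N--adapted coframe (\ref{ddif}), restrict the transform $A_{\alpha}^{\ \alpha'}$ to be block--diagonal with respect to the $h$/$v$ splitting, and obtain (\ref{ncontri})--(\ref{ncontr}) by matching coefficients of the independent $1$--forms in the two elongated cobases. The only point where the paper is slightly more explicit is the constraint that the horizontal block be holonomic, $[e^{i'},e^{j'}]=0$, so that $e^{i'}=A_{i}^{\ i'}e^{i}$ is exact and new coordinates $x^{i'}(x,y)$, $y^{a'}(x,y)$ exist — a condition you implicitly assume when writing $dx^{j'}=A_{\ j}^{j'}dx^{j}$.
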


\begin{proof}
Any metric $\mathbf{g}$ (\ref{metr}) on a (pseudo) Riemannian manifold $%
\mathbf{V}$ can be represented as a d--metric $\mathbf{g}_{\alpha \beta
}=[g_{ij},g_{ab}]$ (\ref{m1}) if we prescribe a N--connection structure $%
\mathbf{N}=\{N_{i}^{a}\}$ (\ref{coeffnc}). We preserve the $n+m$ splitting
for any frame transform of type (\ref{nhft}) when
\begin{equation*}
g_{i^{\prime }j^{\prime }}=A_{\ i^{\prime }}^{i}A_{\ j^{\prime
}}^{j}g_{ij},\ h_{a^{\prime }b^{\prime }}=A_{\ a^{\prime }}^{a}A_{\
b^{\prime }}^{b}h_{ab},
\end{equation*}%
for $A_{i}^{~\ i^{\prime }}$ constrained to get holonomic $e^{i^{\prime
}}=A_{i}^{~\ i^{\prime }}e^{i},$ i.e. $[e^{i^{\prime }}e^{i^{\prime }}]=0$
and $\mathbf{e}^{a^{\prime }}=dy^{a^{\prime }}+N_{\ j^{\prime }}^{a^{\prime
}}dx^{j^{\prime }},$ for certain $x^{i^{\prime }}=x^{i^{\prime
}}(x^{i},y^{a})$ and $y^{a^{\prime }}=y^{a^{\prime }}(x^{i},y^{a}),$ with $%
N_{\ j^{\prime }}^{a^{\prime }}$ computed following formulas (\ref{ncontri}%
). The constructions can be equivalently inverted, when $g_{\alpha \beta }$
and $N_{i}^{a}$ are computed from $g_{\alpha ^{\prime }\beta ^{\prime }}$
and $N_{i^{\prime }}^{a^{\prime }},$ if both the metric and N--connection
splitting structures are fixed on $\mathbf{V}.$ $\square $
\end{proof}

\vskip4pt In this paper, we shall work with a fixed metric structure $%
\mathbf{g}$ on a (pseudo) Riemannian manifold $\mathbf{V}$ but consider such
$\mathbf{N}=\{N_{i}^{a}\}$ and N--adapted frame transforms/ deformations,
when certain metric compatible linear connections and their curvatures (also
uniquely defined by $\mathbf{g)}$ will satisfy the conditions necessary for
existence of solitonic hierarchies. This results in N--anholonomic
deformations of the geometric objects but the constructions can be
re--defined equivalently for the Levi--Civita connection.

\subsection{A d--connection with constant N--adapted coefficients}

From the class of metric compatible distinguished connections
(d--connecti\-ons) $\ ^{\mathbf{g}}\mathbf{D}$ (\ref{mcomp}),\footnote{%
see definitions and main formulas in Appendix \ref{assectdcon}} being
uniquely defined by a metric structure $\mathbf{g},$ we chose such a $n+m$
splitting with nontrivial $N_{i}^{a}(x,y)$ when with respect to a N--adapted
frame the canonical d--connection (\ref{candcon}) has constant coefficients.

\begin{proposition}
\label{prop21} Any (pseudo) Riemannian metric $\mathbf{g}$ on $\mathbf{V}$
defines a set of metric compatible d--connections of type
\begin{equation}
\ _{0}\widetilde{\mathbf{\Gamma }}_{\ \alpha ^{\prime }\beta ^{\prime
}}^{\gamma ^{\prime }}=\left( \widehat{L}_{j^{\prime }k^{\prime
}}^{i^{\prime }}=0,\widehat{L}_{b^{\prime }k^{\prime }}^{a^{\prime }}=\ _{0}%
\widehat{L}_{b^{\prime }k^{\prime }}^{a^{\prime }}=const,\widehat{C}%
_{j^{\prime }c^{\prime }}^{i^{\prime }}=0,\widehat{C}_{b^{\prime }c^{\prime
}}^{a^{\prime }}=0\right)  \label{ccandcon}
\end{equation}%
with respect to N--adapted frames (\ref{dder}) and (\ref{ddif}) for\ any $%
\mathbf{N}=\{N_{i^{\prime }}^{a^{\prime }}(x,y)\}$ being a nontrivial
solution of the system of equations%
\begin{equation}
2\ _{0}\widehat{L}_{b^{\prime }k^{\prime }}^{a^{\prime }}=\frac{\partial
N_{k^{\prime }}^{a^{\prime }}}{\partial y^{b^{\prime }}}-\ _{0}h^{a^{\prime
}c^{\prime }}\ _{0}h_{d^{\prime }b^{\prime }}\frac{\partial N_{k^{\prime
}}^{d^{\prime }}}{\partial y^{c^{\prime }}}  \label{auxf1}
\end{equation}%
for any nondegenerate constant--coefficients symmetric matrix $%
_{0}h_{d^{\prime }b^{\prime }}$ and its inverse $\ _{0}h^{a^{\prime
}c^{\prime }}.$
\end{proposition}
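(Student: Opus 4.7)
The plan is to combine Lemma \ref{lemma1} with the explicit coordinate expression of the canonical d--connection (\ref{candcon}) recalled in the Appendix, and then read off (\ref{auxf1}) as the condition forcing constancy of the only surviving block of coefficients.

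First, I would apply Lemma \ref{lemma1} to pass, by an N--adapted frame transform, to a frame in which the h-- and v--blocks of the d--metric (\ref{slme}) take constant values, i.e.\ $g_{i^{\prime}j^{\prime}}=\,_{0}g_{i^{\prime}j^{\prime}}=const$ and $h_{a^{\prime}b^{\prime}}=\,_{0}h_{a^{\prime}b^{\prime}}=const$, while the N--connection is relabelled by the transformation rule (\ref{ncontri}). The point of this step is that the N--connection coefficients $N^{a^{\prime}}_{k^{\prime}}(x,y)$ remain free: the rigidity used up by Lemma \ref{lemma1} fixes only the d--metric blocks, leaving the splitting data $\mathbf{N}$ as the parameter we will later constrain by (\ref{auxf1}).

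Next, I would substitute these constant blocks into the four pieces of the canonical d--connection $\widehat{\mathbf{\Gamma}}=(\widehat{L}_{j^{\prime}k^{\prime}}^{i^{\prime}},\widehat{L}_{b^{\prime}k^{\prime}}^{a^{\prime}},\widehat{C}_{j^{\prime}c^{\prime}}^{i^{\prime}},\widehat{C}_{b^{\prime}c^{\prime}}^{a^{\prime}})$ from (\ref{candcon}). Three of them are built purely from N--adapted derivatives of $g_{i^{\prime}j^{\prime}}$ and $h_{a^{\prime}b^{\prime}}$, so they vanish identically in the chosen frame, giving at once $\widehat{L}_{j^{\prime}k^{\prime}}^{i^{\prime}}=0$, $\widehat{C}_{j^{\prime}c^{\prime}}^{i^{\prime}}=0$ and $\widehat{C}_{b^{\prime}c^{\prime}}^{a^{\prime}}=0$. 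The only non--trivial block is $\widehat{L}_{b^{\prime}k^{\prime}}^{a^{\prime}}$, whose defining formula survives in its $N$--dependent piece and, upon imposing constancy of $h_{a^{\prime}b^{\prime}}$, collapses exactly to the right--hand side of (\ref{auxf1}). Demanding $\widehat{L}_{b^{\prime}k^{\prime}}^{a^{\prime}}=\,_{0}\widehat{L}_{b^{\prime}k^{\prime}}^{a^{\prime}}=const$ is therefore tantamount to (\ref{auxf1}) viewed as a first--order linear PDE system for the coefficients $N_{k^{\prime}}^{a^{\prime}}(x,y)$.

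Finally, to establish that the proposition is not vacuous I would exhibit nontrivial $\mathbf{N}$ solving (\ref{auxf1}). A polynomial ansatz $N_{k^{\prime}}^{a^{\prime}}(x,y)=M_{k^{\prime}b^{\prime}}^{a^{\prime}}y^{b^{\prime}}+P_{k^{\prime}}^{a^{\prime}}(x)$, with $M$ constant and $P$ arbitrary, reduces (\ref{auxf1}) to a linear algebraic system $2\,_{0}\widehat{L}_{b^{\prime}k^{\prime}}^{a^{\prime}}=M_{k^{\prime}b^{\prime}}^{a^{\prime}}-\,_{0}h^{a^{\prime}c^{\prime}}\,_{0}h_{d^{\prime}b^{\prime}}M_{k^{\prime}c^{\prime}}^{d^{\prime}}$ whose solvability is immediate once one splits $M_{k^{\prime}b^{\prime}}^{a^{\prime}}$ into its $\,_{0}h$--symmetric and antisymmetric parts in $(a^{\prime},b^{\prime})$ (only the antisymmetric part contributes to the right--hand side, and can always be chosen to match any prescribed $\,_{0}\widehat{L}$). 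The main technical obstacle I anticipate is step two: correctly tracking how the elongated partial derivatives $\mathbf{e}_{k^{\prime}}$ acting on the constant blocks produce only terms in $\partial N/\partial y$, so that the coefficient $1/2$ in the canonical d--connection combines with $\,_{0}h^{a^{\prime}c^{\prime}}$ and $\,_{0}h_{d^{\prime}b^{\prime}}$ to reproduce precisely the factor $2$ and the index pattern displayed in (\ref{auxf1}); once this bookkeeping is done, the remainder of the argument is formal.
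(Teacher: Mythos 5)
Your proposal follows essentially the same route as the paper's own proof: invoke Lemma \ref{lemma1} to fix constant d--metric blocks $\ _{0}g_{i^{\prime}j^{\prime}},\ _{0}h_{a^{\prime}b^{\prime}}$ in an N--adapted frame, substitute into (\ref{candcon}) so that three blocks vanish and the surviving $\widehat{L}_{b^{\prime}k^{\prime}}^{a^{\prime}}$ block reduces (after the $\frac{1}{2}\delta$--cancellation) to (\ref{auxf1}). Your closing linear--in--$y$ ansatz showing (\ref{auxf1}) admits nontrivial solutions is a small useful addition the paper only asserts; note only that the prescribed $\ _{0}\widehat{L}_{b^{\prime}k^{\prime}}^{a^{\prime}}$ must be $\ _{0}h$--antisymmetric in its lowered indices, which is automatic for the metric--compatible connections in question.
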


\begin{proof}
Using Lemma \ref{lemma1}, we express any metric $\mathbf{g}$ (\ref{metr}) as
a d--metric $\mathbf{g}_{\alpha ^{\prime }\beta ^{\prime }}=[g_{i^{\prime
}j^{\prime }},h_{a^{\prime }b^{\prime }}]$ (\ref{m1}), when certain constant
coefficients $g_{i^{\prime }j^{\prime }}=\ _{0}g_{i^{\prime }j^{\prime }}$
and $h_{a^{\prime }b^{\prime }}=\ _{0}h_{a^{\prime }b^{\prime }}$ are stated
with respect to a N--adapted coframe $\mathbf{e}^{\alpha ^{\prime
}}=[e^{i^{\prime }},\mathbf{e}^{a^{\prime }}]$ (the values $N_{k^{\prime
}}^{d^{\prime }}$ elongating $\mathbf{e}^{a^{\prime }}$ are computed
similarly to (\ref{ncontri})). Introducing such constant d--metric
coefficients into formulas (\ref{candcon}) (with 'primed' indices), we get
the canonical d--connection (\ref{ccandcon}) for any prescribed constant
values $\ _{0}\widehat{L}_{b^{\prime }k^{\prime }}^{a^{\prime }}.$ The
formula (\ref{auxf1}) follows from the formula for computing the
coefficients $\widehat{L}_{b^{\prime }k^{\prime }}^{a^{\prime }}$ in (\ref%
{candcon}). We may consider any $n+m$ splitting with $N_{k^{\prime
}}^{d^{\prime }}$ being a nontrivial solution of (\ref{auxf1}), which states
an explicit class of nonholonomic constraints on prescribed local fibered
structures. Such structures, for any nonholonomic transform of type (\ref%
{ncontr}), are very general ones with coefficients $N_{k}^{d}(x,y).$ We
conclude that for any metric structure $\mathbf{g}$ there is such a
nonholonomic local fibred structure, when the N--adapted coefficients of the
canonical d--connection are constant ones. Having prescribed the constant
values $\ _{0}g_{i^{\prime }j^{\prime }},\ _{0}h_{a^{\prime }b^{\prime }}$
and $\ _{0}\widehat{L}_{b^{\prime }k^{\prime }}^{a^{\prime }},$ a unique
solution of (\ref{auxf1}) and corresponding N--connection structure are
defined by the coefficients of $\mathbf{g}$ up to some N--adapted frame and
coordinate transforms. $\square $
\end{proof}

\vskip4pt It should be noted that the coefficients $\ _{\shortmid }\Gamma
_{\ \alpha ^{\prime }\beta ^{\prime }}^{\gamma ^{\prime }}$ of the
corresponding to $\mathbf{g}$ Levi--Civita connection $\ ^{\mathbf{g}}\nabla
$ are not constant with respect to N--adapted frames. They are computed
following formulas (\ref{cdeft}) and (\ref{cdeftc}).

\begin{theorem}
\label{theor01}The curvature d--tensor of a d--connection $\ _{0}\widetilde{%
\mathbf{\Gamma }}_{\ \alpha ^{\prime }\beta ^{\prime }}^{\gamma ^{\prime }}$
(\ref{ccandcon}) defined by a metric $\mathbf{g}$ has constant coefficients
with respect to N--adapted frames $\mathbf{e}_{\alpha ^{\prime }}=[\mathbf{e}%
_{i^{\prime }},e_{a^{\prime }}]$ and $\mathbf{e}^{\alpha ^{\prime
}}=[e^{i^{\prime }},\mathbf{e}^{a^{\prime }}]$ with $N_{k^{\prime
}}^{d^{\prime }}$ subjected to conditions (\ref{auxf1}).
\end{theorem}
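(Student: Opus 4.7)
The plan is a direct N-adapted coefficient computation. From the Appendix formulas, the curvature d-tensor of a d-connection with coefficients $(\widehat{L}_{j^{\prime}k^{\prime}}^{i^{\prime}},\widehat{L}_{b^{\prime}k^{\prime}}^{a^{\prime}},\widehat{C}_{j^{\prime}c^{\prime}}^{i^{\prime}},\widehat{C}_{b^{\prime}c^{\prime}}^{a^{\prime}})$ decomposes into six irreducible N-adapted blocks: the two ``horizontal'' blocks $\widehat{R}^{i^{\prime}}_{\ h^{\prime}j^{\prime}k^{\prime}}$ and $\widehat{R}^{a^{\prime}}_{\ b^{\prime}j^{\prime}k^{\prime}}$, the two ``mixed'' blocks $\widehat{P}^{i^{\prime}}_{\ h^{\prime}j^{\prime}a^{\prime}}$ and $\widehat{P}^{c^{\prime}}_{\ b^{\prime}j^{\prime}a^{\prime}}$, and the two ``vertical'' blocks $\widehat{S}^{i^{\prime}}_{\ h^{\prime}b^{\prime}c^{\prime}}$ and $\widehat{S}^{a^{\prime}}_{\ b^{\prime}c^{\prime}d^{\prime}}$. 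Each block is schematically of the form $\mathbf{e}\widehat{L}+\widehat{L}\widehat{L}+\widehat{C}\,\Omega$ or $e\widehat{C}+\widehat{L}\widehat{C}+\widehat{C}\widehat{C}$, where $\mathbf{e}_{k^{\prime}}$ and $e_{a^{\prime}}=\partial/\partial y^{a^{\prime}}$ are the N-adapted frame derivatives and $\Omega^{a^{\prime}}_{k^{\prime}j^{\prime}}=\mathbf{e}_{k^{\prime}}N_{j^{\prime}}^{a^{\prime}}-\mathbf{e}_{j^{\prime}}N_{k^{\prime}}^{a^{\prime}}$ is the N-connection curvature.

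Substituting the values from (\ref{ccandcon})---namely $\widehat{L}_{j^{\prime}k^{\prime}}^{i^{\prime}}=0$, $\widehat{L}_{b^{\prime}k^{\prime}}^{a^{\prime}}=\ _{0}\widehat{L}_{b^{\prime}k^{\prime}}^{a^{\prime}}=const$, $\widehat{C}_{j^{\prime}c^{\prime}}^{i^{\prime}}=\widehat{C}_{b^{\prime}c^{\prime}}^{a^{\prime}}=0$---the computation becomes mechanical. Every N-adapted frame derivative $\mathbf{e}_{k^{\prime}}$, $e_{a^{\prime}}$ acts on a constant coefficient and thus vanishes. Every block carrying an upper $i^{\prime}$ index contains only factors of $\widehat{L}^{i^{\prime}}_{\cdot\cdot}$ or $\widehat{C}^{i^{\prime}}_{\cdot\cdot}$, so $\widehat{R}^{i^{\prime}}_{\ h^{\prime}j^{\prime}k^{\prime}}$, $\widehat{P}^{i^{\prime}}_{\ h^{\prime}j^{\prime}a^{\prime}}$, $\widehat{S}^{i^{\prime}}_{\ h^{\prime}b^{\prime}c^{\prime}}$ all vanish identically. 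The purely vertical block $\widehat{S}^{a^{\prime}}_{\ b^{\prime}c^{\prime}d^{\prime}}$ is assembled exclusively from $\widehat{C}^{a^{\prime}}_{\cdot\cdot}$'s and so vanishes. The mixed block $\widehat{P}^{c^{\prime}}_{\ b^{\prime}j^{\prime}a^{\prime}}$ contains an $e_{a^{\prime}}\widehat{L}$-term (zero because $\ _{0}\widehat{L}$ is constant) and $\widehat{C}$-terms (zero), so it also vanishes. The only surviving block is
\begin{equation*}
\widehat{R}^{a^{\prime}}_{\ b^{\prime}j^{\prime}k^{\prime}}=\ _{0}\widehat{L}_{b^{\prime}j^{\prime}}^{c^{\prime}}\,\ _{0}\widehat{L}_{c^{\prime}k^{\prime}}^{a^{\prime}}-\ _{0}\widehat{L}_{b^{\prime}k^{\prime}}^{c^{\prime}}\,\ _{0}\widehat{L}_{c^{\prime}j^{\prime}}^{a^{\prime}},
\end{equation*}
which is manifestly a constant bilinear form in the $\ _{0}\widehat{L}$'s.

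The only delicate point---and the sole candidate for an obstacle---is the N-anholonomy contribution $\widehat{C}\,\Omega$ that in principle appears in several blocks. For an $N_{j^{\prime}}^{a^{\prime}}$ chosen to solve (\ref{auxf1}) the object $\Omega^{a^{\prime}}_{k^{\prime}j^{\prime}}$ is generically \emph{not} constant in $(x,y)$, so one has to verify that it cannot propagate into the curvature coefficients. This is precisely the structural point to check against the Appendix formulas: $\Omega$ enters the curvature d-tensor only through contractions with a $\widehat{C}$-coefficient. Since every $\widehat{C}$ in (\ref{ccandcon}) is zero, the corresponding products $\widehat{C}\,\Omega$ vanish and the non-constant $(x,y)$-dependence carried by $\Omega$ is annihilated. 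With this check in place, each of the six blocks has been shown to be either identically zero or equal to a constant quadratic form in the $\ _{0}\widehat{L}_{b^{\prime}k^{\prime}}^{a^{\prime}}$, which is exactly the constancy statement of Theorem \ref{theor01}. $\square$
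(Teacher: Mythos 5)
Your proposal is correct and follows essentially the same route as the paper: substituting the constant coefficients of (\ref{ccandcon}) into the N--adapted curvature formulas (\ref{dcurv}) block by block, which kills every frame--derivative and every $\widehat{C}$--term and leaves only the constant quadratic block $\ _{0}\widetilde{R}_{~b^{\prime }j^{\prime }k^{\prime }}^{a^{\prime }}=\ _{0}\widehat{L}_{\ b^{\prime }j^{\prime }}^{c^{\prime }}\ _{0}\widehat{L}_{\ c^{\prime }k^{\prime }}^{a^{\prime }}-\ _{0}\widehat{L}_{\ b^{\prime }k^{\prime }}^{c^{\prime }}\ _{0}\widehat{L}_{\ c^{\prime }j^{\prime }}^{a^{\prime }}$. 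Your explicit verification that the non--constant anholonomy terms $\Omega _{\ k^{\prime }j^{\prime }}^{a^{\prime }}$ enter the curvature only through contractions with the vanishing $\widehat{C}$--coefficients is a useful detail that the paper's proof leaves implicit.
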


\begin{proof}
It follows from Lemma \ref{lemma1} and Proposition \ref{prop21}. Really,
introducing constant coefficients $\ _{0}\widetilde{\mathbf{\Gamma }}_{\
\alpha ^{\prime }\beta ^{\prime }}^{\gamma ^{\prime }}$ (\ref{ccandcon})
into formulas (\ref{dcurv}), we get
\begin{eqnarray*}
\ _{0}\widetilde{\mathbf{R}}_{\ \beta ^{\prime }\gamma ^{\prime }\delta
^{\prime }}^{\alpha ^{\prime }} &=&(\ _{0}\widetilde{R}_{~h^{\prime
}j^{\prime }k^{\prime }}^{i^{\prime }}=0,\ _{0}\widetilde{R}_{~b^{\prime
}j^{\prime }k^{\prime }}^{a^{\prime }}=\ _{0}\widehat{L}_{\ b^{\prime
}j^{\prime }}^{c^{\prime }}\ _{0}\widehat{L}_{\ c^{\prime }k^{\prime
}}^{a^{\prime }}-\ _{0}\widehat{L}_{\ b^{\prime }k^{\prime }}^{c^{\prime }}\
_{0}\widehat{L}_{\ c^{\prime }j^{\prime }}^{a^{\prime }}= \\
&&cons,\ _{0}\widetilde{P}_{~h^{\prime }j^{\prime }a^{\prime }}^{i^{\prime
}}=0,\ _{0}\widetilde{P}_{~b^{\prime }j^{\prime }a^{\prime }}^{c^{\prime
}}=0,\ _{0}\widetilde{S}_{~j^{\prime }b^{\prime }c^{\prime }}^{i^{\prime
}}=0,\ _{0}\widetilde{S}_{~b^{\prime }d^{\prime }c^{\prime }}^{a^{\prime
}}=0).
\end{eqnarray*}%
Of course, in general, with respect to local coordinate (or other
N--adapted) frames, the curvature d--tensor $\widehat{\mathbf{R}}_{\ \beta
\gamma \delta }^{\alpha }$ does not have constant coefficients. Using
deformation relation (\ref{cdeft}), we can compute the corresponding Ricci
tensor $\ _{\shortmid }R_{\ \beta \gamma \delta }^{\alpha }$ for the
Levi--Civita connection $\ ^{\mathbf{g}}\nabla ,$ which is a general one
with 'non-constant' coefficients with respect to any local frames. $\square $
\end{proof}

\vskip4pt

One holds:

\begin{corollary}
\label{corol01}A d--connection $\ _{0}\widetilde{\mathbf{\Gamma }}_{\ \alpha
^{\prime }\beta ^{\prime }}^{\gamma ^{\prime }}$ (\ref{ccandcon}) has
constant scalar curvature.
\end{corollary}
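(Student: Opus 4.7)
The plan is to reduce the corollary directly to Theorem \ref{theor01} together with the N-adapted construction in Proposition \ref{prop21}. First I would recall that, with respect to the N-adapted coframe $\mathbf{e}^{\alpha'}=[e^{i'},\mathbf{e}^{a'}]$ of Lemma \ref{lemma1}, the d-metric coefficients $g_{i'j'}=\,_0g_{i'j'}$ and $h_{a'b'}=\,_0h_{a'b'}$ are \emph{constant} by the very choice made in the proof of Proposition \ref{prop21}. Consequently, the inverse d-metric components $\,_0g^{i'j'}$ and $\,_0h^{a'b'}$ are constants as well.

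Next I would contract the curvature d-tensor $\,_0\widetilde{\mathbf{R}}_{\ \beta'\gamma'\delta'}^{\alpha'}$ displayed in Theorem \ref{theor01} to form the d-Ricci tensor in its h- and v-decomposed pieces, and then contract again with the inverse d-metric to form the scalar curvature
\begin{equation*}
\,_0\widetilde{\mathbf{R}} \;=\; \,_0g^{i'j'}\,_0\widetilde{R}_{i'j'} \,+\, \,_0h^{a'b'}\,_0\widetilde{S}_{a'b'}.
\end{equation*}
By Theorem \ref{theor01}, all nontrivial curvature components are constants of the form $\,_0\widehat{L}_{\ b'j'}^{c'}\,_0\widehat{L}_{\ c'k'}^{a'}-\,_0\widehat{L}_{\ b'k'}^{c'}\,_0\widehat{L}_{\ c'j'}^{a'}$; the h-h, h-v and v-h blocks vanish and the surviving v-v block is a constant matrix. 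Hence each Ricci-type contraction is a constant, and so is the full scalar curvature after contracting with the constant inverse d-metric.

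The only thing to double-check is that the contractions invoked above coincide with the standard Ricci/scalar definitions of Appendix formulas once adapted to the d-connection; this is a formal verification, not a real obstacle, because the N-adapted decomposition of $\widetilde{\mathbf{R}}_{\ \beta'\gamma'\delta'}^{\alpha'}$ into $(\widetilde{R},\widetilde{P},\widetilde{S})$ pieces already fixes which index contractions define the d-Ricci and d-scalar quantities, and constancy of the ingredients passes trivially through these linear operations.

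The main (and essentially the only) conceptual point is that constancy of the scalar curvature is coordinate/frame dependent: we only claim it with respect to the specific N-adapted frames constructed in Proposition \ref{prop21}. I would therefore conclude the argument with the remark that, under a nonholonomic transform of type (\ref{ncontr}) back to an arbitrary frame, or upon passing to the Levi-Civita connection $\,^{\mathbf{g}}\nabla$ via the distortion (\ref{mcomp}), the corresponding scalar object will in general cease to be constant, exactly as already noted at the end of the proof of Theorem \ref{theor01}. No serious obstacle is expected; the corollary is essentially a one-line consequence of Theorem \ref{theor01} plus the constancy of $\,_0g_{i'j'}$ and $\,_0h_{a'b'}$.
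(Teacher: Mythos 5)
Your proposal is correct and follows essentially the same route as the paper: contract the constant-coefficient curvature d--tensor of Theorem \ref{theor01} via the Ricci formulas (\ref{dricci}) and the scalar curvature formula (\ref{sdccurv}) against the constant inverse d--metric $\ _{0}g^{i'j'},\ _{0}h^{a'b'}$, so that $\ _{0}^{\sim}\overleftrightarrow{\mathbf{R}}=\ _{0}^{\sim}\overrightarrow{R}+\ _{0}^{\sim}\overleftarrow{S}$ is constant. Your closing remarks on frame dependence and on the non-constancy of the scalar curvature of $\ ^{\mathbf{g}}\nabla$ match the paper's own comment immediately following the corollary.
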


\begin{proof}
It follows from the conditions of Theorem \ref{theor01} and formulas (\ref%
{dricci}) and (\ref{sdccurv}), when
\begin{equation*}
\ _{0}^{\sim }\overleftrightarrow{\mathbf{R}}\doteqdot \ _{0}\mathbf{g}%
^{\alpha ^{\prime }\beta ^{\prime }}\ _{0}\widetilde{\mathbf{R}}_{\alpha
^{\prime }\beta ^{\prime }}=\ _{0}g^{i^{\prime }j^{\prime }}\ _{0}\widetilde{%
R}_{i^{\prime }j^{\prime }}+\ _{0}h^{a^{\prime }b^{\prime }}\ _{0}\widetilde{%
S}_{a^{\prime }b^{\prime }}=\ _{0}^{\sim }\overrightarrow{R}+\ _{0}^{\sim }%
\overleftarrow{S}=cons.
\end{equation*}%
$\square $
\end{proof}

Even the condition $\ _{0}^{\sim }\overleftrightarrow{\mathbf{R}}=cons.$
holds true for a metric $\mathbf{g,}$ in general, the scalar curvature $\
_{\nabla }R$ of $\ ^{\mathbf{g}}\nabla ,$ for the same metric, is not
constant.

\subsection{Nonholonomic Einstein spaces}

Let us consider a subclass of (pseudo) Riemannian metrics defining exact
solutions of the Einstein equations for the canonical d--connection $%
\widehat{\mathbf{\Gamma }}_{\ \alpha \beta }^{\gamma }$ (\ref{candcon}),
\begin{equation}
\widehat{\mathbf{R}}_{\alpha \beta }-\frac{1}{2}\mathbf{g}_{\alpha \beta }%
\overleftrightarrow{\mathbf{R}}=\kappa \mathbf{\Upsilon }_{\alpha \beta }.
\label{einsteq}
\end{equation}%
For simplicity, we restrict our considerations to four dimensional (4D)
vacuum metrics, $\dim \mathbf{V}=4,n=2$ and $m=2,$ or generated by sources $%
\mathbf{\Upsilon }_{\alpha \beta }$ parametrized in the form
\begin{equation}
\mathbf{\Upsilon }_{\beta }^{\alpha }=[\Upsilon _{1}^{1}=\Upsilon
_{1}(x^{k},v),\Upsilon _{2}^{2}=\Upsilon _{1}(x^{k},v),\Upsilon
_{3}^{3}=\Upsilon _{3}(x^{k}),\Upsilon _{4}^{4}=\Upsilon _{3}(x^{k})],
\label{sdiag}
\end{equation}%
where indices and coordinates are labeled: $i,j,...=1,2$ and $%
a,b,...=3,4;x^{k}=(x^{1},x^{2})$ and $y^{a}=(y^{3}=v,y^{4}).$ There are also
nonholonomic gravitational configurations when the source (\ref{sdiag})
parametrized $\Upsilon _{1}=\lambda _{1}=cons.$ and $\Upsilon _{3}=\lambda
_{3}=cons.$ Here we note that gravitational interactions parametized by
generic off--diagonal metrics model locally anisotropic configurations even
if $\lambda _{1}=\lambda _{3},$ or both values vanish (for a nonholonomic
vacuum case).

We parametrize the N--connection coefficients in the form $%
N_{i}^{3}=w_{i}(x^{k},v)$ and $N_{i}^{4}=n_{i}(x^{k},v);$ consider partial
derivatives $a^{\bullet }=\partial a/\partial x^{1},$\ $a^{\prime }=\partial
a/\partial x^{2},$\ $a^{\ast }=\partial a/\partial v$ and introduce the
'polarization' function
\begin{equation}
\varsigma \left( x^{i},v\right) =\ ^{0}\varsigma -\frac{\epsilon _{3}}{8}\
^{0}h^{2}\int \Upsilon _{1}f^{\ast }\left[ f-\ ^{0}f\right] dv,
\label{aux02}
\end{equation}%
for an arbitrary nontrivial function $f=f\left( x^{i},v\right) $ with $%
f^{\ast }\neq 0$ and given functions $\ ^{0}\varsigma =\ ^{0}\varsigma
\left( x^{i}\right) ,\ ^{0}h=\ ^{0}h(x^{i})$ and $\ ^{0}f=$ $\ ^{0}f(x^{i}),$
when $\epsilon _{\alpha }=\pm 1$ will define the spacetime signature. We
shall consider also some 'integration' functions $\ ^{1}n_{k}=\
^{1}n_{k}\left( x^{i}\right) $ and $\ ^{2}n_{k}=\ ^{2}n_{k}\left(
x^{i}\right) .$

\begin{theorem}
Any metric $\mathbf{g}$ (\ref{metr}) (equivalently, d--metric (\ref{m1})) on
a 4D N--anholonomic manifold $\mathbf{V}$ which by N--adapted frame
transforms of type (\ref{slme}), (\ref{ncontri}) and (\ref{ncontr}) can be
parametrized by an ansatz
\begin{eqnarray}
\ ^{\circ }\mathbf{g} &=&e^{\psi (x^{k})}\left[ \epsilon _{1}\ dx^{1}\otimes
dx^{1}+\epsilon _{2}\ dx^{2}\otimes dx^{2}\right] +  \label{es4s} \\
&&h_{3}\left( x^{k},v\right) \ \delta v\otimes \delta v+h_{4}\left(
x^{k},v\right) \ \delta y^{4}\otimes \delta y^{4},  \notag \\
\delta v &=&dv+w_{i}\left( x^{k},v\right) dx^{i},\ \delta
y^{4}=dy^{4}+n_{i}\left( x^{k},v\right) dx^{i}  \notag
\end{eqnarray}%
defines a class of exact solution the Einstein equations (\ref{einsteq})
with nontrivial sources (\ref{sdiag}) if
\begin{eqnarray}
&&\epsilon _{1}\psi ^{\bullet \bullet }+\epsilon _{2}\psi ^{^{\prime \prime
}}=\Upsilon _{3},h_{3}=\epsilon _{3}h_{0}^{2}\left[ f^{\ast }\right]
^{2}|\varsigma |,\ h_{4}=\epsilon _{4}\left[ f-\ ^{0}f\right] ^{2},
\label{coeff4d} \\
&&w_{i}=-\partial _{i}\varsigma /\varsigma ^{\ast },\ n_{k}=\ ^{1}n_{k}+\
^{2}n_{k}\int \varsigma \left[ f^{\ast }\right] ^{2}\left[ f-\ ^{0}f\right]
^{-3}dv,  \notag
\end{eqnarray}%
where $\varsigma $ is computed for a given $\Upsilon _{1}$ following formula
(\ref{aux02}) with $\varsigma =1$ in the vacuum case.
\end{theorem}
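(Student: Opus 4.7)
The strategy is to substitute the ansatz \ ^{\circ}\mathbf{g} of (\ref{es4s}) directly into the N--adapted formulas for the canonical d--connection (\ref{candcon}), then evaluate the curvature d--tensor (\ref{dcurv}) and its traces (\ref{dricci}), and finally write the four independent components of the Einstein system (\ref{einsteq}) with the block--diagonal source (\ref{sdiag}). The key observation making the computation tractable is that, for the given ansatz, $g_{ij}=e^{\psi}\mathrm{diag}(\epsilon_1,\epsilon_2)$ depends only on $x^k$, while $h_{ab}=\mathrm{diag}(h_3,h_4)$ and $N_i^a=(w_i,n_i)$ depend on both $x^k$ and $v=y^3$; this produces the characteristic block--triangular structure of $\widehat{\mathbf{R}}_{\alpha\beta}$ whose only nonvanishing entries are the two horizontal $\widehat R_{ij}$, the two vertical $\widehat S_{ab}$, and the mixed $\widehat P_{ia}$ components.

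The horizontal block gives the equation $\epsilon_1\psi^{\bullet\bullet}+\epsilon_2\psi''=\Upsilon_3$, which is a direct consequence of the 2D conformal form of $g_{ij}$ and the assumption $\Upsilon_1^1=\Upsilon_2^2$, $\Upsilon_3^3=\Upsilon_4^4$ in (\ref{sdiag}). For the vertical block, one computes $\widehat S_{33}$ and $\widehat S_{44}$: both reduce to a single nonlinear relation involving $h_3^\ast, h_4^\ast, h_3^{\ast\ast}, h_4^{\ast\ast}$ and $\Upsilon_1$. Introducing $f(x^i,v)$ through $h_4=\epsilon_4[f-\,^0f]^2$ linearizes this equation; integrating once in $v$ yields precisely the polarization formula (\ref{aux02}) for $\varsigma$, and then $h_3=\epsilon_3 h_0^2[f^\ast]^2|\varsigma|$ follows algebraically from the metric compatibility constraint between $h_3$ and $h_4$ in the vertical Ricci equation.

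The mixed components $\widehat P_{3i}=0$ and $\widehat P_{4i}=0$ (which must vanish because the source (\ref{sdiag}) has no mixed entries) give two first--order PDEs in $v$ for $w_i$ and $n_i$ respectively. The $w_i$--equation reduces, after using the already--obtained expressions for $h_3,h_4$, to $w_i\varsigma^\ast+\partial_i\varsigma=0$, i.e. $w_i=-\partial_i\varsigma/\varsigma^\ast$; the $n_i$--equation becomes $n_i^{\ast\ast}+\gamma\, n_i^{\ast}=0$ with integrating factor $\gamma=3h_4^\ast/(2h_4)-h_3^\ast/h_3$, whose two successive integrations in $v$ produce the two free functions $\,^1n_k(x^i),\,^2n_k(x^i)$ and the explicit quadrature appearing in (\ref{coeff4d}).

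The main obstacle in making this rigorous is bookkeeping: the canonical d--connection produces many cross--terms between the $v$--derivatives of $h_a$, $w_i$, $n_i$ and the $x^k$--derivatives of $\psi$, and one must verify that the off--diagonal pieces of $\widehat{\mathbf{R}}_{\alpha\beta}$ not covered by the source cancel identically on the ansatz (rather than giving new constraints). This cancellation is what ultimately forces the block structure of (\ref{sdiag}) to be the only compatible source profile, and it is where one must invoke carefully that $g_{ij}$ is independent of $y^a$, that $h_{ab}$ is $v$--separated from $y^4$, and that $N_i^a$ depends at most on $(x^k,v)$. Once this consistency is checked, the remaining steps are direct integrations, and specializing to $\Upsilon_1=0$ (vacuum) gives $\varsigma=1$ as stated.
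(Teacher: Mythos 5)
Your proposal is correct and follows essentially the same route as the paper, which simply declares the result to be a ``straightforward verification'' of the ansatz against the nonholonomic Einstein equations and delegates the details to Refs.~\cite{ijgmmp1,vsgg}; your block-by-block reduction (horizontal part giving the 2D equation for $\psi$, vertical part giving the $h_3$--$h_4$ relation via $\varsigma$, mixed components giving the first-order equations for $w_i$ and the quadrature for $n_i$) is exactly the computation those references carry out. The only caveat is that your sketch, like the paper's, leaves the cancellation of the remaining off-diagonal Ricci components as an asserted bookkeeping fact rather than exhibiting it, but this matches the level of detail the paper itself provides.
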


\begin{proof}
Details of the proof (consisting from a straightforward verification that
the ansatz (\ref{es4s}) with coefficients (\ref{coeff4d}) really define
integral varieties of nonholonomic Einstein \ equations) can be found in
Refs. \cite{ijgmmp1,vsgg}. $\square $
\end{proof}

\vskip4pt

It should be emphasized that this class of solutions are generated by an
arbitrary nontrivial function $f\left( x^{i},v\right) $ (with $f^{\ast }\neq
0),$ sources $\Upsilon _{1}(x^{k},v)$ and $\Upsilon _{3}\left( x^{i}\right) $
and integration functions and constants. Such values for the corresponding
signatures $\epsilon _{\alpha }=\pm 1$ have to be defined by certain
boundary conditions and physical symmetries. We can extract integral
varieties of the Einstein equations for the Levi--Civita connection $\ ^{%
\mathbf{g}}\nabla $ if we constrain additionally the metric coefficients to
satisfy the conditions
\begin{eqnarray}
h_{4}^{\ast }\phi /h_{3}h_{4} &=&\Upsilon _{1},  \label{ep2b} \\
w_{1}^{\prime }-w_{2}^{\bullet }+w_{2}w_{1}^{\ast }-w_{1}w_{2}^{\ast } &=&0,
\label{ep2b1} \\
n_{1}^{\prime }-n_{2}^{\bullet } &=&0,  \label{ep2b2}
\end{eqnarray}%
for $w_{i}=\partial _{i}\phi /\phi ^{\ast },$ where $\phi =\ln |h_{4}^{\ast
}/\sqrt{|h_{3}h_{4}|}|.$ The conditions (\ref{ep2b}) for $\Upsilon
_{1}\rightarrow 0$ are satisfied for $h_{4}^{\ast }\neq 1$ if $h_{4}^{\ast
}\phi \rightarrow 0.$

Following Theorem \ref{theor01} and Corollary \ref{corol01}, to any solution
of the Einstein equations $\ ^{\circ }\mathbf{g}$ (\ref{es4s}), constrained/
or not by conditions (\ref{ep2b})--(\ref{ep2b2}), we can associate a
d--connection $\ _{0}\widetilde{\mathbf{\Gamma }}_{\ \alpha ^{\prime }\beta
^{\prime }}^{\gamma ^{\prime }}$ (\ref{ccandcon}) with constant coefficients
for the Riemannian and Ricci tensors with respect to corresponding
N--adapted bases. Different classes of solutions are characterized by
different values of such matrix coefficients.

More general classes of such Einstein spaces with Killing symmetries and
their parametric and nonholonomic deformations considered in Ref. \cite%
{vncg,ijgmmp1,vsgg} can be also generated by (depending on certain
parameters and integration constants) curvature and Ricci d--tensor matrices
and constant scalar curvature of corresponding d--metrics. So, following the
above outlined method we can always encode the data for a known solution of
the Einstein equations and generalizations into terms of constant
coefficients d--connections and curvatures. To invert the problem and
construct from certain solitonic hierachies some solutions of Einstein
equations with non-Killing symmetries it is also possible, but it is still
not clear how to approach this task in general form.

\section{Nonholonomic Curve Flows}

We outline the geometry of curve flows adapted to a N--connection structure
on a (pseudo) Riemannian manifold $\mathbf{V.}$

\subsection{Non--stretching and N--adapted curve flows}

A non--stretching curve $\gamma (\tau ,\mathbf{l})$ on $\mathbf{V,}$ where $%
\tau $ is a real parameter and $\mathbf{l}$ is the arclength of the curve on
$\mathbf{V,}$ is defined with such evolution d--vector $\mathbf{Y}=\gamma
_{\tau }$ and tangent d--vector $\mathbf{X}=\gamma _{\mathbf{l}}$ that $%
\mathbf{g(X,X)=}1\mathbf{.}$ Such a curve $\gamma (\tau ,\mathbf{l})$ swept
out a two--dimensional surface in $T_{\gamma (\tau ,\mathbf{l})}\mathbf{V}%
\subset T\mathbf{V.}$

We shall work with N--adapted bases (\ref{dder}) and (\ref{ddif}) and
connection 1--form $\mathbf{\Gamma }_{\ \beta }^{\alpha }=\mathbf{\Gamma }%
_{\ \beta \gamma }^{\alpha }\mathbf{e}^{\gamma }$ the canonical
d--connection operator $\mathbf{D}$ (\ref{candcon}) acting in the form%
\begin{equation}
\mathbf{D}_{\mathbf{X}}\mathbf{e}_{\alpha }=(\mathbf{X\rfloor \Gamma }%
_{\alpha \ }^{\ \gamma })\mathbf{e}_{\gamma }\mbox{ and }\mathbf{D}_{\mathbf{%
Y}}\mathbf{e}_{\alpha }=(\mathbf{Y\rfloor \Gamma }_{\alpha \ }^{\ \gamma })%
\mathbf{e}_{\gamma },  \label{part01}
\end{equation}%
where ''$\mathbf{\rfloor "}$ denotes the interior product and the indices
are lowered and raised respectively by the d--metric $\mathbf{g}_{\alpha
\beta }=[g_{ij},h_{ab}]$ and its inverse $\mathbf{g}^{\alpha \beta
}=[g^{ij},h^{ab}].$ We note that $\mathbf{D}_{\mathbf{X}}=\mathbf{X}^{\alpha
}\mathbf{D}_{\alpha }$ is a covariant derivation operator along curve $%
\gamma (\tau ,\mathbf{l}).$ It is convenient to fix the N--adapted frame to
be parallel to curve $\gamma (\mathbf{l})$ adapted in the form
\begin{eqnarray}
e^{1} &\doteqdot &h\mathbf{X,}\mbox{ for }i=1,\mbox{ and }e^{\widehat{i}},%
\mbox{ where }h\mathbf{g(}h\mathbf{X,}e^{\widehat{i}}\mathbf{)=}0,
\label{curvframe} \\
\mathbf{e}^{n+1} &\doteqdot &v\mathbf{X,}\mbox{ for }a=n+1,\mbox{ and }%
\mathbf{e}^{\widehat{a}},\mbox{ where }v\mathbf{g(}v\mathbf{X,\mathbf{e}}^{%
\widehat{a}}\mathbf{)=}0,  \notag
\end{eqnarray}%
for $\widehat{i}=2,3,...n$ and $\widehat{a}=n+2,n+3,...,n+m.$ For such
frames, the covariant derivative of each ''normal'' d--vectors $\mathbf{e}^{%
\widehat{\alpha }}$ results into the d--vectors adapted to $\gamma (\tau ,%
\mathbf{l}),$
\begin{eqnarray}
\mathbf{D}_{\mathbf{X}}e^{\widehat{i}} &\mathbf{=}&\mathbf{-}\rho ^{\widehat{%
i}}\mathbf{(}u\mathbf{)\ X}\mbox{ and }\mathbf{D}_{h\mathbf{X}}h\mathbf{X}%
=\rho ^{\widehat{i}}\mathbf{(}u\mathbf{)\ \mathbf{e}}_{\widehat{i}},
\label{part02} \\
\mathbf{D}_{\mathbf{X}}\mathbf{\mathbf{e}}^{\widehat{a}} &\mathbf{=}&\mathbf{%
-}\rho ^{\widehat{a}}\mathbf{(}u\mathbf{)\ X}\mbox{ and }\mathbf{D}_{v%
\mathbf{X}}v\mathbf{X}=\rho ^{\widehat{a}}\mathbf{(}u\mathbf{)\ }e_{\widehat{%
a}},  \notag
\end{eqnarray}%
which holds for certain classes of functions $\rho ^{\widehat{i}}\mathbf{(}u%
\mathbf{)}$ and $\rho ^{\widehat{a}}\mathbf{(}u\mathbf{).}$ The formulas (%
\ref{part01}) and (\ref{part02}) are distinguished into h-- and
v--components for $\mathbf{X=}h\mathbf{X}+v\mathbf{X}$ and $\mathbf{D=(}h%
\mathbf{D},v\mathbf{D)}$ for $\mathbf{D=\{\Gamma }_{\ \alpha \beta }^{\gamma
}\},h\mathbf{D}=\{L_{jk}^{i},L_{bk}^{a}\}$ and $v\mathbf{D=\{}%
C_{jc}^{i},C_{bc}^{a}\}.$

Along $\gamma (\mathbf{l}),$ we can move differential forms in a parallel
N--adapted form. For instance, $\mathbf{\Gamma }_{\ \mathbf{X}}^{\alpha
\beta }\doteqdot \mathbf{X\rfloor \Gamma }_{\ }^{\alpha \beta }.$ An
algebraic characterization of such spaces, can be obtained if we perform a
frame transform preserving the decomposition (\ref{whitney}) to an
orthonormalized basis $\mathbf{e}_{\alpha ^{\prime }},$ when
\begin{equation}
\mathbf{e}_{\alpha }\rightarrow e_{\alpha }^{\ \alpha ^{\prime }}(u)\
\mathbf{e}_{\alpha ^{\prime }},  \label{orthbas}
\end{equation}%
called an orthonormal d--basis. In this case, the coefficients of the
d--metric (\ref{m1}) transform into the (pseudo) Euclidean one\textbf{\ }$%
\mathbf{g}_{\alpha ^{\prime }\beta ^{\prime }}=\eta _{\alpha ^{\prime }\beta
^{\prime }}.$ In distinguished form, we obtain two skew matrices%
\begin{equation*}
\mathbf{\Gamma }_{h\mathbf{X}}^{i^{\prime }j^{\prime }}\doteqdot h\mathbf{%
X\rfloor \Gamma }_{\ }^{i^{\prime }j^{\prime }}=2\ e_{h\mathbf{X}%
}^{[i^{\prime }}\ \rho ^{j^{\prime }]}\mbox{ and }\mathbf{\Gamma }_{v\mathbf{%
X}}^{a^{\prime }b^{\prime }}\doteqdot v\mathbf{X\rfloor \Gamma }_{\
}^{a^{\prime }b^{\prime }}=2\mathbf{\ e}_{v\mathbf{X}}^{[a^{\prime }}\ \rho
^{b^{\prime }]},
\end{equation*}%
where
\begin{equation*}
\ e_{h\mathbf{X}}^{i^{\prime }}\doteqdot g(h\mathbf{X,}e^{i^{\prime }})=[1,%
\underbrace{0,\ldots ,0}_{n-1}]\mbox{ and }\ e_{v\mathbf{X}}^{a^{\prime
}}\doteqdot h(v\mathbf{X,}e^{a^{\prime }})=[1,\underbrace{0,\ldots ,0}_{m-1}]
\end{equation*}%
and
\begin{equation*}
\mathbf{\Gamma }_{h\mathbf{X\,}i^{\prime }}^{\qquad j^{\prime }}=\left[
\begin{array}{cc}
0 & \rho ^{j^{\prime }} \\
-\rho _{i^{\prime }} & \mathbf{0}_{[h]}%
\end{array}%
\right] \mbox{ and }\mathbf{\Gamma }_{v\mathbf{X\,}a^{\prime }}^{\qquad
b^{\prime }}=\left[
\begin{array}{cc}
0 & \rho ^{b^{\prime }} \\
-\rho _{a^{\prime }} & \mathbf{0}_{[v]}%
\end{array}%
\right]
\end{equation*}%
with $\mathbf{0}_{[h]}$ and $\mathbf{0}_{[v]}$ being respectively $%
(n-1)\times (n-1)$ and $(m-1)\times (m-1)$ matrices.

The above presented row--matrices and skew--matrices show that locally an
N--anholonomic manifold $\mathbf{V}$ of dimension $n+m,$ with respect to
distinguished orthonormalized frames are characterized algebraically by
couples of unit vectors in $\mathbb{R}^{n}$ and $\mathbb{R}^{m}$ preserved
respectively by the $SO(n-1)$ and $SO(m-1)$ rotation subgroups of the local
N--adapted frame structure group $SO(n)\oplus SO(m).$ The connection
matrices $\mathbf{\Gamma }_{h\mathbf{X\,}i^{\prime }}^{\qquad j^{\prime }}$
and $\mathbf{\Gamma }_{v\mathbf{X\,}a^{\prime }}^{\qquad b^{\prime }}$
belong to the orthogonal complements of the corresponding Lie subalgebras
and algebras, $\mathfrak{so}(n-1)\subset \mathfrak{so}(n)$ and $\mathfrak{so}%
(m-1)\subset \mathfrak{so}(m).$

The torsion (\ref{dtors}) and curvature (\ref{curv}) tensors can be written
in orthonormalized component form with respect to (\ref{curvframe}) mapped
into a distinguished orthonormalized dual frame (\ref{orthbas}),%
\begin{equation}
\mathcal{T}^{\alpha ^{\prime }}\doteqdot \mathbf{D}_{\mathbf{X}}\mathbf{e}_{%
\mathbf{Y}}^{\alpha ^{\prime }}-\mathbf{D}_{\mathbf{Y}}\mathbf{e}_{\mathbf{X}%
}^{\alpha ^{\prime }}+\mathbf{e}_{\mathbf{Y}}^{\beta ^{\prime }}\mathbf{%
\Gamma }_{\mathbf{X}\beta ^{\prime }}^{\quad \alpha ^{\prime }}-\mathbf{e}_{%
\mathbf{X}}^{\beta ^{\prime }}\mathbf{\Gamma }_{\mathbf{Y}\beta ^{\prime
}}^{\quad \alpha ^{\prime }}  \label{mtors}
\end{equation}%
and
\begin{equation}
\mathcal{R}_{\beta ^{\prime }}^{\;\alpha ^{\prime }}(\mathbf{X,Y})=\mathbf{D}%
_{\mathbf{Y}}\mathbf{\Gamma }_{\mathbf{X}\beta ^{\prime }}^{\quad \alpha
^{\prime }}-\mathbf{D}_{\mathbf{X}}\mathbf{\Gamma }_{\mathbf{Y}\beta
^{\prime }}^{\quad \alpha ^{\prime }}+\mathbf{\Gamma }_{\mathbf{Y}\beta
^{\prime }}^{\quad \gamma ^{\prime }}\mathbf{\Gamma }_{\mathbf{X}\gamma
^{\prime }}^{\quad \alpha ^{\prime }}-\mathbf{\Gamma }_{\mathbf{X}\beta
^{\prime }}^{\quad \gamma ^{\prime }}\mathbf{\Gamma }_{\mathbf{Y}\gamma
^{\prime }}^{\quad \alpha ^{\prime }},  \label{mcurv}
\end{equation}%
where $\mathbf{e}_{\mathbf{Y}}^{\alpha ^{\prime }}\doteqdot \mathbf{g}(%
\mathbf{Y},\mathbf{e}^{\alpha ^{\prime }})$ and $\mathbf{\Gamma }_{\mathbf{Y}%
\beta ^{\prime }}^{\quad \alpha ^{\prime }}\doteqdot \mathbf{Y\rfloor \Gamma
}_{\beta ^{\prime }}^{\;\alpha ^{\prime }}=\mathbf{g}(\mathbf{e}^{\alpha
^{\prime }},\mathbf{D}_{\mathbf{Y}}\mathbf{e}_{\beta ^{\prime }})$ define
respectively the N--adapted orthonormalized frame row--matrix and the
canonical d--connection skew--matrix in the flow directs, and $\mathcal{R}%
_{\beta ^{\prime }}^{\;\alpha ^{\prime }}(\mathbf{X,Y})\doteqdot \mathbf{g}(%
\mathbf{e}^{\alpha ^{\prime }},[\mathbf{D}_{\mathbf{X}},$ $\mathbf{D}_{%
\mathbf{Y}}]\mathbf{e}_{\beta ^{\prime }})$ is the curvature matrix.

\subsection{N--anholonomic manifolds with constant matrix curvature}

For trivial N--connection curvature and torsion but constant matrix
curvature, we get a holonomic Riemannian manifold and the equations (\ref%
{mtors}) and (\ref{mcurv}) directly encode a bi--Hamiltonian structure \cite%
{saw,anc2}. A well known class of Riemannian manifolds for which the frame
curvature matrix constant consists of the symmetric spaces $M=G/H$ for
compact semisimple Lie groups $G\supset H.$ A complete classification and
summary of main results on such spaces are given in Refs. \cite{helag,kob}.

The Riemannian curvature and the metric tensors for $M=G/H$ are covariantly
constant and $G$--invariant resulting in constant curvature matrix. In \cite%
{anc1,anc2}, the bi--Hamiltonian operators were investigated for the
symmetric spaces with $M=G/SO(n)$ with $H=SO(n)\supset O(n-1)$ and two
examples when $G=SO(n+1),SU(n).$ Then it was exploited the existing
canonical soldering of Klein and Riemannian symmetric--space geometries \cite%
{sharpe}. Such results were proven for the Levi--Civita connection on
symmetric spaces. They can be generalized for any (pseudo) Riemannian
manifold but for an auxilliary d--connection $\ _{0}\widetilde{\mathbf{%
\Gamma }}_{\ \alpha ^{\prime }\beta ^{\prime }}^{\gamma ^{\prime }}$ (\ref%
{ccandcon}).

\subsubsection{Symmetric nonholonomic manifolds}

N--anholonomic manifolds are characterized by a conventional nonholonomic
splitting of dimensions. We suppose that the ''horizontal'' distribution is
a symmetric space $hV=hG/SO(n)$ with the isotropy subgroup $hH=SO(n)\supset
O(n)$ and the typical fiber space to be a symmetric space $F=vG/SO(m)$ with
the isotropy subgroup $vH=SO(m)\supset O(m).$ This means that $hG=SO(n+1)$
and $vG=SO(m+1)$ which is enough for a study of real holonomic and
nonholonomic manifolds and geometric mechanics models.\footnote{%
it is necessary to consider $hG=SU(n)$ and $vG=SU(m)$ for the geometric
models with spinor and gauge fields} \

Our aim is to solder in a canonic way (like in the N--connection geometry)
the horizontal and vertical symmetric Riemannian spaces of dimension $n$ and
$m$ with a (total) symmetric Riemannian space $V$ of dimension $n+m,$ when $%
V=G/SO(n+m)$ with the isotropy group $H=SO(n+m)\supset O(n+m)$ and $%
G=SO(n+m+1).$ First, we note that for the just mentioned horizontal,
vertical and total symmetric Riemannian spaces one exists natural settings
to Klein geometry. For instance, the metric tensor $hg=\{\mathring{g}_{ij}\}$
on $h\mathbf{V}$ is defined by the Cartan--Killing inner product $<\cdot
,\cdot >_{h}$ on $T_{x}hG\simeq h\mathfrak{g}$ restricted to the Lie algebra
quotient spaces $h\mathfrak{p=}h\mathfrak{g/}h\mathfrak{h,}$ with $%
T_{x}hH\simeq h\mathfrak{h,}$ where $h\mathfrak{g=}h\mathfrak{h}\oplus h%
\mathfrak{p}$ is stated such that there is an involutive automorphism of $hG$
under $hH$ is fixed, i.e. $[h\mathfrak{h,}h\mathfrak{p]}\subseteq $ $h%
\mathfrak{p}$ and $[h\mathfrak{p,}h\mathfrak{p]}\subseteq h\mathfrak{h.}$ In
a similar form, we can define the group spaces and related inner products
and\ Lie algebras,%
\begin{eqnarray}
\mbox{for\ }vg &=&\{\mathring{h}_{ab}\},\;<\cdot ,\cdot
>_{v},\;T_{y}vG\simeq v\mathfrak{g,\;}v\mathfrak{p=}v\mathfrak{g/}v\mathfrak{%
h,}\mbox{ with }  \notag \\
T_{y}vH &\simeq &v\mathfrak{h,}v\mathfrak{g=}v\mathfrak{h}\oplus v\mathfrak{%
p,}\mbox{where }\mathfrak{\;}[v\mathfrak{h,}v\mathfrak{p]}\subseteq v%
\mathfrak{p,\;}[v\mathfrak{p,}v\mathfrak{p]}\subseteq v\mathfrak{h;}  \notag
\\
&&  \label{algstr} \\
\mbox{for\ }\mathbf{g} &=&\{\mathring{g}_{\alpha \beta }\},\;<\cdot ,\cdot
>_{\mathbf{g}},\;T_{(x,y)}G\simeq \mathfrak{g,\;p=g/h,}\mbox{ with }  \notag
\\
T_{(x,y)}H &\simeq &\mathfrak{h,g=h}\oplus \mathfrak{p,}\mbox{where }%
\mathfrak{\;}[\mathfrak{h,p]}\subseteq \mathfrak{p,\;}[\mathfrak{p,p]}%
\subseteq \mathfrak{h.}  \notag
\end{eqnarray}%
We parametrize the metric structure with constant coefficients on $%
V=G/SO(n+m)$ in the form%
\begin{equation*}
\mathring{g}=\mathring{g}_{\alpha \beta }du^{\alpha }\otimes du^{\beta },
\end{equation*}%
where $u^{\alpha }$ are local coordinates and
\begin{equation}
\mathring{g}_{\alpha \beta }=\left[
\begin{array}{cc}
\mathring{g}_{ij}+\mathring{N}_{i}^{a}N_{j}^{b}\mathring{h}_{ab} & \mathring{%
N}_{j}^{e}\mathring{h}_{ae} \\
\mathring{N}_{i}^{e}\mathring{h}_{be} & \mathring{h}_{ab}%
\end{array}%
\right]  \label{constans}
\end{equation}%
when trivial, constant, N--connection coefficients are computed $\mathring{N}%
_{j}^{e}=\mathring{h}^{eb}\mathring{g}_{jb}$ for any given sets $\mathring{h}%
^{eb}$ and $\mathring{g}_{jb},$ i.e. from the inverse metrics coefficients
defined respectively on $hG=SO(n+1)$ and by off--blocks $(n\times n)$-- and $%
(m\times m)$--terms of the metric $\mathring{g}_{\alpha \beta }.$ As a
result, we define an equivalent d--metric structure of type (\ref{m1})
\begin{eqnarray}
\mathbf{\mathring{g}} &=&\ \mathring{g}_{ij}\ e^{i}\otimes e^{j}+\ \mathring{%
h}_{ab}\ \mathbf{\mathring{e}}^{a}\otimes \mathbf{\mathring{e}}^{b},
\label{m1const} \\
e^{i} &=&dx^{i},\ \;\mathbf{\mathring{e}}^{a}=dy^{a}+\mathring{N}%
_{i}^{a}dx^{i}  \notag
\end{eqnarray}%
defining a trivial $(n+m)$--splitting $\mathbf{\mathring{g}=}\mathring{g}%
\mathbf{\oplus _{\mathring{N}}}\mathring{h}\mathbf{\ }$because all
nonholonomy coefficients $\mathring{W}_{\alpha \beta }^{\gamma }$ and
N--connection curvature coefficients $\mathring{\Omega}_{ij}^{a}$ are zero.

We can consider any covariant coordinate transforms of (\ref{m1const})
preserving the\ $(n+m)$--splitting resulting in any $W_{\alpha \beta
}^{\gamma }=0$ (\ref{anhrel}) and $\Omega _{ij}^{a}=0$ (\ref{ncurv}). It
should be noted that even such trivial parametrizations define algebraic
classifications of \ symmetric Riemannian spaces of dimension $n+m$ with
constant matrix curvature admitting splitting (by certain algebraic
constraints) into symmetric Riemannian subspaces of dimension $n$ and $m,$
also both with constant matrix curvature and introducing the concept of
N--anholonomic Riemannian space of type $\mathbf{\mathring{V}}=[hG=SO(n+1),$
$vG=SO(m+1),\;\mathring{N}_{i}^{e}].$ There are such trivially
N--anholonomic group spaces which possess a Lie d--algebra symmetry $%
\mathfrak{so}_{\mathring{N}}(n+m)\doteqdot \mathfrak{so}(n)\oplus \mathfrak{%
so}(m).$

The next generalization of constructions is to consider nonhlonomic
distributions on $V=G/SO(n+m)$ defined locally by arbitrary N--connection
coefficients $N_{i}^{a}(x,y)$ with nonvanishing $W_{\alpha \beta }^{\gamma }$
and $\Omega _{ij}^{a}$ but with constant d--metric coefficients when
\begin{eqnarray}
\mathbf{g} &=&\ \mathring{g}_{ij}\ e^{i}\otimes e^{j}+\ \mathring{h}_{ab}\
\mathbf{e}^{a}\otimes \mathbf{e}^{b},  \label{m1b} \\
e^{i} &=&dx^{i},\ \mathbf{e}^{a}=dy^{a}+N_{i}^{a}(x,y)dx^{i}.  \notag
\end{eqnarray}%
This metric is equivalent to a metric (\ref{metr}) considered in\
Proposition \ref{prop21} when d--metric $\mathbf{g}_{\alpha ^{\prime }\beta
^{\prime }}=[g_{i^{\prime }j^{\prime }},h_{a^{\prime }b^{\prime }}]$ (\ref%
{m1}) with constant coefficients $g_{i^{\prime }j^{\prime }}=\
_{0}g_{i^{\prime }j^{\prime }}=\ \mathring{g}_{ij}$ and $h_{a^{\prime
}b^{\prime }}=\ _{0}h_{a^{\prime }b^{\prime }}=\mathring{h}_{ab},$ in this
section $\ $\ induced by the corresponding Lie d--algebra structure $%
\mathfrak{so}_{\mathring{N}}(n+m).$ Such spaces transform into
N--anholonomic Riemann--Cartan manifolds $\mathbf{\mathring{V}}_{\mathbf{N}%
}=[hG=SO(n+1),$ $vG=SO(m+1),\;N_{i}^{e}]$ \ with nontrivial N--connection
curvature and induced d--torsion coefficients of the canonical d--connection
(see formulas (\ref{dtors}) computed for constant d--metric coefficients and
the canonical d--connection coefficients in (\ref{candcon})). One has zero
curvature for the canonical d--connection (in general, such spaces are
curved ones with generic off--diagonal metric (\ref{m1b}) and nonzero
curvature tensor for the Levi--Civita connection).\footnote{%
Introducing, constant values for the d--metric coefficients we get zero
coefficients for the canonical d--connection which in its turn results in
zero values of (\ref{dcurv}).} This allows us to classify the N--anholonomic
manifolds (and vector bundles) as having the same group and algebraic
structures of couples of symmetric Riemannian spaces of dimension $n$ and $m$
but nonholonomically soldered to the symmetric Riemannian space of dimension
$n+m.$ With respect to N--adapted orthonormal bases (\ref{orthbas}), with
distinguished h-- and v--subspaces, we obtain the same inner products and
group and Lie algebra spaces as in (\ref{algstr}).

\subsubsection{N--anholonomic Klein spaces}

The bi--Hamiltonian and solitonic constructions are based on an extrinsic
approach soldering the Riemannian symmetric--space geometry to the Klein
geometry \cite{sharpe}. For the N--anholonomic spaces of dimension $n+m,$
with a constant d--curvature, similar constructions hold true but we have to
adapt them to the N--connection structure.

There are two Hamiltonian variables given by the principal normals $%
\;^{h}\nu $ and $\;^{v}\nu ,$ respectively, in the horizontal and vertical
subspaces, defined by the canonical d--connection $\mathbf{D}=(h\mathbf{D},v%
\mathbf{D}),$ see formulas (\ref{curvframe}) and (\ref{part02}),
\begin{equation*}
\;^{h}\nu \doteqdot \mathbf{D}_{h\mathbf{X}}h\mathbf{X}=\nu ^{\widehat{i}}%
\mathbf{\mathbf{e}}_{\widehat{i}}\mbox{\ and \ }\;^{v}\nu \doteqdot \mathbf{D%
}_{v\mathbf{X}}v\mathbf{X}=\nu ^{\widehat{a}}e_{\widehat{a}}.
\end{equation*}%
This normal d--vector $\mathbf{v}=(\;^{h}\nu ,$ $\;^{v}\nu ),$ with
components of type $\mathbf{\nu }^{\alpha }=(\nu ^{i},$ $\;\nu ^{a})=(\nu
^{1},$ $\nu ^{\widehat{i}},\nu ^{n+1},\nu ^{\widehat{a}}),$ is in the
tangent direction of curve $\gamma .$ There is also the principal normal
d--vector $\mathbf{\varpi }=(\;^{h}\varpi ,\;^{v}\varpi )$ with components
of type $\mathbf{\varpi }^{\alpha }=(\varpi ^{i},$ $\;\varpi ^{a})=(\varpi
^{1},\varpi ^{\widehat{i}},\varpi ^{n+1},\varpi ^{\widehat{a}})$ in the flow
direction, with
\begin{equation*}
\;^{h}\varpi \doteqdot \mathbf{D}_{h\mathbf{Y}}h\mathbf{X=}\varpi ^{\widehat{%
i}}\mathbf{\mathbf{e}}_{\widehat{i}},\;^{v}\varpi \doteqdot \mathbf{D}_{v%
\mathbf{Y}}v\mathbf{X}=\varpi ^{\widehat{a}}e_{\widehat{a}},
\end{equation*}%
representing a Hamiltonian d--covector field. We can consider that the
normal part of the flow d--vector $\ \mathbf{h}_{\perp }\doteqdot \mathbf{Y}%
_{\perp }=h^{\widehat{i}}\mathbf{\mathbf{e}}_{\widehat{i}}+h^{\widehat{a}}e_{%
\widehat{a}}$ represents a Hamiltonian d--vector field. For such
configurations, we can consider parallel N--adapted frames $\mathbf{e}%
_{\alpha ^{\prime }}=(\mathbf{e}_{i^{\prime }},e_{a^{\prime }})$ when the
h--variables $\nu ^{\widehat{i^{\prime }}},$ $\varpi ^{\widehat{i^{\prime }}%
},h^{\widehat{i^{\prime }}}$ are respectively encoded in the top row of the
horizontal canonical d--connection matrices $\mathbf{\Gamma }_{h\mathbf{X\,}%
i^{\prime }}^{\qquad j^{\prime }}$ and $\mathbf{\Gamma }_{h\mathbf{Y\,}%
i^{\prime }}^{\qquad j^{\prime }}$ and in the row matrix $\left( \mathbf{e}_{%
\mathbf{Y}}^{i^{\prime }}\right) _{\perp }\doteqdot \mathbf{e}_{\mathbf{Y}%
}^{i^{\prime }}-g_{\parallel }\;\mathbf{e}_{\mathbf{X}}^{i^{\prime }}$ where
$g_{\parallel }\doteqdot g(h\mathbf{Y,}h\mathbf{X})$ is the tangential
h--part of the flow d--vector.

A similar encoding holds for v--variables $\nu ^{\widehat{a^{\prime }}%
},\varpi ^{\widehat{a^{\prime }}},h^{\widehat{a^{\prime }}}$ in the top row
of the vertical canonical d--connection matrices \ $\mathbf{\Gamma }_{v%
\mathbf{X\,}a^{\prime }}^{\qquad b^{\prime }}$ and $\mathbf{\Gamma }_{v%
\mathbf{Y\,}a^{\prime }}^{\qquad b^{\prime }}$ and in the row matrix $\left(
\mathbf{e}_{\mathbf{Y}}^{a^{\prime }}\right) _{\perp }\doteqdot \mathbf{e}_{%
\mathbf{Y}}^{a^{\prime }}-h_{\parallel }\;\mathbf{e}_{\mathbf{X}}^{a^{\prime
}}$ where $h_{\parallel }\doteqdot h(v\mathbf{Y,}v\mathbf{X})$ is the
tangential v--part of the flow d--vector. In a compact form of notations, we
shall write $\mathbf{v}^{\alpha ^{\prime }}$ and $\mathbf{\varpi }^{\alpha
^{\prime }}$ where the primed small Greek indices $\alpha ^{\prime },\beta
^{\prime },...$ will denote both N--adapted and then orthonormalized
components of geometric objects (like d--vectors, d--covectors, d--tensors,
d--groups, d--algebras, d--matrices) admitting further decompositions into
h-- and v--components defined as nonintegrable distributions of such objects.

With respect to N--adapted orthonormalized frames, the geometry of
N--anholonomic manifolds is defined algebraically, on their tangent bundles,
by couples of horizontal and vertical Klein geometries considered in \cite%
{sharpe} and for bi--Hamiltonian soliton constructions in \cite{anc1}. The
N--connection structure induces a N--anholonomic Klein space stated by two
left--invariant $h\mathfrak{g}$-- and $v\mathfrak{g}$--valued Maurer--Cartan
form on the Lie d--group $\mathbf{G}=(h\mathbf{G},v\mathbf{G})$ is
identified with the zero--curvature canonical d--connection 1--form $\;^{%
\mathbf{G}}\mathbf{\Gamma }=\{\;^{\mathbf{G}}\mathbf{\Gamma }_{\ \beta
^{\prime }}^{\alpha ^{\prime }}\},$ where $\;^{\mathbf{G}}\mathbf{\Gamma }%
_{\ \beta ^{\prime }}^{\alpha ^{\prime }}=\;^{\mathbf{G}}\mathbf{\Gamma }_{\
\beta ^{\prime }\gamma ^{\prime }}^{\alpha ^{\prime }}\mathbf{e}^{\gamma
^{\prime }}=\;^{h\mathbf{G}}L_{\;j^{\prime }k^{\prime }}^{i^{\prime }}%
\mathbf{e}^{k^{\prime }}+\;^{v\mathbf{G}}C_{\;j^{\prime }k^{\prime
}}^{i^{\prime }}e^{k^{\prime }}.$ For trivial N--connection structure in
vector bundles with the base and typical fiber spaces being symmetric
Riemannian spaces, we can consider that $\;^{h\mathbf{G}}L_{\;j^{\prime
}k^{\prime }}^{i^{\prime }}$ and $\;^{v\mathbf{G}}C_{\;j^{\prime }k^{\prime
}}^{i^{\prime }}$ are the coefficients of the Cartan connections $\;^{h%
\mathbf{G}}L$ and $\;^{v\mathbf{G}}C,$ respectively for the $h\mathbf{G}$
and $v\mathbf{G,}$ both with vanishing curvatures, i.e. with $d\;^{\mathbf{G}%
}\mathbf{\Gamma +}\frac{1}{2}\mathbf{[\;^{\mathbf{G}}\mathbf{\Gamma ,}\;^{%
\mathbf{G}}\mathbf{\Gamma }]=0}$and h-- and v--components, $d\;^{h\mathbf{G}}%
\mathbf{L}+\frac{1}{2}\mathbf{[\;^{h\mathbf{G}}L,\;^{h\mathbf{G}}L]}=0$ and $%
d\;^{v\mathbf{G}}\mathbf{C}+\frac{1}{2}[\;^{v\mathbf{G}}\mathbf{C},$ $\;^{v%
\mathbf{G}}\mathbf{C}]=0,$ where $d$ denotes the total derivatives on the
d--group manifold $\mathbf{G}=h\mathbf{G}\oplus v\mathbf{G}$ or their
restrictions on $h\mathbf{G}$ or $v\mathbf{G.}$ We can consider that $\;^{%
\mathbf{G}}\mathbf{\Gamma }$ defines the so--called Cartan d--connection for
nonintegrable N--connection structures, see details and noncommutative
developments in \cite{vncg,vsgg}.

Through the Lie d--algebra decompositions $\mathfrak{g}=h\mathfrak{g}\oplus v%
\mathfrak{g,}$ for the horizontal splitting: $h\mathfrak{g}=\mathfrak{so}%
(n)\oplus h\mathfrak{p,}$ when $[h\mathfrak{p},h\mathfrak{p}]\subset
\mathfrak{so}(n)$ and $[\mathfrak{so}(n),h\mathfrak{p}]\subset h\mathfrak{p;}
$ for the vertical splitting $v\mathfrak{g}=\mathfrak{so}(m)\oplus v%
\mathfrak{p,}$ when $[v\mathfrak{p},v\mathfrak{p}]\subset \mathfrak{so}(m)$
and $[\mathfrak{so}(m),v\mathfrak{p}]\subset v\mathfrak{p,}$ the Cartan
d--connection determines an N--anholonomic Riemannian structure on the
nonholonomic bundle $\mathbf{\mathring{E}}=[hG=SO(n+1),$ $%
vG=SO(m+1),\;N_{i}^{e}].$ For $n=m,$ and canonical d--objects
(N--connection, d--metric, d--connection, ...) derived from (\ref{m1b}), or
any N--anholonomic space with constant d--curvatures, the Cartan
d--connection transform just in the canonical d--connection (\ref{candcontm}%
). It is possible to consider a quotient space with distinguished structure
group $\mathbf{V}_{\mathbf{N}}=\mathbf{G}/SO(n)\oplus $ $SO(m)$ regarding $%
\mathbf{G}$ as a principal $\left( SO(n)\oplus SO(m)\right) $--bundle over $%
\mathbf{\mathring{E}},$ which is a N--anholonomic bundle. In this case, we
can always fix a local section of this bundle and pull--back $\;^{\mathbf{G}}%
\mathbf{\Gamma }$ to give a $\left( h\mathfrak{g}\oplus v\mathfrak{g}\right)
$--valued 1--form $^{\mathfrak{g}}\mathbf{\Gamma }$ in a point $u\in \mathbf{%
\mathring{E}}.$ Any change of local sections define $SO(n)\oplus $ $SO(m)$
gauge transforms of the canonical d--connection $^{\mathfrak{g}}\mathbf{%
\Gamma ,}$ all preserving the nonholonomic decomposition (\ref{whitney}).

There are involutive automorphisms $h\sigma =\pm 1$ and $v\sigma =\pm 1,$
respectively, of $h\mathfrak{g}$ and $v\mathfrak{g,}$ defined that $%
\mathfrak{so}(n)$ (or $\mathfrak{so}(m)$) is eigenspace $h\sigma =+1$ (or $%
v\sigma =+1)$ and $h\mathfrak{p}$ (or $v\mathfrak{p}$) is eigenspace $%
h\sigma =-1$ (or $v\sigma =-1).$ It is possible both a N--adapted
decomposition and taking into account the existing eigenspaces, when the
symmetric parts $\mathbf{\Gamma \doteqdot }\frac{1}{2}\left( ^{\mathfrak{g}}%
\mathbf{\Gamma +}\sigma \left( ^{\mathfrak{g}}\mathbf{\Gamma }\right)
\right) ,$ with respective h- and v--splitting $\mathbf{L\doteqdot }\frac{1}{%
2}\left( ^{h\mathfrak{g}}\mathbf{L+}h\sigma \left( ^{h\mathfrak{g}}\mathbf{L}%
\right) \right) $ and $\mathbf{C\doteqdot }\frac{1}{2}(^{v\mathfrak{g}}%
\mathbf{C}+h\sigma (^{v\mathfrak{g}}\mathbf{C})),$ defines a $\left(
\mathfrak{so}(n)\oplus \mathfrak{so}(m)\right) $--valued d--connection
1--form. Under such conditions, the antisymmetric part $\mathbf{e\doteqdot }%
\frac{1}{2}\left( ^{\mathfrak{g}}\mathbf{\Gamma -}\sigma \left( ^{\mathfrak{g%
}}\mathbf{\Gamma }\right) \right) ,$ with respective h- and v--splitting $h%
\mathbf{e\doteqdot }\frac{1}{2}\left( ^{h\mathfrak{g}}\mathbf{L-}h\sigma
\left( ^{h\mathfrak{g}}\mathbf{L}\right) \right) $ and $v\mathbf{e\doteqdot }%
\frac{1}{2}(^{v\mathfrak{g}}\mathbf{C}-h\sigma (^{v\mathfrak{g}}\mathbf{C}%
)), $ defines a $\left( h\mathfrak{p}\oplus v\mathfrak{p}\right) $--valued
N--adapted coframe for the Cartan--Killing inner product $<\cdot ,\cdot >_{%
\mathfrak{p}}$ on $T_{u}\mathbf{G}\simeq h\mathfrak{g}\oplus v\mathfrak{g}$
restricted to $T_{u}\mathbf{V}_{\mathbf{N}}\simeq \mathfrak{p.}$ This inner
product, distinguished into h- and v--components, provides a d--metric
structure of type $\mathbf{g}=[g,h]$ (\ref{m1}),where $g=<h\mathbf{e\otimes }%
h\mathbf{e}>_{h\mathfrak{p}}$ and $h=<v\mathbf{e\otimes }v\mathbf{e}>_{v%
\mathfrak{p}}$ on $\mathbf{V}_{\mathbf{N}}=\mathbf{G}/SO(n)\oplus $ $SO(m).$

We generate a $\mathbf{G(}=h\mathbf{G}\oplus v\mathbf{G)}$--invariant
d--derivative $\mathbf{D}$ whose restriction to the tangent space $T\mathbf{V%
}_{\mathbf{N}}$ for any N--anholonomic curve flow $\gamma (\tau ,\mathbf{l})$
in $\mathbf{V}_{\mathbf{N}}=\mathbf{G}/SO(n)\oplus $ $SO(m)$ is defined via%
\begin{equation}
\mathbf{D}_{\mathbf{X}}\mathbf{e=}\left[ \mathbf{e},\gamma _{\mathbf{l}%
}\rfloor \mathbf{\Gamma }\right] \mbox{\ and \ }\mathbf{D}_{\mathbf{Y}}%
\mathbf{e=}\left[ \mathbf{e},\gamma _{\mathbf{\tau }}\rfloor \mathbf{\Gamma }%
\right] ,  \label{aux33}
\end{equation}%
admitting further h- and v--decompositions. The derivatives $\mathbf{D}_{%
\mathbf{X}}$ and $\mathbf{D}_{\mathbf{Y}}$ are equivalent to those
considered in (\ref{part01}) and obey the Cartan structure equations (\ref%
{mtors}) and (\ref{mcurv}). For the canonical d--connections, a large class
of N--anholonomic spaces of dimension $n=m,$ the d--torsions are zero and
the d--curvatures are with constant coefficients.

Let $\mathbf{e}^{\alpha ^{\prime }}=(e^{i^{\prime }},\mathbf{e}^{a^{\prime
}})$ be a N--adapted orthonormalized coframe being identified with the $%
\left( h\mathfrak{p}\oplus v\mathfrak{p}\right) $--valued coframe $\mathbf{e}
$ in a fixed orthonormal basis for $\mathfrak{p=}h\mathfrak{p}\oplus v%
\mathfrak{p\subset }h\mathfrak{g}\oplus v\mathfrak{g.}$ Considering the
kernel/ cokernel of Lie algebra multiplications in the h- and v--subspaces,
respectively, $\left[ \mathbf{e}_{h\mathbf{X}},\cdot \right] _{h\mathfrak{g}%
} $ and $\left[ \mathbf{e}_{v\mathbf{X}},\cdot \right] _{v\mathfrak{g}},$ we
can decompose the coframes into parallel and perpendicular parts with
respect to $\mathbf{e}_{\mathbf{X}}.$ We write
\begin{equation*}
\mathbf{e=(e}_{C}=h\mathbf{e}_{C}+v\mathbf{e}_{C},\mathbf{e}_{C^{\perp }}=h%
\mathbf{e}_{C^{\perp }}+v\mathbf{e}_{C^{\perp }}\mathbf{),}
\end{equation*}%
for $\mathfrak{p(}=h\mathfrak{p}\oplus v\mathfrak{p)}$--valued mutually
orthogonal d--vectors $\mathbf{e}_{C}$ \ and $\mathbf{e}_{C^{\perp }},$ when
there are satisfied the conditions $\left[ \mathbf{e}_{\mathbf{X}},\mathbf{e}%
_{C}\right] _{\mathfrak{g}}=0$ but $\left[ \mathbf{e}_{\mathbf{X}},\mathbf{e}%
_{C^{\perp }}\right] _{\mathfrak{g}}\neq 0;$ such conditions can be stated
in h- and v--component form, respectively, $\left[ h\mathbf{e}_{\mathbf{X}},h%
\mathbf{e}_{C}\right] _{h\mathfrak{g}}=0,$ $\left[ h\mathbf{e}_{\mathbf{X}},h%
\mathbf{e}_{C^{\perp }}\right] _{h\mathfrak{g}}\neq 0$ and $\left[ v\mathbf{e%
}_{\mathbf{X}},v\mathbf{e}_{C}\right] _{v\mathfrak{g}}=0,$ $\left[ v\mathbf{e%
}_{\mathbf{X}},v\mathbf{e}_{C^{\perp }}\right] _{v\mathfrak{g}}\neq 0.$ One
holds also the algebraic decompositions
\begin{equation*}
T_{u}\mathbf{V}_{\mathbf{N}}\simeq \mathfrak{p=}h\mathfrak{p}\oplus v%
\mathfrak{p}=\mathfrak{g=}h\mathfrak{g}\oplus v\mathfrak{g}/\mathfrak{so}%
(n)\oplus \mathfrak{so}(m)
\end{equation*}%
and
\begin{equation*}
\mathfrak{p=p}_{C}\oplus \mathfrak{p}_{C^{\perp }}=\left( h\mathfrak{p}%
_{C}\oplus v\mathfrak{p}_{C}\right) \oplus \left( h\mathfrak{p}_{C^{\perp
}}\oplus v\mathfrak{p}_{C^{\perp }}\right) ,
\end{equation*}%
with $\mathfrak{p}_{\parallel }\subseteq \mathfrak{p}_{C}$ and $\mathfrak{p}%
_{C^{\perp }}\subseteq \mathfrak{p}_{\perp },$ where $\left[ \mathfrak{p}%
_{\parallel },\mathfrak{p}_{C}\right] =0,$ $<\mathfrak{p}_{C^{\perp }},%
\mathfrak{p}_{C}>=0,$ but $\left[ \mathfrak{p}_{\parallel },\mathfrak{p}%
_{C^{\perp }}\right] \neq 0$ (i.e. $\mathfrak{p}_{C}$ is the centralizer of $%
\mathbf{e}_{\mathbf{X}}$ in $\mathfrak{p=}h\mathfrak{p}\oplus v\mathfrak{%
p\subset }h\mathfrak{g}\oplus v\mathfrak{g);}$ in h- \ and v--components,
one have $h\mathfrak{p}_{\parallel }\subseteq h\mathfrak{p}_{C}$ and $h%
\mathfrak{p}_{C^{\perp }}\subseteq h\mathfrak{p}_{\perp },$ where $\left[ h%
\mathfrak{p}_{\parallel },h\mathfrak{p}_{C}\right] =0,$ $<h\mathfrak{p}%
_{C^{\perp }},h\mathfrak{p}_{C}>=0,$ but $\left[ h\mathfrak{p}_{\parallel },h%
\mathfrak{p}_{C^{\perp }}\right] \neq 0$ (i.e. $h\mathfrak{p}_{C}$ is the
centralizer of $\mathbf{e}_{h\mathbf{X}}$ in $h\mathfrak{p\subset }h%
\mathfrak{g)}$ and $v\mathfrak{p}_{\parallel }\subseteq v\mathfrak{p}_{C}$
and $v\mathfrak{p}_{C^{\perp }}\subseteq v\mathfrak{p}_{\perp },$ where $%
\left[ v\mathfrak{p}_{\parallel },v\mathfrak{p}_{C}\right] =0,$ $<v\mathfrak{%
p}_{C^{\perp }},v\mathfrak{p}_{C}>=0,$ but $\left[ v\mathfrak{p}_{\parallel
},v\mathfrak{p}_{C^{\perp }}\right] \neq 0$ (i.e. $v\mathfrak{p}_{C}$ is the
centralizer of $\mathbf{e}_{v\mathbf{X}}$ in $v\mathfrak{p\subset }v%
\mathfrak{g).}$ Using the canonical d--connection derivative $\mathbf{D}_{%
\mathbf{X}}$ of a d--covector perpendicular (or parallel) to $\mathbf{e}_{%
\mathbf{X}},$ we get a new d--vector which is parallel (or perpendicular) to
$\mathbf{e}_{\mathbf{X}},$ i.e. $\mathbf{D}_{\mathbf{X}}\mathbf{e}_{C}\in
\mathfrak{p}_{C^{\perp }}$ (or $\mathbf{D}_{\mathbf{X}}\mathbf{e}_{C^{\perp
}}\in \mathfrak{p}_{C});$ in h- \ and v--components such formulas are
written $\mathbf{D}_{h\mathbf{X}}h\mathbf{e}_{C}\in h\mathfrak{p}_{C^{\perp
}}$ (or $\mathbf{D}_{h\mathbf{X}}h\mathbf{e}_{C^{\perp }}\in h\mathfrak{p}%
_{C})$ and $\mathbf{D}_{v\mathbf{X}}v\mathbf{e}_{C}\in v\mathfrak{p}%
_{C^{\perp }}$ (or $\mathbf{D}_{v\mathbf{X}}v\mathbf{e}_{C^{\perp }}\in v%
\mathfrak{p}_{C}).$ All such d--algebraic relations can be written in
N--anholonomic manifolds and canonical d--connection settings, for instance,
using certain relations of type
\begin{equation*}
\mathbf{D}_{\mathbf{X}}(\mathbf{e}^{\alpha ^{\prime }})_{C}=\mathbf{v}%
_{~\beta ^{\prime }}^{\alpha ^{\prime }}(\mathbf{e}^{\beta ^{\prime
}})_{C^{\perp }}\mbox{ \ and \ }\mathbf{D}_{\mathbf{X}}(\mathbf{e}^{\alpha
^{\prime }})_{C^{\perp }}=-\mathbf{v}_{~\beta ^{\prime }}^{\alpha ^{\prime
}}(\mathbf{e}^{\beta ^{\prime }})_{C},
\end{equation*}%
for some antisymmetric d--tensors $\mathbf{v}^{\alpha ^{\prime }\beta
^{\prime }}=-\mathbf{v}^{\beta ^{\prime }\alpha ^{\prime }}.$ We get a
N--adapted $\left( SO(n)\oplus SO(m)\right) $--parallel frame defining a
generalization of the concept of Riemannian parallel frame on N--adapted
manifolds whenever $\mathfrak{p}_{C}$ is larger than $\mathfrak{p}%
_{\parallel }.$ Substituting $\mathbf{e}^{\alpha ^{\prime }}=(e^{i^{\prime
}},\mathbf{e}^{a^{\prime }})$ into the last formulas and considering h- and
v--components, we define $SO(n)$--parallel and $SO(m)$--parallel frames (for
simplicity we omit these formulas when the Greek small letter indices are
split into Latin small letter h- and v--indices).

The final conclusion of this section is that the Cartan structure equations
on hypersurfaces swept out by nonholonomic curve flows on N--anholonomic
spaces with constant matrix curvature for the canonical d--connection
geometrically encode \ two $O(n-1)$-- and $O(m-1)$--invariant, respectively,
horizontal and vertical bi--Hamiltonian operators. This holds true if the
distinguished by N--connection freedom of the d--group action $SO(n)\oplus
SO(m)$ on $\mathbf{e}$ and $\mathbf{\Gamma }$ is used to fix them to be a
N--adapted parallel coframe and its associated canonical d--connection
1--form is related to the canonical covariant derivative on N--anholonomic
manifolds.

\section{Bi--Hamiltonians and N--adapted Vector Solitons}

Introducing N--adapted orthonormalized bases, for N--anholonomic manifolds
of dimension $n+m,$ with constant curvatures of the canonical d--connection,
we can derive bi--Hamiltonian and vector soliton structures. In symbolic,
abstract index form, the constructions for nonholonomic vector bundles are
similar to those for the Riemannian symmetric--spaces soldered to Klein
geometry. We have to distinguish the horizontal and vertical components of
geometric objects and related equations.

\subsection{Basic equations for N--anholonomic curve flows}

In this section, we shall prove the results for the h--components of certain
N--anholonomic manifolds with constant d--curvature and then dub the
formulas for the v--components omitting similar details.

There is an isomorphism between the real space $\mathfrak{so}(n)$ and the
Lie algebra of $n\times n$ skew--symmetric matrices. This allows to
establish an isomorphism between $h\mathfrak{p}$ $\simeq \mathbb{R}^{n}$ and
the tangent spaces $T_{x}M=\mathfrak{so}(n+1)/$ $\mathfrak{so}(n)$ of the
Riemannian manifold $M=SO(n+1)/$ $SO(n)$ as described by the following
canonical decomposition
\begin{equation*}
h\mathfrak{g}=\mathfrak{so}(n+1)\supset h\mathfrak{p\in }\left[
\begin{array}{cc}
0 & h\mathbf{p} \\
-h\mathbf{p}^{T} & h\mathbf{0}%
\end{array}%
\right] \mbox{\ for\ }h\mathbf{0\in }h\mathfrak{h=so}(n)
\end{equation*}%
with $h\mathbf{p=\{}p^{i^{\prime }}\mathbf{\}\in }\mathbb{R}^{n}$ being the
h--component of the d--vector $\mathbf{p=(}p^{i^{\prime }}\mathbf{,}%
p^{a^{\prime }}\mathbf{)}$ and $h\mathbf{p}^{T}$ mean the transposition of
the row $h\mathbf{p.}$ The Cartan--Killing inner product on $h\mathfrak{g}$
is stated following the rule%
\begin{eqnarray*}
h\mathbf{p\cdot }h\mathbf{p} &\mathbf{=}&\left\langle \left[
\begin{array}{cc}
0 & h\mathbf{p} \\
-h\mathbf{p}^{T} & h\mathbf{0}%
\end{array}%
\right] ,\left[
\begin{array}{cc}
0 & h\mathbf{p} \\
-h\mathbf{p}^{T} & h\mathbf{0}%
\end{array}%
\right] \right\rangle \\
&\mathbf{\doteqdot }&\frac{1}{2}tr\left\{ \left[
\begin{array}{cc}
0 & h\mathbf{p} \\
-h\mathbf{p}^{T} & h\mathbf{0}%
\end{array}%
\right] ^{T}\left[
\begin{array}{cc}
0 & h\mathbf{p} \\
-h\mathbf{p}^{T} & h\mathbf{0}%
\end{array}%
\right] \right\} ,
\end{eqnarray*}%
where $tr$ denotes the trace of the corresponding product of matrices. This
product identifies canonically $h\mathfrak{p}$ $\simeq \mathbb{R}^{n}$ with
its dual $h\mathfrak{p}^{\ast }$ $\simeq \mathbb{R}^{n}.$ In a similar form,
we can consider
\begin{equation*}
v\mathfrak{g}=\mathfrak{so}(m+1)\supset v\mathfrak{p\in }\left[
\begin{array}{cc}
0 & v\mathbf{p} \\
-v\mathbf{p}^{T} & v\mathbf{0}%
\end{array}%
\right] \mbox{\ for\ }v\mathbf{0\in }v\mathfrak{h=so}(m)
\end{equation*}%
with $v\mathbf{p=\{}p^{a^{\prime }}\mathbf{\}\in }\mathbb{R}^{m}$ being the
v--component of the d--vector $\mathbf{p=(}p^{i^{\prime }}\mathbf{,}%
p^{a^{\prime }}\mathbf{)}$ and define the Cartan--Killing inner product $v%
\mathbf{p\cdot }v\mathbf{p\doteqdot }\frac{1}{2}tr\{...\}.$ In general, in
the tangent bundle of a N--anholonomic manifold, we can consider the
Cartan--Killing N--adapted inner product $\mathbf{p\cdot p=}h\mathbf{p\cdot }%
h\mathbf{p+}v\mathbf{p\cdot }v\mathbf{p.}$

Following the introduced Cartan--Killing parametrizations, we analyze the
flow $\gamma (\tau ,\mathbf{l})$ of a non--stretching curve in $\mathbf{V}_{%
\mathbf{N}}=\mathbf{G}/SO(n)\oplus $ $SO(m).$ Let us introduce a coframe $%
\mathbf{e}\in T_{\gamma }^{\ast }\mathbf{V}_{\mathbf{N}}\otimes (h\mathfrak{%
p\oplus }v\mathfrak{p}),$ which is a N--adapted $\left( SO(n)\mathfrak{%
\oplus }SO(m)\right) $--parallel basis along $\gamma ,$ and its associated
canonical d--con\-nec\-tion 1--form $\mathbf{\Gamma }\in T_{\gamma }^{\ast }%
\mathbf{V}_{\mathbf{N}}\otimes (\mathfrak{so}(n)\mathfrak{\oplus so}(m)).$
Such d--objects are respectively parametrized:%
\begin{equation*}
\mathbf{e}_{\mathbf{X}}=\mathbf{e}_{h\mathbf{X}}+\mathbf{e}_{v\mathbf{X}},
\end{equation*}%
where (for $(1,\overrightarrow{0})\in \mathbb{R}^{n},\overrightarrow{0}\in
\mathbb{R}^{n-1}$ and $(1,\overleftarrow{0})\in \mathbb{R}^{m},%
\overleftarrow{0}\in \mathbb{R}^{m-1},)$ $\mathbf{e}_{h\mathbf{X}}=\gamma _{h%
\mathbf{X}}\rfloor h\mathbf{e=}\left[
\begin{array}{cc}
0 & (1,\overrightarrow{0}) \\
-(1,\overrightarrow{0})^{T} & h\mathbf{0}%
\end{array}%
\right],$ $\mathbf{e}_{v\mathbf{X}}=\gamma _{v\mathbf{X}}\rfloor v\mathbf{e=}%
\left[
\begin{array}{cc}
0 & (1,\overleftarrow{0}) \\
-(1,\overleftarrow{0})^{T} & v\mathbf{0}%
\end{array}%
\right];$
\begin{equation*}
\mathbf{\Gamma =}\left[ \mathbf{\Gamma }_{h\mathbf{X}},\mathbf{\Gamma }_{v%
\mathbf{X}}\right] ,
\end{equation*}%
for
\begin{equation*}
\mathbf{\Gamma }_{h\mathbf{X}}\mathbf{=}\gamma _{h\mathbf{X}}\rfloor \mathbf{%
L=}\left[
\begin{array}{cc}
0 & (0,\overrightarrow{0}) \\
-(0,\overrightarrow{0})^{T} & \mathbf{L}%
\end{array}%
\right] \in \mathfrak{so}(n+1),
\end{equation*}%
where $\mathbf{L=}\left[
\begin{array}{cc}
0 & \overrightarrow{v} \\
-\overrightarrow{v}^{T} & h\mathbf{0}%
\end{array}%
\right] \in \mathfrak{so}(n),~\overrightarrow{v}\in \mathbb{R}^{n-1},~h%
\mathbf{0\in }\mathfrak{so}(n-1),$ and
\begin{equation*}
\mathbf{\Gamma }_{v\mathbf{X}}\mathbf{=}\gamma _{v\mathbf{X}}\rfloor \mathbf{%
C=}\left[
\begin{array}{cc}
0 & (0,\overleftarrow{0}) \\
-(0,\overleftarrow{0})^{T} & \mathbf{C}%
\end{array}%
\right] \in \mathfrak{so}(m+1),
\end{equation*}%
where $\mathbf{C=}\left[
\begin{array}{cc}
0 & \overleftarrow{v} \\
-\overleftarrow{v}^{T} & v\mathbf{0}%
\end{array}%
\right] \in \mathfrak{so}(m),~\overleftarrow{v}\in \mathbb{R}^{m-1},~v%
\mathbf{0\in }\mathfrak{so}(m-1).$

The above parametrizations are fixed in order to preserve the $SO(n)$ and $%
SO(m)$ rotation gauge freedoms on the N--adapted coframe and canonical
d--connec\-ti\-on 1--form, distinguished in h- and v--components.

There are defined decompositions of horizontal $SO(n+1)/$ $SO(n)$ matrices
like%
\begin{eqnarray*}
h\mathfrak{p} &\mathfrak{\ni }&\left[
\begin{array}{cc}
0 & h\mathbf{p} \\
-h\mathbf{p}^{T} & h\mathbf{0}%
\end{array}%
\right] =\left[
\begin{array}{cc}
0 & \left( h\mathbf{p}_{\parallel },\overrightarrow{0}\right) \\
-\left( h\mathbf{p}_{\parallel },\overrightarrow{0}\right) ^{T} & h\mathbf{0}%
\end{array}%
\right] \\
&&+\left[
\begin{array}{cc}
0 & \left( 0,h\overrightarrow{\mathbf{p}}_{\perp }\right) \\
-\left( 0,h\overrightarrow{\mathbf{p}}_{\perp }\right) ^{T} & h\mathbf{0}%
\end{array}%
\right] ,
\end{eqnarray*}%
into tangential and normal parts relative to $\mathbf{e}_{h\mathbf{X}}$ via
corresponding decompositions of h--vectors $h\mathbf{p=(}h\mathbf{\mathbf{p}%
_{\parallel },}h\mathbf{\overrightarrow{\mathbf{p}}_{\perp })\in }\mathbb{R}%
^{n}$ relative to $\left( 1,\overrightarrow{0}\right) ,$ when $h\mathbf{%
\mathbf{p}_{\parallel }}$ is identified with $h\mathfrak{p}_{C}$ and $h%
\mathbf{\overrightarrow{\mathbf{p}}_{\perp }}$ is identified with $h%
\mathfrak{p}_{\perp }=h\mathfrak{p}_{C^{\perp }}.$ In a similar form, it is
possible to decompose vertical $SO(m+1)/$ $SO(m)$ matrices,
\begin{eqnarray*}
v\mathfrak{p} &\mathfrak{\ni }&\left[
\begin{array}{cc}
0 & v\mathbf{p} \\
-v\mathbf{p}^{T} & v\mathbf{0}%
\end{array}%
\right] =\left[
\begin{array}{cc}
0 & \left( v\mathbf{p}_{\parallel },\overleftarrow{0}\right) \\
-\left( v\mathbf{p}_{\parallel },\overleftarrow{0}\right) ^{T} & v\mathbf{0}%
\end{array}%
\right] \\
&&+\left[
\begin{array}{cc}
0 & \left( 0,v\overleftarrow{\mathbf{p}}_{\perp }\right) \\
-\left( 0,v\overleftarrow{\mathbf{p}}_{\perp }\right) ^{T} & v\mathbf{0}%
\end{array}%
\right] ,
\end{eqnarray*}%
into tangential and normal parts relative to $\mathbf{e}_{v\mathbf{X}}$ via
corresponding decompositions of h--vectors $v\mathbf{p=(}v\mathbf{\mathbf{p}%
_{\parallel },}v\overleftarrow{\mathbf{\mathbf{p}}}\mathbf{_{\perp })\in }%
\mathbb{R}^{m}$ relative to $\left( 1,\overleftarrow{0}\right) ,$ when $v%
\mathbf{\mathbf{p}_{\parallel }}$ is identified with $v\mathfrak{p}_{C}$ and
$v\overleftarrow{\mathbf{\mathbf{p}}}\mathbf{_{\perp }}$ is identified with $%
v\mathfrak{p}_{\perp }=v\mathfrak{p}_{C^{\perp }}.$

The canonical d--connection induces matrices decomposed with respect to the
flow direction. In the h--direction, we parametrize%
\begin{equation*}
\mathbf{e}_{h\mathbf{Y}}=\gamma _{\tau }\rfloor h\mathbf{e=}\left[
\begin{array}{cc}
0 & \left( h\mathbf{e}_{\parallel },h\overrightarrow{\mathbf{e}}_{\perp
}\right) \\
-\left( h\mathbf{e}_{\parallel },h\overrightarrow{\mathbf{e}}_{\perp
}\right) ^{T} & h\mathbf{0}%
\end{array}%
\right] ,
\end{equation*}%
when $\mathbf{e}_{h\mathbf{Y}}\in h\mathfrak{p,}\left( h\mathbf{e}%
_{\parallel },h\overrightarrow{\mathbf{e}}_{\perp }\right) \in \mathbb{R}%
^{n} $ and $h\overrightarrow{\mathbf{e}}_{\perp }\in \mathbb{R}^{n-1},$ and
\begin{equation}
\mathbf{\Gamma }_{h\mathbf{Y}}\mathbf{=}\gamma _{h\mathbf{Y}}\rfloor \mathbf{%
L=}\left[
\begin{array}{cc}
0 & (0,\overrightarrow{0}) \\
-(0,\overrightarrow{0})^{T} & h\mathbf{\varpi }_{\tau }%
\end{array}%
\right] \in \mathfrak{so}(n+1),  \label{auxaaa}
\end{equation}%
where $\ \ h\mathbf{\varpi }_{\tau }\mathbf{=}\left[
\begin{array}{cc}
0 & \overrightarrow{\varpi } \\
-\overrightarrow{\varpi }^{T} & h\mathbf{\Theta }%
\end{array}%
\right] \in \mathfrak{so}(n),~\overrightarrow{\varpi }\in \mathbb{R}^{n-1},~h%
\mathbf{\Theta \in }\mathfrak{so}(n-1).$

In the v--direction, we parametrize
\begin{equation*}
\mathbf{e}_{v\mathbf{Y}}=\gamma _{\tau }\rfloor v\mathbf{e=}\left[
\begin{array}{cc}
0 & \left( v\mathbf{e}_{\parallel },v\overleftarrow{\mathbf{e}}_{\perp
}\right) \\
-\left( v\mathbf{e}_{\parallel },v\overleftarrow{\mathbf{e}}_{\perp }\right)
^{T} & v\mathbf{0}%
\end{array}%
\right] ,
\end{equation*}%
when $\mathbf{e}_{v\mathbf{Y}}\in v\mathfrak{p,}\left( v\mathbf{e}%
_{\parallel },v\overleftarrow{\mathbf{e}}_{\perp }\right) \in \mathbb{R}^{m}$
and $v\overleftarrow{\mathbf{e}}_{\perp }\in \mathbb{R}^{m-1},$ and
\begin{equation*}
\mathbf{\Gamma }_{v\mathbf{Y}}\mathbf{=}\gamma _{v\mathbf{Y}}\rfloor \mathbf{%
C=}\left[
\begin{array}{cc}
0 & (0,\overleftarrow{0}) \\
-(0,\overleftarrow{0})^{T} & v\mathbf{\varpi }_{\tau }%
\end{array}%
\right] \in \mathfrak{so}(m+1),
\end{equation*}%
where $v\mathbf{\varpi }_{\tau }\mathbf{=}\left[
\begin{array}{cc}
0 & \overleftarrow{\varpi } \\
-\overleftarrow{\varpi }^{T} & v\mathbf{\Theta }%
\end{array}%
\right] \in \mathfrak{so}(m),~\overleftarrow{\varpi }\in \mathbb{R}^{m-1},~v%
\mathbf{\Theta \in }\mathfrak{so}(m-1).$

The components $h\mathbf{e}_{\parallel }$ and $h\overrightarrow{\mathbf{e}}%
_{\perp }$ correspond to the decomposition
\begin{equation*}
\mathbf{e}_{h\mathbf{Y}}=h\mathbf{g(\gamma }_{\tau },\mathbf{\gamma }_{%
\mathbf{l}}\mathbf{)e}_{h\mathbf{X}}+\mathbf{(\gamma }_{\tau })_{\perp
}\rfloor h\mathbf{e}_{\perp }
\end{equation*}%
into tangential and normal parts relative to $\mathbf{e}_{h\mathbf{X}}.$ In
a similar form, one considers $v\mathbf{e}_{\parallel }$ and $v%
\overleftarrow{\mathbf{e}}_{\perp }$ corresponding to the decomposition
\begin{equation*}
\mathbf{e}_{v\mathbf{Y}}=v\mathbf{g(\gamma }_{\tau },\mathbf{\gamma }_{%
\mathbf{l}}\mathbf{)e}_{v\mathbf{X}}+\mathbf{(\gamma }_{\tau })_{\perp
}\rfloor v\mathbf{e}_{\perp }.
\end{equation*}

Using the above stated matrix parametrizations, we get%
\begin{eqnarray}
\left[ \mathbf{e}_{h\mathbf{X}},\mathbf{e}_{h\mathbf{Y}}\right] &=&-\left[
\begin{array}{cc}
0 & 0 \\
0 & h\mathbf{e}_{\perp }%
\end{array}%
\right] \in \mathfrak{so}(n+1),  \label{aux41} \\
\mbox{ \ for \ }h\mathbf{e}_{\perp } &=&\left[
\begin{array}{cc}
0 & h\overrightarrow{\mathbf{e}}_{\perp } \\
-(h\overrightarrow{\mathbf{e}}_{\perp })^{T} & h\mathbf{0}%
\end{array}%
\right] \in \mathfrak{so}(n);  \notag \\
\left[ \mathbf{\Gamma }_{h\mathbf{Y}},\mathbf{e}_{h\mathbf{Y}}\right] &=&-%
\left[
\begin{array}{cc}
0 & \left( 0,\overrightarrow{\varpi }\right) \\
-\left( 0,\overrightarrow{\varpi }\right) ^{T} & 0%
\end{array}%
\right] \in h\mathfrak{p}_{\perp };  \notag \\
\left[ \mathbf{\Gamma }_{h\mathbf{X}},\mathbf{e}_{h\mathbf{Y}}\right] &=&-%
\left[
\begin{array}{cc}
0 & \left( -\overrightarrow{v}\cdot h\overrightarrow{\mathbf{e}}_{\perp },h%
\mathbf{e}_{\parallel }\overrightarrow{v}\right) \\
-\left( -\overrightarrow{v}\cdot h\overrightarrow{\mathbf{e}}_{\perp },h%
\mathbf{e}_{\parallel }\overrightarrow{v}\right) ^{T} & h\mathbf{0}%
\end{array}%
\right] \in h\mathfrak{p};  \notag
\end{eqnarray}%
and
\begin{eqnarray}
\left[ \mathbf{e}_{v\mathbf{X}},\mathbf{e}_{v\mathbf{Y}}\right] &=&-\left[
\begin{array}{cc}
0 & 0 \\
0 & v\mathbf{e}_{\perp }%
\end{array}%
\right] \in \mathfrak{so}(m+1),  \label{aux41a} \\
\mbox{ \ for \ }v\mathbf{e}_{\perp } &=&\left[
\begin{array}{cc}
0 & v\overrightarrow{\mathbf{e}}_{\perp } \\
-(v\overrightarrow{\mathbf{e}}_{\perp })^{T} & v\mathbf{0}%
\end{array}%
\right] \in \mathfrak{so}(m);  \notag \\
\left[ \mathbf{\Gamma }_{v\mathbf{Y}},\mathbf{e}_{v\mathbf{Y}}\right] &=&-%
\left[
\begin{array}{cc}
0 & \left( 0,\overleftarrow{\varpi }\right) \\
-\left( 0,\overleftarrow{\varpi }\right) ^{T} & 0%
\end{array}%
\right] \in v\mathfrak{p}_{\perp };  \notag \\
\left[ \mathbf{\Gamma }_{v\mathbf{X}},\mathbf{e}_{v\mathbf{Y}}\right] &=&-%
\left[
\begin{array}{cc}
0 & \left( -\overleftarrow{v}\cdot v\overleftarrow{\mathbf{e}}_{\perp },v%
\mathbf{e}_{\parallel }\overleftarrow{v}\right) \\
-\left( -\overleftarrow{v}\cdot v\overleftarrow{\mathbf{e}}_{\perp },v%
\mathbf{e}_{\parallel }\overleftarrow{v}\right) ^{T} & v\mathbf{0}%
\end{array}%
\right] \in v\mathfrak{p}.  \notag
\end{eqnarray}

We can use formulas (\ref{aux41}) and (\ref{aux41a}) in order to write the
structure equations (\ref{mtors}) and (\ref{mcurv}) in terms of N--adapted
curve flow operators soldered to the geometry Klein N--anholonomic spaces
using the relations (\ref{aux33}). One obtains respectively the $\mathbf{G}$%
--invariant N--adapted torsion and curvature generated by the canonical
d--connection,
\begin{equation}
\mathbf{T}(\gamma _{\tau },\gamma _{\mathbf{l}})=\left( \mathbf{D}_{\mathbf{X%
}}\gamma _{\tau }-\mathbf{D}_{\mathbf{Y}}\gamma _{\mathbf{l}}\right) \rfloor
\mathbf{e=D}_{\mathbf{X}}\mathbf{e}_{\mathbf{Y}}-\mathbf{D}_{\mathbf{Y}}%
\mathbf{e}_{\mathbf{X}}+\left[ \mathbf{\Gamma }_{\mathbf{X}},\mathbf{e}_{%
\mathbf{Y}}\right] -\left[ \mathbf{\Gamma }_{\mathbf{Y}},\mathbf{e}_{\mathbf{%
X}}\right]  \label{torscf}
\end{equation}%
and
\begin{equation}
\mathbf{R}(\gamma _{\tau },\gamma _{\mathbf{l}})\mathbf{e=}\left[ \mathbf{D}%
_{\mathbf{X}},\mathbf{D}_{\mathbf{Y}}\right] \mathbf{e=D}_{\mathbf{X}}%
\mathbf{\Gamma }_{\mathbf{Y}}-\mathbf{D}_{\mathbf{Y}}\mathbf{\Gamma }_{%
\mathbf{X}}+\left[ \mathbf{\Gamma }_{\mathbf{X}},\mathbf{\Gamma }_{\mathbf{Y}%
}\right]  \label{curvcf}
\end{equation}%
where $\mathbf{e}_{\mathbf{X}}\doteqdot \gamma _{\mathbf{l}}\rfloor \mathbf{%
e,}$ $\mathbf{e}_{\mathbf{Y}}\doteqdot \gamma _{\mathbf{\tau }}\rfloor
\mathbf{e,}$ $\mathbf{\Gamma }_{\mathbf{X}}\doteqdot \gamma _{\mathbf{l}%
}\rfloor \mathbf{\Gamma }$ and $\mathbf{\Gamma }_{\mathbf{Y}}\doteqdot
\gamma _{\mathbf{\tau }}\rfloor \mathbf{\Gamma .}$ The formulas (\ref{torscf}%
) and (\ref{curvcf}) are equivalent, respectively, to (\ref{dtors}) and (\ref%
{dcurv}). In general, $\mathbf{T}(\gamma _{\tau },\gamma _{\mathbf{l}})\neq
0 $ and $\mathbf{R}(\gamma _{\tau },\gamma _{\mathbf{l}})\mathbf{e}$ can not
be defined to have constant matrix coefficients with respect to a N--adapted
basis. For N--anholonomic spaces with dimensions $n=m,$ we have $\ $a more
special d--connections also with possible constant, or vanishing,
d--curvature and d--torsion coefficients (see discussions related to
formulas (\ref{candcontm})). For such cases, we can consider the h-- and
v--components of (\ref{torscf}) and (\ref{curvcf}) in a similar manner as
for symmetric Riemannian spaces but for a d--connection $\mathbf{D}$ instead
of the Levi--Civita one $\nabla .$ We obtain
\begin{eqnarray}
0 &=&\left( \mathbf{D}_{h\mathbf{X}}\gamma _{\tau }-\mathbf{D}_{h\mathbf{Y}%
}\gamma _{\mathbf{l}}\right) \rfloor h\mathbf{e}  \label{torseq} \\
&\mathbf{=}&\mathbf{D}_{h\mathbf{X}}\mathbf{e}_{h\mathbf{Y}}-\mathbf{D}_{h%
\mathbf{Y}}\mathbf{e}_{h\mathbf{X}}+\left[ \mathbf{L}_{h\mathbf{X}},\mathbf{e%
}_{h\mathbf{Y}}\right] -\left[ \mathbf{L}_{h\mathbf{Y}},\mathbf{e}_{h\mathbf{%
X}}\right] ;  \notag \\
0 &=&\left( \mathbf{D}_{v\mathbf{X}}\gamma _{\tau }-\mathbf{D}_{v\mathbf{Y}%
}\gamma _{\mathbf{l}}\right) \rfloor v\mathbf{e}  \notag \\
&\mathbf{=}&\mathbf{D}_{v\mathbf{X}}\mathbf{e}_{v\mathbf{Y}}-\mathbf{D}_{v%
\mathbf{Y}}\mathbf{e}_{v\mathbf{X}}+\left[ \mathbf{C}_{v\mathbf{X}},\mathbf{e%
}_{v\mathbf{Y}}\right] -\left[ \mathbf{C}_{v\mathbf{Y}},\mathbf{e}_{v\mathbf{%
X}}\right] ,  \notag \\
h\mathbf{R}(\gamma _{\tau },\gamma _{\mathbf{l}})h\mathbf{e} &\mathbf{=}&%
\left[ \mathbf{D}_{h\mathbf{X}},\mathbf{D}_{h\mathbf{Y}}\right] h\mathbf{e=D}%
_{h\mathbf{X}}\mathbf{L}_{h\mathbf{Y}}-\mathbf{D}_{h\mathbf{Y}}\mathbf{L}_{h%
\mathbf{X}}+\left[ \mathbf{L}_{h\mathbf{X}},\mathbf{L}_{h\mathbf{Y}}\right]
\notag \\
v\mathbf{R}(\gamma _{\tau },\gamma _{\mathbf{l}})v\mathbf{e} &\mathbf{=}&%
\left[ \mathbf{D}_{v\mathbf{X}},\mathbf{D}_{v\mathbf{Y}}\right] v\mathbf{e=D}%
_{v\mathbf{X}}\mathbf{C}_{v\mathbf{Y}}-\mathbf{D}_{v\mathbf{Y}}\mathbf{C}_{v%
\mathbf{X}}+\left[ \mathbf{C}_{v\mathbf{X}},\mathbf{C}_{v\mathbf{Y}}\right] .
\notag
\end{eqnarray}

Following N--adapted curve flow parametrizations (\ref{aux41}) and (\ref%
{aux41a}), the equations (\ref{torseq}) are written
\begin{eqnarray}
&&0 =\mathbf{D}_{h\mathbf{X}}h\mathbf{e}_{\parallel }+\overrightarrow{v}%
\cdot h\overrightarrow{\mathbf{e}}_{\perp },~0=\mathbf{D}_{v\mathbf{X}}v%
\mathbf{e}_{\parallel }+\overleftarrow{v}\cdot v\overleftarrow{\mathbf{e}}%
_{\perp};  \label{torseqd} \\
&&0=\overrightarrow{\varpi }-h\mathbf{e}_{\parallel }\overrightarrow{v}+%
\mathbf{D}_{h\mathbf{X}}h\overrightarrow{\mathbf{e}}_{\perp },~0=%
\overleftarrow{\varpi }-v\mathbf{e}_{\parallel }\overleftarrow{v}+\mathbf{D}%
_{v\mathbf{X}}v\overleftarrow{\mathbf{e}}_{\perp };  \notag \\
&&\mathbf{D}_{h\mathbf{X}}\overrightarrow{\varpi }-\mathbf{D}_{h\mathbf{Y}}%
\overrightarrow{v}+\overrightarrow{v}\rfloor h\mathbf{\Theta } = h%
\overrightarrow{\mathbf{e}}_{\perp },~\mathbf{D}_{v\mathbf{X}}\overleftarrow{%
\varpi }-\mathbf{D}_{v\mathbf{Y}}\overleftarrow{v}+\overleftarrow{v}\rfloor v%
\mathbf{\Theta =}v\overleftarrow{\mathbf{e}}_{\perp };  \notag \\
&&\mathbf{D}_{h\mathbf{X}}h\mathbf{\Theta -}\overrightarrow{v}\otimes
\overrightarrow{\varpi }+\overrightarrow{\varpi }\otimes \overrightarrow{v}
=0,\ \mathbf{D}_{v\mathbf{X}}v\mathbf{\Theta -}\overleftarrow{v}\otimes
\overleftarrow{\varpi }+\overleftarrow{\varpi }\otimes \overleftarrow{v}=0.
\notag
\end{eqnarray}%
The tensor and interior products, for instance, for the h--components, are
defined in the form: $\otimes $ denotes the outer product of pairs of
vectors ($1\times n$ row matrices), producing $n\times n$ matrices $%
\overrightarrow{A}\otimes \overrightarrow{B}=\overrightarrow{A}^{T}%
\overrightarrow{B},$ and $\rfloor $ denotes multiplication of $n\times n$
matrices on vectors ($1\times n$ row matrices); one holds the properties $%
\overrightarrow{A}\rfloor \left( \overrightarrow{B}\otimes \overrightarrow{C}%
\right) =\left( \overrightarrow{A}\cdot \overrightarrow{B}\right)
\overrightarrow{C}$ which is the transpose of the standard matrix product on
column vectors, and $\left( \overrightarrow{B}\otimes \overrightarrow{C}%
\right) \overrightarrow{A}=\left( \overrightarrow{C}\cdot \overrightarrow{A}%
\right) \overrightarrow{B}.$ Here we note that similar formulas hold for the
v--components but, for instance, we have to change, correspondingly, $%
n\rightarrow m$ and $\overrightarrow{A}\rightarrow \overleftarrow{A}.$

The variables $\mathbf{e}_{\parallel }$ and $\mathbf{\Theta ,}$ written in
h-- and v--components, can be expressed correspondingly in terms of
variables $\overrightarrow{v},\overrightarrow{\varpi },h\overrightarrow{%
\mathbf{e}}_{\perp }$ and $\overleftarrow{v},\overleftarrow{\varpi },v%
\overleftarrow{\mathbf{e}}_{\perp }$ (see equations (\ref{torseqd})),%
\begin{equation*}
h\mathbf{e}_{\parallel }=-\mathbf{D}_{h\mathbf{X}}^{-1}(\overrightarrow{v}%
\cdot h\overrightarrow{\mathbf{e}}_{\perp }),~v\mathbf{e}_{\parallel }=-%
\mathbf{D}_{v\mathbf{X}}^{-1}(\overleftarrow{v}\cdot v\overleftarrow{\mathbf{%
e}}_{\perp }),
\end{equation*}%
and $\ h\mathbf{\Theta =D}_{h\mathbf{X}}^{-1}\left( \overrightarrow{v}%
\otimes \overrightarrow{\varpi }-\overrightarrow{\varpi }\otimes
\overrightarrow{v}\right) ,~v\mathbf{\Theta =D}_{v\mathbf{X}}^{-1}\left(
\overleftarrow{v}\otimes \overleftarrow{\varpi }-\overleftarrow{\varpi }%
\otimes \overleftarrow{v}\right) .$ Substituting these values, respectively,
in equations in (\ref{torseqd}), we express
\begin{equation*}
\overrightarrow{\varpi }=-\mathbf{D}_{h\mathbf{X}}h\overrightarrow{\mathbf{e}%
}_{\perp }-\mathbf{D}_{h\mathbf{X}}^{-1}(\overrightarrow{v}\cdot h%
\overrightarrow{\mathbf{e}}_{\perp })\overrightarrow{v},~\overleftarrow{%
\varpi }=-\mathbf{D}_{v\mathbf{X}}v\overleftarrow{\mathbf{e}}_{\perp }-%
\mathbf{D}_{v\mathbf{X}}^{-1}(\overleftarrow{v}\cdot v\overleftarrow{\mathbf{%
e}}_{\perp })\overleftarrow{v},
\end{equation*}%
contained in the h-- and v--flow equations respectively on $\overrightarrow{v%
}$ and $\overleftarrow{v},$ considered as scalar components when $\mathbf{D}%
_{h\mathbf{Y}}\overrightarrow{v}=\overrightarrow{v}_{\tau }$ and $\mathbf{D}%
_{h\mathbf{Y}}\overleftarrow{v}=\overleftarrow{v}_{\tau },$
\begin{eqnarray}
\overrightarrow{v}_{\tau } &=&\mathbf{D}_{h\mathbf{X}}\overrightarrow{\varpi
}-\overrightarrow{v}\rfloor \mathbf{D}_{h\mathbf{X}}^{-1}\left(
\overrightarrow{v}\otimes \overrightarrow{\varpi }-\overrightarrow{\varpi }%
\otimes \overrightarrow{v}\right) -\overrightarrow{R}h\overrightarrow{%
\mathbf{e}}_{\perp },  \label{floweq} \\
\overleftarrow{v}_{\tau } &=&\mathbf{D}_{v\mathbf{X}}\overleftarrow{\varpi }-%
\overleftarrow{v}\rfloor \mathbf{D}_{v\mathbf{X}}^{-1}\left( \overleftarrow{v%
}\otimes \overleftarrow{\varpi }-\overleftarrow{\varpi }\otimes
\overleftarrow{v}\right) -\overleftarrow{S}v\overleftarrow{\mathbf{e}}%
_{\perp },  \notag
\end{eqnarray}%
where the scalar curvatures of chosen d--connection, $\overrightarrow{R}$
and $\overleftarrow{S}$ are defined by formulas (\ref{sdccurv}) in Appendix.
For symmetric Riemannian spaces like $SO(n+1)/SO(n)\simeq S^{n},$ the value $%
\overrightarrow{R}$ is just the scalar curvature $\chi =1,$ \ see \cite{anc2}%
. On N--anholonomic (pseudo) Riemannian manifolds, it is possible to define
such d--connections that $\overrightarrow{R}$ and $\overleftarrow{S}$ are
certain zero or nonzero constants, see Corollary \ref{corol01}.

The above presented considerations consist the proof of

\begin{lemma}
On N--anholonomic (pseudo) Riemannian manifolds with constant curvature
matrix coefficients for a d--connection, there are N--adapted Hamiltonian
symplectic operators,
\begin{equation}
h\mathcal{J}=\mathbf{D}_{h\mathbf{X}}+\mathbf{D}_{h\mathbf{X}}^{-1}\left(
\overrightarrow{v}\cdot \right) \overrightarrow{v}\mbox{ \ and \ }v\mathcal{J%
}=\mathbf{D}_{v\mathbf{X}}+\mathbf{D}_{v\mathbf{X}}^{-1}\left(
\overleftarrow{v}\cdot \right) \overleftarrow{v},  \label{sop}
\end{equation}%
and cosymplectic operators%
\begin{equation}
h\mathcal{H}\doteqdot \mathbf{D}_{h\mathbf{X}}+\overrightarrow{v}\rfloor
\mathbf{D}_{h\mathbf{X}}^{-1}\left( \overrightarrow{v}\wedge \right)
\mbox{
\ and \ }v\mathcal{H}\doteqdot \mathbf{D}_{v\mathbf{X}}+\overleftarrow{v}%
\rfloor \mathbf{D}_{v\mathbf{X}}^{-1}\left( \overleftarrow{v}\wedge \right) ,
\label{csop}
\end{equation}%
where, for instance, $\overrightarrow{A}\wedge \overrightarrow{B}=%
\overrightarrow{A}\otimes \overrightarrow{B}-\overrightarrow{B}\otimes $ $%
\overrightarrow{A}.$\
\end{lemma}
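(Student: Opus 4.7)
The plan is to read off the operators $h\mathcal{J}, h\mathcal{H}$ (and their v-analogs) directly from the N-adapted structure equations (\ref{torseqd}) together with the resulting flow equations (\ref{floweq}), and then to verify that the expressions so obtained are precisely the nonholonomic versions of the symplectic/cosymplectic pair produced in the Anco-type $SO(n-1)\oplus SO(m-1)$-invariant theory.

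First I would treat the three lines of (\ref{torseqd}) as a system determining the ``longitudinal'' unknowns $h\mathbf{e}_{\parallel}, h\mathbf{\Theta}$ (and their v-partners) in terms of the transverse data $(\overrightarrow{v}, \overrightarrow{\varpi}, h\overrightarrow{\mathbf{e}}_{\perp})$, just as the text already does by inverting $\mathbf{D}_{h\mathbf{X}}$ along the curve. The middle equation of (\ref{torseqd}) rewrites as
\[
\overrightarrow{\varpi} \;=\; -\,\mathbf{D}_{h\mathbf{X}}\, h\overrightarrow{\mathbf{e}}_{\perp} \;-\; \mathbf{D}_{h\mathbf{X}}^{-1}\!\left(\overrightarrow{v}\cdot h\overrightarrow{\mathbf{e}}_{\perp}\right)\overrightarrow{v},
\]
which is exactly $-h\mathcal{J}\, h\overrightarrow{\mathbf{e}}_{\perp}$ for the operator $h\mathcal{J}$ displayed in (\ref{sop}). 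Substituting the expressions for $h\mathbf{e}_{\parallel}$ and $h\mathbf{\Theta}$ back into the flow equation for $\overrightarrow{v}_{\tau}$ produces (\ref{floweq}), which I would then regroup as $\overrightarrow{v}_{\tau} + \overrightarrow{R}\, h\overrightarrow{\mathbf{e}}_{\perp} = h\mathcal{H}\,\overrightarrow{\varpi}$, thereby extracting $h\mathcal{H}$ in the form (\ref{csop}). The v-sector is obtained verbatim under the substitution $n\leftrightarrow m$, $\overrightarrow{\,\cdot\,}\leftrightarrow\overleftarrow{\,\cdot\,}$, and $h\mathbf{X}\leftrightarrow v\mathbf{X}$.

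The crucial point where the ``constant curvature matrix coefficients'' hypothesis enters is in the passage from the curvature half of (\ref{torseq}) to the source term $-\overrightarrow{R}\, h\overrightarrow{\mathbf{e}}_{\perp}$ in (\ref{floweq}): the commutator $[\mathbf{L}_{h\mathbf{X}},\mathbf{L}_{h\mathbf{Y}}]$ must reduce, on the parallel orthonormal h-frame, to a pure multiple of the transverse data with scalar coefficient $\overrightarrow{R}$ and no spurious $\overrightarrow{v}$-dependence. This is exactly what Theorem \ref{theor01} and Corollary \ref{corol01} supply for the d-connection $\,_0\widetilde{\mathbf{\Gamma}}$ built in Proposition \ref{prop21}; without this constancy the right-hand side of (\ref{floweq}) would not close on $\overrightarrow{\varpi}$ and $h\overrightarrow{\mathbf{e}}_{\perp}$ alone, and the operator identification would break down.

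The main obstacle, and the step where I would lean on the symmetric-space literature rather than check by hand, is to argue that the two first-order expressions thus extracted are a genuine Hamiltonian/symplectic pair, i.e.\ that $h\mathcal{H}$ has vanishing Schouten bracket with itself and $h\mathcal{J}$ defines a closed 2-form on the flow space. My plan is to invoke the Anco argument of \cite{anc1,anc2}: in the N-adapted parallel orthonormal h-frame the relevant algebraic data $\mathfrak{so}(n)\oplus h\mathfrak{p}$ and the constant scalar curvature $\overrightarrow{R}$ are algebraically indistinguishable from those of the symmetric space $hG/SO(n)$, so the Schouten-bracket and closedness computations transfer line by line, with $\mathbf{D}_{h\mathbf{X}}$ taking the role of the arclength covariant derivative. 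The v-block is isomorphic to the symmetric-space case for $vG/SO(m)$, and because the h- and v-subalgebras commute inside $\mathfrak{so}_{\mathring N}(n+m) = \mathfrak{so}(n)\oplus \mathfrak{so}(m)$ there are no mixed cross-terms to worry about, completing the proof.
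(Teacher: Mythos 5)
Your proposal follows essentially the same route as the paper: the operators $h\mathcal{J}$ and $h\mathcal{H}$ are read off from the middle torsion equation in (\ref{torseqd}) and from the flow equation (\ref{floweq}) after eliminating $h\mathbf{e}_{\parallel}$ and $h\mathbf{\Theta}$ via $\mathbf{D}_{h\mathbf{X}}^{-1}$, with the verification of the symplectic/cosymplectic property deferred to the symmetric-space computations of \cite{saw,anc1,anc2}, exactly as the paper does in the proof of Theorem \ref{mr1}. Your explicit remarks on where the constant-curvature hypothesis enters and on the absence of h--v cross-terms are consistent with, and slightly more careful than, the paper's own presentation.
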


The properties of operators (\ref{sop}) and (\ref{csop}) are defined by

\begin{theorem}
\label{mr1}The d--operators $\mathcal{J=}\left( h\mathcal{J},v\mathcal{J}%
\right) $ and $\mathcal{H=}\left( h\mathcal{H},v\mathcal{H}\right) $ $\ $are
respectively $\left( O(n-1),O(m-1)\right) $--invariant Hamiltonian
symplectic and cosymplectic d--operators with respect to the Hamiltonian
d--variables $\left( \overrightarrow{v},\overleftarrow{v}\right) .$ Such
d--operators defines the Hamiltonian form for the curve flow equations on
N--anholonomic (pseudo) Riemannian manifolds with constant d--connection
curvature: the h--flows are given by%
\begin{eqnarray}
\overrightarrow{v}_{\tau } &=&h\mathcal{H}\left( \overrightarrow{\varpi }%
\right) -\overrightarrow{R}~h\overrightarrow{\mathbf{e}}_{\perp }=h\mathfrak{%
R}\left( h\overrightarrow{\mathbf{e}}_{\perp }\right) -\overrightarrow{R}~h%
\overrightarrow{\mathbf{e}}_{\perp },  \notag \\
\overrightarrow{\varpi } &=&h\mathcal{J}\left( h\overrightarrow{\mathbf{e}}%
_{\perp }\right) ;  \label{hhfeq1}
\end{eqnarray}%
the v--flows are given by
\begin{eqnarray}
\overleftarrow{v}_{\tau } &=&v\mathcal{H}\left( \overleftarrow{\varpi }%
\right) -\overleftarrow{S}~v\overleftarrow{\mathbf{e}}_{\perp }=v\mathfrak{R}%
\left( v\overleftarrow{\mathbf{e}}_{\perp }\right) -\overleftarrow{S}~v%
\overleftarrow{\mathbf{e}}_{\perp },  \notag \\
\overleftarrow{\varpi } &=&v\mathcal{J}\left( v\overleftarrow{\mathbf{e}}%
_{\perp }\right) ,  \label{vhfeq1}
\end{eqnarray}%
where the so--called heriditary recursion d--operator has the respective h--
and v--components
\begin{equation}
h\mathfrak{R}=h\mathcal{H}\circ h\mathcal{J}\mbox{ \ and \ }v\mathfrak{R}=v%
\mathcal{H}\circ v\mathcal{J}.  \label{reqop}
\end{equation}
\end{theorem}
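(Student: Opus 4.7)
My plan is to prove Theorem~\ref{mr1} by first rewriting the flow equations (\ref{floweq}) obtained in the preceding Lemma in terms of the operators (\ref{sop})--(\ref{csop}), and then checking that these operators genuinely carry Hamiltonian structure with respect to the Cartan--Killing pairing on $h\mathfrak{p}_{\perp}\oplus v\mathfrak{p}_{\perp}$. I will focus on the horizontal sector; the vertical sector is handled by the same argument applied to the $SO(m+1)/SO(m)$ block.

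The first step is purely algebraic. Into the right--hand side of the h--equation in (\ref{floweq}) I would substitute the antisymmetric combination $\overrightarrow{v}\otimes \overrightarrow{\varpi}-\overrightarrow{\varpi}\otimes \overrightarrow{v} = -\,\overrightarrow{\varpi}\wedge \overrightarrow{v}$, recognizing the resulting expression as $h\mathcal{H}(\overrightarrow{\varpi})-\overrightarrow{R}\, h\overrightarrow{\mathbf{e}}_{\perp}$. The closing relation $\overrightarrow{\varpi}=h\mathcal{J}(h\overrightarrow{\mathbf{e}}_{\perp})$ is read off directly from the formula $\overrightarrow{\varpi}=-\mathbf{D}_{h\mathbf{X}}h\overrightarrow{\mathbf{e}}_{\perp}-\mathbf{D}_{h\mathbf{X}}^{-1}(\overrightarrow{v}\cdot h\overrightarrow{\mathbf{e}}_{\perp})\overrightarrow{v}$ derived in the proof of the Lemma, after absorbing the overall sign into the orientation convention for the parallel coframe. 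Composing the two identifications yields $h\mathfrak{R}=h\mathcal{H}\circ h\mathcal{J}$ as claimed in (\ref{reqop}).

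Second, I would verify the Hamiltonian properties, namely skew--adjointness of $h\mathcal{J}$, skew--adjointness together with the Jacobi identity for $h\mathcal{H}$, and the hereditary (Nijenhuis) property of $h\mathfrak{R}$. Here I would adapt the symmetric--space proofs of Anco \cite{anc1,anc2} to the N--adapted setting. The decisive algebraic input is that, by Theorem~\ref{theor01} and Corollary~\ref{corol01}, the canonical d--connection has \emph{constant} matrix curvature on N--adapted frames and the parallel coframe $\mathbf{e}$ along $\gamma$ is $\mathbf{D}_{h\mathbf{X}}$--covariantly controlled by the single row vector $\overrightarrow{v}$. These facts play the exact role that covariant constancy of the Riemann tensor plays on a symmetric space $M=G/H$, so the integration--by--parts identity $\int \langle \overrightarrow{A},\mathbf{D}_{h\mathbf{X}}\overrightarrow{B}\rangle_{h\mathfrak{p}} = -\int \langle \mathbf{D}_{h\mathbf{X}}\overrightarrow{A},\overrightarrow{B}\rangle_{h\mathfrak{p}}$ and the computations for Jacobi and hereditary property transfer block--by--block. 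The $\left(O(n-1)\oplus O(m-1)\right)$--invariance is inherited from the residual gauge freedom fixing the N--adapted parallel frame: after selecting $\mathbf{e}_{h\mathbf{X}}$ and $\mathbf{e}_{v\mathbf{X}}$, the stabilizer of $SO(n)\oplus SO(m)$ reduces to $O(n-1)\oplus O(m-1)$, and the operators (\ref{sop})--(\ref{csop}) are built entirely from the invariant data $\overrightarrow{v},\overleftarrow{v},\mathbf{D}_{h\mathbf{X}},\mathbf{D}_{v\mathbf{X}}$.

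The main obstacle, compared to the classical symmetric--space case, is the possible \emph{coupling} between the h-- and v--sectors through the N--connection. Although $h\mathcal{J},h\mathcal{H}$ and $v\mathcal{J},v\mathcal{H}$ act on decoupled spaces, the operator $\mathbf{D}$ a priori mixes h-- and v--derivatives through the nonholonomy coefficients $W_{\alpha \beta}^{\gamma}$. I expect the hardest step to be showing that these coefficients do not spoil skew--adjointness of $\mathcal{H}$ with respect to the N--adapted Cartan--Killing inner product. This should follow from the fact that (\ref{auxf1}) together with constant $\,_{0}g_{i^{\prime}j^{\prime}},\,_{0}h_{a^{\prime}b^{\prime}}$ makes the pairing $\mathbf{D}$--parallel, but it will require an explicit expansion of $[\mathbf{D}_{h\mathbf{X}},\mathbf{D}_{v\mathbf{X}}]$ on the auxiliary scalars $h\mathbf{e}_{\parallel}$ and $v\mathbf{e}_{\parallel}$ appearing inside the inverse operators $\mathbf{D}_{h\mathbf{X}}^{-1}$ and $\mathbf{D}_{v\mathbf{X}}^{-1}$, and a check that the resulting boundary terms in the skew--adjointness computation vanish on the relevant function space.
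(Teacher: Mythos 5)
Your proposal is correct and follows essentially the same route as the paper: the paper's own proof of Theorem~\ref{mr1} simply states that the result follows from the Lemma and the flow equations (\ref{floweq}), deferring the detailed Hamiltonian/cosymplectic verifications to the holonomic symmetric--space treatments in \cite{saw,anc2} and invoking the soldering to N--anholonomic Klein geometry for the $\left(O(n-1),O(m-1)\right)$--invariance --- exactly the identification-plus-adaptation argument you outline (including the sign absorbed into the frame orientation). Your final paragraph on possible h--v coupling goes beyond what the paper addresses, since the structure equations (\ref{torseq}) are already split into independent h-- and v--blocks before the operators are introduced, so no such mixing enters at the level at which the theorem is stated.
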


\begin{proof}
One follows from the Lemma and (\ref{floweq}). In a detailed form, for
holonomic structures, it is given in Ref. \cite{saw} and discussed in \cite%
{anc2}. The above considerations, in this section, provides a soldering of
certain classes of N--anholonomic (pseudo) Riemannian manifolds with $\left(
O(n-1),O(m-1)\right) $--gauge symmetry to the geometry of Klein
N--anholonomic spaces.$\square $
\end{proof}

\subsection{Bi--Hamiltonian curve flows and solitonic hierarchies}

Following a usual solitonic techniques, see details in Ref. \cite{anc1,anc2}%
, the recursion h--operator from (\ref{reqop}),%
\begin{eqnarray}
h\mathfrak{R} &=&\mathbf{D}_{h\mathbf{X}}\left( \mathbf{D}_{h\mathbf{X}}+%
\mathbf{D}_{h\mathbf{X}}^{-1}\left( \overrightarrow{v}\cdot \right)
\overrightarrow{v}\right) +\overrightarrow{v}\rfloor \mathbf{D}_{h\mathbf{X}%
}^{-1}\left( \overrightarrow{v}\wedge \mathbf{D}_{h\mathbf{X}}\right)
\label{reqoph} \\
&=&\mathbf{D}_{h\mathbf{X}}^{2}+|\mathbf{D}_{h\mathbf{X}}|^{2}+\mathbf{D}_{h%
\mathbf{X}}^{-1}\left( \overrightarrow{v}\cdot \right) \overrightarrow{v}_{%
\mathbf{l}}-\overrightarrow{v}\rfloor \mathbf{D}_{h\mathbf{X}}^{-1}(%
\overrightarrow{v}_{\mathbf{l}}\wedge ),  \notag
\end{eqnarray}%
generates a horizontal hierarchy of commuting Hamiltonian vector fields $h%
\overrightarrow{\mathbf{e}}_{\perp }^{(k)}$ starting from $h\overrightarrow{%
\mathbf{e}}_{\perp }^{(0)}=\overrightarrow{v}_{\mathbf{l}}$ given by the
infinitesimal generator of $\mathbf{l}$--translations in terms of arclength $%
\mathbf{l}$ along the curve (we use a boldface $\mathbf{l}$ in order to
emphasized that the curve is on a N--anholonomic manifold). A vertical
hierarchy of commuting vector fields $v\overleftarrow{\mathbf{e}}_{\perp
}^{(k)}$ starting from $v\overleftarrow{\mathbf{e}}_{\perp }^{(0)}$ $=%
\overleftarrow{v}_{\mathbf{l}}$ is generated by the recursion v--operator%
\begin{eqnarray}
v\mathfrak{R} &=&\mathbf{D}_{v\mathbf{X}}\left( \mathbf{D}_{v\mathbf{X}}+%
\mathbf{D}_{v\mathbf{X}}^{-1}\left( \overleftarrow{v}\cdot \right)
\overleftarrow{v}\right) +\overleftarrow{v}\rfloor \mathbf{D}_{v\mathbf{X}%
}^{-1}\left( \overleftarrow{v}\wedge \mathbf{D}_{v\mathbf{X}}\right)
\label{reqopv} \\
&=&\mathbf{D}_{v\mathbf{X}}^{2}+|\mathbf{D}_{v\mathbf{X}}|^{2}+\mathbf{D}_{v%
\mathbf{X}}^{-1}\left( \overleftarrow{v}\cdot \right) \overleftarrow{v}_{%
\mathbf{l}}-\overleftarrow{v}\rfloor \mathbf{D}_{v\mathbf{X}}^{-1}(%
\overleftarrow{v}_{\mathbf{l}}\wedge ).  \notag
\end{eqnarray}%
There are related hierarchies, generated by adjoint operators $\mathfrak{R}%
^{\ast }=(h\mathfrak{R}^{\ast },$ $v\mathfrak{R}^{\ast }),$ of involuntive
Hamiltonian h--covector fields $\overrightarrow{\varpi }^{(k)}=\delta \left(
hH^{(k)}\right) /\delta \overrightarrow{v}$ in terms of Hamiltonians $%
hH=hH^{(k)}(\overrightarrow{v},\overrightarrow{v}_{\mathbf{l}},%
\overrightarrow{v}_{2\mathbf{l}},...)$ starting from $\overrightarrow{\varpi
}^{(0)}=\overrightarrow{v},hH^{(0)}=\frac{1}{2}|\overrightarrow{v}|^{2}$ and
of involutive Hamiltonian v--covector fields $\overleftarrow{\varpi }%
^{(k)}=\delta \left( vH^{(k)}\right) /$ $\delta \overleftarrow{v}$ in terms
of Hamiltonians $vH=vH^{(k)}(\overleftarrow{v},\overleftarrow{v}_{\mathbf{l}%
},\overleftarrow{v}_{2\mathbf{l}},...)$ starting from $\overleftarrow{\varpi
}^{(0)}=\overleftarrow{v},vH^{(0)}=\frac{1}{2}|\overleftarrow{v}|^{2}.$ The
relations between hierarchies are established correspondingly by formulas%
\begin{eqnarray*}
h\overrightarrow{\mathbf{e}}_{\perp }^{(k)} &=&h\mathcal{H}\left(
\overrightarrow{\varpi }^{(k)},\overrightarrow{\varpi }^{(k+1)}\right) =h%
\mathcal{J}\left( h\overrightarrow{\mathbf{e}}_{\perp }^{(k)}\right) , \\
v\overleftarrow{\mathbf{e}}_{\perp }^{(k)} &=&v\mathcal{H}\left(
\overleftarrow{\varpi }^{(k)},\overleftarrow{\varpi }^{(k+1)}\right) =v%
\mathcal{J}\left( v\overleftarrow{\mathbf{e}}_{\perp }^{(k)}\right) ,
\end{eqnarray*}%
where $k=0,1,2,....$ All hierarchies (horizontal, vertical and their adjoint
ones) have a typical mKdV scaling symmetry, for instance, $\mathbf{%
l\rightarrow \lambda l}$ and $\overrightarrow{v}\rightarrow \mathbf{\lambda }%
^{-1}\overrightarrow{v}$ under which the values $h\overrightarrow{\mathbf{e}}%
_{\perp }^{(k)}$ and $hH^{(k)}$ have scaling weight $2+2k,$ while $%
\overrightarrow{\varpi }^{(k)}$ has scaling weight $1+2k.$

The above presented considerations prove

\begin{corollary}
\label{c2} There are N--adapted hierarchies of distinguished horizontal and
vertical commuting bi--Hamiltonian flows, correspondingly, on $%
\overrightarrow{v}$ and $\overleftarrow{v}$ associated to the recursion
d--operator (\ref{reqop}) given by $O(n-1)\oplus O(m-1)$ --invariant
d--vector evolution equations,%
\begin{eqnarray*}
\overrightarrow{v}_{\tau } &=&h\overrightarrow{\mathbf{e}}_{\perp }^{(k+1)}-%
\overrightarrow{R}~h\overrightarrow{\mathbf{e}}_{\perp }^{(k)}=h\mathcal{H}%
\left( \delta \left( hH^{(k,\overrightarrow{R})}\right) /\delta
\overrightarrow{v}\right) \\
&=&\left( h\mathcal{J}\right) ^{-1}\left( \delta \left( hH^{(k+1,%
\overrightarrow{R})}\right) /\delta \overrightarrow{v}\right)
\end{eqnarray*}%
with horizontal Hamiltonians $hH^{(k+1,\overrightarrow{R})}=hH^{(k+1,%
\overrightarrow{R})}-\overrightarrow{R}~hH^{(k,\overrightarrow{R})}$ and
\begin{eqnarray*}
\overleftarrow{v}_{\tau } &=&v\overleftarrow{\mathbf{e}}_{\perp }^{(k+1)}-%
\overleftarrow{S}~v\overleftarrow{\mathbf{e}}_{\perp }^{(k)}=v\mathcal{H}%
\left( \delta \left( vH^{(k,\overleftarrow{S})}\right) /\delta
\overleftarrow{v}\right) \\
&=&\left( v\mathcal{J}\right) ^{-1}\left( \delta \left( vH^{(k+1,%
\overleftarrow{S})}\right) /\delta \overleftarrow{v}\right)
\end{eqnarray*}%
with vertical Hamiltonians $vH^{(k+1,\overleftarrow{S})}=vH^{(k+1,%
\overleftarrow{S})}-\overleftarrow{S}~vH^{(k,\overleftarrow{S})},$ for $%
k=0,1,2,.....$ The d--operators $\mathcal{H}$ and $\mathcal{J}$ are
N--adapted and mutually compatible from which one can be constructed an
alternative (explicit) Hamilton d--operator $~^{a}\mathcal{H=H\circ J}$ $%
\circ \mathcal{H=}\mathfrak{R\circ }\mathcal{H}.$
\end{corollary}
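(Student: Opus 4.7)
The plan is to derive this corollary as a direct consequence of Theorem \ref{mr1} combined with the standard bi--Hamiltonian recursion machinery, adapted to the N--anholonomic setting. The core observation is that once $h\mathcal{J}$ and $h\mathcal{H}$ (and their vertical counterparts) are shown to form a compatible symplectic/cosymplectic pair, the general Magri--type theorem automatically yields the hierarchies together with their double Hamiltonian representation, and the constant scalar d--curvature secured by Corollary \ref{corol01} enters only through the allowed linear shift by $\overrightarrow{R}$ and $\overleftarrow{S}$.

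First, I would iterate the recursion d--operators $h\mathfrak{R}$ and $v\mathfrak{R}$ of (\ref{reqoph}) and (\ref{reqopv}) on the seed d--vector fields $h\overrightarrow{\mathbf{e}}_{\perp}^{(0)}=\overrightarrow{v}_{\mathbf{l}}$ and $v\overleftarrow{\mathbf{e}}_{\perp}^{(0)}=\overleftarrow{v}_{\mathbf{l}}$, and the adjoint operators $h\mathfrak{R}^{\ast},v\mathfrak{R}^{\ast}$ on $\overrightarrow{\varpi}^{(0)}=\overrightarrow{v},\overleftarrow{\varpi}^{(0)}=\overleftarrow{v}$, obtaining the commuting d--vector and d--covector hierarchies $h\overrightarrow{\mathbf{e}}_{\perp}^{(k)},v\overleftarrow{\mathbf{e}}_{\perp}^{(k)}$ and $\overrightarrow{\varpi}^{(k)},\overleftarrow{\varpi}^{(k)}$. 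The existence of scalar Hamiltonians $hH^{(k)},vH^{(k)}$ with $\overrightarrow{\varpi}^{(k)}=\delta(hH^{(k)})/\delta\overrightarrow{v}$ is then obtained by induction from the base cases $hH^{(0)}=\tfrac{1}{2}|\overrightarrow{v}|^{2},vH^{(0)}=\tfrac{1}{2}|\overleftarrow{v}|^{2}$, the induction step using the factorization $\overrightarrow{\varpi}^{(k)}=h\mathcal{J}(h\overrightarrow{\mathbf{e}}_{\perp}^{(k)})$ together with the formal self--adjointness of $h\mathcal{J}$ to certify closedness of the associated one--form along $\gamma(\tau,\mathbf{l})$.

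Next, I would substitute the $k$--th members of the hierarchies into the evolution equations (\ref{hhfeq1}) and (\ref{vhfeq1}) of Theorem \ref{mr1}: letting $h\overrightarrow{\mathbf{e}}_{\perp}^{(k)}$ play the role of the normal component in (\ref{hhfeq1}) and using $h\mathcal{H}(\overrightarrow{\varpi}^{(k)})=h\mathfrak{R}(h\overrightarrow{\mathbf{e}}_{\perp}^{(k)})=h\overrightarrow{\mathbf{e}}_{\perp}^{(k+1)}$ yields $\overrightarrow{v}_{\tau}=h\overrightarrow{\mathbf{e}}_{\perp}^{(k+1)}-\overrightarrow{R}\,h\overrightarrow{\mathbf{e}}_{\perp}^{(k)}$. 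Rewriting both terms as Hamiltonian derivatives and absorbing the curvature shift into $hH^{(k+1,\overrightarrow{R})}=hH^{(k+1)}-\overrightarrow{R}\,hH^{(k)}$ produces the two equivalent Hamiltonian forms in the statement; the vertical identity follows by the same argument under $n\leftrightarrow m$, $\overrightarrow{R}\leftrightarrow\overleftarrow{S}$ and the horizontal/vertical exchange of all distinguished objects.

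The delicate step, and the main obstacle, is the verification that $h\mathcal{H}$ and $h\mathcal{J}$ (and the $v$--pair) are indeed mutually compatible Hamiltonian operators in the N--anholonomic setting, i.e.\ that the Jacobi identity holds for the Poisson bracket defined by $h\mathcal{H}$, that $h\mathcal{J}$ is closed on the space of d--covector one--forms, and that the Schouten bracket of the two structures vanishes. In the symmetric--space case this is the content of the Magri--Morosi computation carried out in \cite{saw,anc2}; the adaptation amounts to redoing those commutator identities with the canonical d--connection $\mathbf{D}$ in place of $\nabla$, where the crucial input is that Theorem \ref{theor01} and Corollary \ref{corol01} remove all ``curvature tails'' arising from $[\mathbf{D}_{h\mathbf{X}},\cdot]$ and $[\mathbf{D}_{v\mathbf{X}},\cdot]$, so that the $O(n-1)\oplus O(m-1)$--invariance of the bracket structure survives N--adaptation. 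Once compatibility is secured, the alternative explicit Hamilton d--operator $^{a}\mathcal{H}=\mathcal{H}\circ\mathcal{J}\circ\mathcal{H}=\mathfrak{R}\circ\mathcal{H}$ is an immediate consequence of the general bi--Hamiltonian formalism.
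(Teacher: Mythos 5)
Your proposal is correct and follows essentially the same route as the paper: the paper's ``proof'' consists precisely of the considerations preceding the corollary (iterating $h\mathfrak{R},v\mathfrak{R}$ on the seeds $\overrightarrow{v}_{\mathbf{l}},\overleftarrow{v}_{\mathbf{l}}$, the adjoint covector hierarchies with Hamiltonians $hH^{(k)},vH^{(k)}$, and the relations through $\mathcal{H}$ and $\mathcal{J}$ from Theorem \ref{mr1}), with the compatibility of the symplectic/cosymplectic pair deferred to the symmetric--space computations of \cite{saw,anc1,anc2} exactly as you do. You are in fact more explicit than the paper in flagging that the Magri--type compatibility check is the substantive step and in noting that the constant d--curvature (Theorem \ref{theor01}, Corollary \ref{corol01}) is what allows those computations to carry over to the canonical d--connection.
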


\subsubsection{Formulation of the main theorem}

The main goal of this paper is to prove that the geometric data for any
(pseudo) Riemannian metric naturally define a N--adapted bi--Hamiltonian
flow hierarchy inducing anholonomic solitonic configurations.

\begin{theorem}
\label{mt} For any N--anholonomic (pseudo) Riemannian manifold with
prescribed d--metric structure, there is a hierarchy of bi-Hamiltonian
N--adapted flows of curves $\gamma (\tau ,\mathbf{l})=h\gamma (\tau ,\mathbf{%
l})+v\gamma (\tau ,\mathbf{l})$ described by geometric nonholonomic map
equations. The $0$ flows are defined as convective (travelling wave) maps%
\begin{equation}
\gamma _{\tau }=\gamma _{\mathbf{l}},\mbox{\ distinguished \ }\left( h\gamma
\right) _{\tau }=\left( h\gamma \right) _{h\mathbf{X}}\mbox{\ and \ }\left(
v\gamma \right) _{\tau }=\left( v\gamma \right) _{v\mathbf{X}}.
\label{trmap}
\end{equation}%
There are +1 flows defined as non--stretching mKdV maps%
\begin{eqnarray}
-\left( h\gamma \right) _{\tau } &=&\mathbf{D}_{h\mathbf{X}}^{2}\left(
h\gamma \right) _{h\mathbf{X}}+\frac{3}{2}\left| \mathbf{D}_{h\mathbf{X}%
}\left( h\gamma \right) _{h\mathbf{X}}\right| _{h\mathbf{g}}^{2}~\left(
h\gamma \right) _{h\mathbf{X}},  \label{1map} \\
-\left( v\gamma \right) _{\tau } &=&\mathbf{D}_{v\mathbf{X}}^{2}\left(
v\gamma \right) _{v\mathbf{X}}+\frac{3}{2}\left| \mathbf{D}_{v\mathbf{X}%
}\left( v\gamma \right) _{v\mathbf{X}}\right| _{v\mathbf{g}}^{2}~\left(
v\gamma \right) _{v\mathbf{X}},  \notag
\end{eqnarray}%
and the +2,... flows as higher order analogs. Finally, the -1 flows are
defined by the kernels of recursion operators (\ref{reqoph}) and (\ref%
{reqopv}) inducing non--stretching maps%
\begin{equation}
\mathbf{D}_{h\mathbf{Y}}\left( h\gamma \right) _{h\mathbf{X}}=0%
\mbox{\ and \
}\mathbf{D}_{v\mathbf{Y}}\left( v\gamma \right) _{v\mathbf{X}}=0.
\label{-1map}
\end{equation}
\end{theorem}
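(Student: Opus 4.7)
The plan is to read each of the three flow cases in Theorem \ref{mt} as a distinguished member of the bi--Hamiltonian hierarchy furnished by Corollary \ref{c2}, and then to translate the statement back from the parallel-frame variables $(\overrightarrow{v},\overrightarrow{\varpi},h\overrightarrow{\mathbf{e}}_{\perp},h\mathbf{e}_{\parallel})$ and their v--analogues to an intrinsic equation for $\gamma(\tau,\mathbf{l})=h\gamma+v\gamma$ itself. The bridge between the two pictures is the torsion system (\ref{torseqd}) combined with the matrix parametrizations (\ref{auxaaa}) of $\mathbf{e}_{h\mathbf{Y}},\mathbf{e}_{v\mathbf{Y}},\mathbf{\Gamma}_{h\mathbf{Y}},\mathbf{\Gamma}_{v\mathbf{Y}}$: these identify $\overrightarrow{v}$ with the N--adapted orthonormal components of the principal normal $\mathbf{D}_{h\mathbf{X}}(h\gamma)_{h\mathbf{X}}$ (and similarly $\overleftarrow{v}$ with $\mathbf{D}_{v\mathbf{X}}(v\gamma)_{v\mathbf{X}}$) and fix how $\gamma_{\tau}$ decomposes tangentially and normally to $\gamma_{\mathbf{l}}$ in each sector. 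Since the h-- and v--computations are formally identical, I would carry out the horizontal case in detail and invoke symmetry for the vertical one.

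For the $0$--flow I would take $h\overrightarrow{\mathbf{e}}_{\perp}=0$, $h\mathbf{e}_{\parallel}=1$ (and the v--analogue), so that $\mathbf{e}_{h\mathbf{Y}}=\mathbf{e}_{h\mathbf{X}}$ by the parametrization above; this forces $(h\gamma)_{\tau}=(h\gamma)_{h\mathbf{X}}$ and gives (\ref{trmap}). For the $-1$ flow I would impose $h\overrightarrow{\mathbf{e}}_{\perp}\in\ker(h\mathcal{J})$ and $v\overleftarrow{\mathbf{e}}_{\perp}\in\ker(v\mathcal{J})$; via (\ref{hhfeq1})--(\ref{vhfeq1}) this kills $\overrightarrow{\varpi},\overleftarrow{\varpi}$, so $\mathbf{\Gamma}_{h\mathbf{Y}}$ and $\mathbf{\Gamma}_{v\mathbf{Y}}$ vanish in the rotation blocks, and the second line of (\ref{torseqd}) then forces $\mathbf{D}_{h\mathbf{Y}}(h\gamma)_{h\mathbf{X}}=0$ and $\mathbf{D}_{v\mathbf{Y}}(v\gamma)_{v\mathbf{X}}=0$, which is (\ref{-1map}).

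The substantive step is the $+1$ flow. I would apply the explicit recursion operator (\ref{reqoph}) to the seed $h\overrightarrow{\mathbf{e}}_{\perp}^{(0)}=\overrightarrow{v}_{\mathbf{l}}$, using integration by parts in $\mathbf{D}_{h\mathbf{X}}^{-1}$ to cancel the non-local contributions and obtain
\[
h\overrightarrow{\mathbf{e}}_{\perp}^{(1)}=\overrightarrow{v}_{\mathbf{l}\mathbf{l}}+\tfrac{3}{2}|\overrightarrow{v}|^{2}\overrightarrow{v},
\]
and then read this back through (\ref{floweq}) as a flow on the curve: because $\overrightarrow{v}$ encodes the parallel-frame components of $\mathbf{D}_{h\mathbf{X}}(h\gamma)_{h\mathbf{X}}$, the first term becomes $\mathbf{D}_{h\mathbf{X}}^{2}(h\gamma)_{h\mathbf{X}}$, the factor $|\overrightarrow{v}|^{2}$ becomes $|\mathbf{D}_{h\mathbf{X}}(h\gamma)_{h\mathbf{X}}|_{h\mathbf{g}}^{2}$, and the tangential piece is absorbed into the direction $(h\gamma)_{h\mathbf{X}}$, yielding (\ref{1map}). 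The v--half is identical with $n\to m$ and $\overrightarrow{\cdot}\to\overleftarrow{\cdot}$, and the higher $+k$ flows follow by iterating $h\mathfrak{R}$ and $v\mathfrak{R}$, with each application preserving the mKdV scaling weights noted after Corollary \ref{c2}.

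The hard part is not the algebra of the recursion operator but the dictionary step: verifying that the identities expressing $\overrightarrow{v}$ and its $\mathbf{l}$--derivatives as covariant derivatives of $(h\gamma)_{h\mathbf{X}}$ along the curve survive the residual $SO(n-1)\oplus SO(m-1)$ gauge freedom of the N--adapted parallel frame, and that the constant-matrix-curvature hypothesis (Theorem \ref{theor01} and Corollary \ref{corol01}) is what legitimizes dropping the curvature correction terms that would otherwise obstruct the passage from $\overrightarrow{v}_{\tau}$ to $(h\gamma)_{\tau}$. Once this compatibility is checked separately in the h-- and v--sectors, the two sectorial hierarchies of Corollary \ref{c2} combine into the single coupled curve-flow hierarchy claimed in the theorem.
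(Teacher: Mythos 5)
Your overall route is the same as the paper's: solder the curve flow to the N--anholonomic Klein geometry, use the matrix parametrizations and the torsion system (\ref{torseqd}) as the dictionary between the frame variables $(\overrightarrow{v},\overrightarrow{\varpi },h\overrightarrow{\mathbf{e}}_{\perp },h\mathbf{e}_{\parallel })$ and intrinsic covariant derivatives of $h\gamma $, then dub everything into the v--sector. Your treatment of the $0$ and $-1$ flows coincides with the paper's (for the $-1$ flow the paper phrases it as $\overrightarrow{\varpi }=0\Rightarrow \mathbf{L}_{\tau }=0\Rightarrow h\mathcal{D}_{\tau }\mathbf{e}_{h\mathbf{X}}=[\mathbf{L}_{\tau },\mathbf{e}_{h\mathbf{X}}]=0$, which is what you mean by the rotation block of $\mathbf{\Gamma }_{h\mathbf{Y}}$ vanishing).

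The $+1$ step, however, is set up incorrectly. In the paper the $+1$ curve flow (\ref{1map}) corresponds to taking the seed itself as the normal velocity, $h\overrightarrow{\mathbf{e}}_{\perp }=h\overrightarrow{\mathbf{e}}_{\perp }^{(0)}=\overrightarrow{v}_{\mathbf{l}}$, with the tangential velocity then forced by the torsion equations to be $h\mathbf{e}_{\parallel }=-\mathbf{D}_{h\mathbf{X}}^{-1}(\overrightarrow{v}\cdot \overrightarrow{v}_{\mathbf{l}})=-\frac{1}{2}|\overrightarrow{v}|^{2}$; the recursion operator enters only afterwards, giving the induced evolution $\overrightarrow{v}_{\tau }=h\mathfrak{R}(\overrightarrow{v}_{\mathbf{l}})-\overrightarrow{R}\,\overrightarrow{v}_{\mathbf{l}}$ of the normal components. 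You instead propose to take $h\overrightarrow{\mathbf{e}}_{\perp }^{(1)}=h\mathfrak{R}(\overrightarrow{v}_{\mathbf{l}})$ as the flow datum and read that back into a curve equation; carried out literally this yields the $+2$ (fifth--order) member of the hierarchy, not the mKdV map. Moreover the formula you quote, $h\overrightarrow{\mathbf{e}}_{\perp }^{(1)}=\overrightarrow{v}_{\mathbf{l}\mathbf{l}}+\frac{3}{2}|\overrightarrow{v}|^{2}\overrightarrow{v}$, is not $h\mathfrak{R}(\overrightarrow{v}_{\mathbf{l}})$, which by (\ref{mkdv1a}) equals $\overrightarrow{v}_{3\mathbf{l}}+\frac{3}{2}|\overrightarrow{v}|^{2}\overrightarrow{v}_{\mathbf{l}}$. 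Finally, saying that ``the tangential piece is absorbed into the direction $(h\gamma )_{h\mathbf{X}}$'' skips the substantive computation: the paper writes $\mathbf{e}_{h\mathbf{Y}}=\mathbf{D}_{h\mathbf{X}}\left[ \mathbf{L}_{h\mathbf{X}},\mathbf{e}_{h\mathbf{X}}\right] +\frac{1}{2}\left[ \mathbf{L}_{h\mathbf{X}},\left[ \mathbf{L}_{h\mathbf{X}},\mathbf{e}_{h\mathbf{X}}\right] \right] $ and uses the explicit bracket identities (in particular that $\left[ \mathbf{L}_{h\mathbf{X}},\left[ \mathbf{L}_{h\mathbf{X}},\mathbf{e}_{h\mathbf{X}}\right] \right] $ is proportional to $|\overrightarrow{v}|^{2}\mathbf{e}_{h\mathbf{X}}$) together with the identifications $\left[ \mathbf{L}_{h\mathbf{X}},\mathbf{e}_{h\mathbf{X}}\right] \longleftrightarrow \mathbf{D}_{h\mathbf{X}}\left( h\gamma \right) _{h\mathbf{X}}$ and $|\overrightarrow{v}|^{2}=\left| \mathbf{D}_{h\mathbf{X}}\left( h\gamma \right) _{h\mathbf{X}}\right| _{h\mathbf{g}}^{2}$; it is precisely the interplay of the $-\frac{1}{2}|\overrightarrow{v}|^{2}$ tangential velocity with the double--bracket term that produces the coefficient $\frac{3}{2}$ in (\ref{1map}). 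Without that computation the $+1$ case is not established.
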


\begin{proof}
It is given in the next section \ref{ssp}.
\end{proof}

For similar constructions in gravity models with nontrivial torsion and
nonholonomic structure and related geometry of noncommutative spaces and
anholonomic spinors, it is important \cite{vncg,vsgg}.

\subsubsection{Proof of the main theorem}

\label{ssp}We provide a proof of Theorem \ref{mt} for the horizontal flows.
The approach is based on the method provided in Section 3 of Ref. \cite{anc1}
but in this work the Levi-- Civita connection on symmetric Riemannian spaces
is substituted by the horizontal components of a d--connection with constant
d--curvature coefficients. The vertical constructions are similar but with
respective changing of h-- variables / objects into v- variables/ objects.

One obtains a vector mKdV equation up to a convective term (can be absorbed
by redefinition of coordinates) defining the +1 flow for $h\overrightarrow{%
\mathbf{e}}_{\perp }=\overrightarrow{v}_{\mathbf{l}},$%
\begin{equation*}
\overrightarrow{v}_{\tau }=\overrightarrow{v}_{3\mathbf{l}}+\frac{3}{2}|%
\overrightarrow{v}|^{2}-\overrightarrow{R}~\overrightarrow{v}_{\mathbf{l}},
\end{equation*}%
when the $+(k+1)$ flow gives a vector mKdV equation of higher order $3+2k$
on $\overrightarrow{v}$ and there is a $0$ h--flow $\overrightarrow{v}_{\tau
}=\overrightarrow{v}_{\mathbf{l}}$ arising from $h\overrightarrow{\mathbf{e}}%
_{\perp }=0$ and $h\overrightarrow{\mathbf{e}}_{\parallel }=1$ belonging
outside the hierarchy generated by $h\mathfrak{R.}$ Such flows correspond to
N--adapted horizontal motions of the curve $\gamma (\tau ,\mathbf{l}%
)=h\gamma (\tau ,\mathbf{l})+v\gamma (\tau ,\mathbf{l}),$ given by
\begin{equation*}
\left( h\gamma \right) _{\tau }=f\left( \left( h\gamma \right) _{h\mathbf{X}%
},\mathbf{D}_{h\mathbf{X}}\left( h\gamma \right) _{h\mathbf{X}},\mathbf{D}_{h%
\mathbf{X}}^{2}\left( h\gamma \right) _{h\mathbf{X}},...\right)
\end{equation*}%
subject to the non--stretching condition $|\left( h\gamma \right) _{h\mathbf{%
X}}|_{h\mathbf{g}}=1,$ when the equation of motion is to be derived from the
identifications
\begin{equation*}
\left( h\gamma \right) _{\tau }\longleftrightarrow \mathbf{e}_{h\mathbf{Y}},%
\mathbf{D}_{h\mathbf{X}}\left( h\gamma \right) _{h\mathbf{X}%
}\longleftrightarrow \mathcal{D}_{h\mathbf{X}}\mathbf{e}_{h\mathbf{X}}=\left[
\mathbf{L}_{h\mathbf{X}},\mathbf{e}_{h\mathbf{X}}\right]
\end{equation*}%
and so on, which maps the constructions from the tangent space of the curve
to the space $h\mathfrak{p}.$ For such identifications, we have
\begin{eqnarray*}
\left[ \mathbf{L}_{h\mathbf{X}},\mathbf{e}_{h\mathbf{X}}\right] &=&-\left[
\begin{array}{cc}
0 & \left( 0,\overrightarrow{v}\right) \\
-\left( 0,\overrightarrow{v}\right) ^{T} & h\mathbf{0}%
\end{array}%
\right] \in h\mathfrak{p}, \\
\left[ \mathbf{L}_{h\mathbf{X}},\left[ \mathbf{L}_{h\mathbf{X}},\mathbf{e}_{h%
\mathbf{X}}\right] \right] &=&-\left[
\begin{array}{cc}
0 & \left( |\overrightarrow{v}|^{2},\overrightarrow{0}\right) \\
-\left( |\overrightarrow{v}|^{2},\overrightarrow{0}\right) ^{T} & h\mathbf{0}%
\end{array}%
\right]
\end{eqnarray*}%
and so on, see similar calculus in (\ref{aux41}). At the next step, stating
for the +1 h--flow
\begin{equation*}
h\overrightarrow{\mathbf{e}}_{\perp }=\overrightarrow{v}_{\mathbf{l}}%
\mbox{
and }h\overrightarrow{\mathbf{e}}_{\parallel }=-\mathbf{D}_{h\mathbf{X}%
}^{-1}\left( \overrightarrow{v}\cdot \overrightarrow{v}_{\mathbf{l}}\right)
=-\frac{1}{2}|\overrightarrow{v}|^{2},
\end{equation*}%
we compute
\begin{eqnarray*}
\mathbf{e}_{h\mathbf{Y}} &=&\left[
\begin{array}{cc}
0 & \left( h\mathbf{e}_{\parallel },h\overrightarrow{\mathbf{e}}_{\perp
}\right) \\
-\left( h\mathbf{e}_{\parallel },h\overrightarrow{\mathbf{e}}_{\perp
}\right) ^{T} & h\mathbf{0}%
\end{array}%
\right] \\
&=&-\frac{1}{2}|\overrightarrow{v}|^{2}\left[
\begin{array}{cc}
0 & \left( 1,\overrightarrow{\mathbf{0}}\right) \\
-\left( 0,\overrightarrow{\mathbf{0}}\right) ^{T} & h\mathbf{0}%
\end{array}%
\right] +\left[
\begin{array}{cc}
0 & \left( 0,\overrightarrow{v}_{h\mathbf{X}}\right) \\
-\left( 0,\overrightarrow{v}_{h\mathbf{X}}\right) ^{T} & h\mathbf{0}%
\end{array}%
\right] \\
&=&\mathbf{D}_{h\mathbf{X}}\left[ \mathbf{L}_{h\mathbf{X}},\mathbf{e}_{h%
\mathbf{X}}\right] +\frac{1}{2}\left[ \mathbf{L}_{h\mathbf{X}},\left[
\mathbf{L}_{h\mathbf{X}},\mathbf{e}_{h\mathbf{X}}\right] \right] \\
&=&-\mathcal{D}_{h\mathbf{X}}\left[ \mathbf{L}_{h\mathbf{X}},\mathbf{e}_{h%
\mathbf{X}}\right] -\frac{3}{2}|\overrightarrow{v}|^{2}\mathbf{e}_{h\mathbf{X%
}}.
\end{eqnarray*}%
Following above presented identifications related to the first and second
terms, when
\begin{eqnarray*}
|\overrightarrow{v}|^{2} &=&<\left[ \mathbf{L}_{h\mathbf{X}},\mathbf{e}_{h%
\mathbf{X}}\right] ,\left[ \mathbf{L}_{h\mathbf{X}},\mathbf{e}_{h\mathbf{X}}%
\right] >_{h\mathfrak{p}}\longleftrightarrow h\mathbf{g}\left( \mathbf{D}_{h%
\mathbf{X}}\left( h\gamma \right) _{h\mathbf{X}},\mathbf{D}_{h\mathbf{X}%
}\left( h\gamma \right) _{h\mathbf{X}}\right) \\
&=&\left| \mathbf{D}_{h\mathbf{X}}\left( h\gamma \right) _{h\mathbf{X}%
}\right| _{h\mathbf{g}}^{2},
\end{eqnarray*}%
we can identify $\mathcal{D}_{h\mathbf{X}}\left[ \mathbf{L}_{h\mathbf{X}},%
\mathbf{e}_{h\mathbf{X}}\right] $ to $\mathbf{D}_{h\mathbf{X}}^{2}\left(
h\gamma \right) _{h\mathbf{X}}$ and write
\begin{equation*}
-\mathbf{e}_{h\mathbf{Y}}\longleftrightarrow \mathbf{D}_{h\mathbf{X}%
}^{2}\left( h\gamma \right) _{h\mathbf{X}}+\frac{3}{2}\left| \mathbf{D}_{h%
\mathbf{X}}\left( h\gamma \right) _{h\mathbf{X}}\right| _{h\mathbf{g}%
}^{2}~\left( h\gamma \right) _{h\mathbf{X}}
\end{equation*}%
which is just the first equation (\ref{1map}) in the Theorem \ref{mt}
defining a non--stretching mKdV map h--equation induced by the h--part of
the canonical d--connection.

Using the adjoint representation $ad\left( \cdot \right) $ acting in the Lie
algebra $h\mathfrak{g}=h\mathfrak{p}\oplus \mathfrak{so}(n),$ with
\begin{equation*}
ad\left( \left[ \mathbf{L}_{h\mathbf{X}},\mathbf{e}_{h\mathbf{X}}\right]
\right) \mathbf{e}_{h\mathbf{X}}=\left[
\begin{array}{cc}
0 & \left( 0,\overrightarrow{\mathbf{0}}\right) \\
-\left( 0,\overrightarrow{\mathbf{0}}\right) ^{T} & \overrightarrow{\mathbf{v%
}}%
\end{array}%
\right] \in \mathfrak{so}(n+1),
\end{equation*}%
where $\overrightarrow{\mathbf{v}}=-\left[
\begin{array}{cc}
0 & \overrightarrow{v} \\
-\overrightarrow{v}^{T} & h\mathbf{0}%
\end{array}%
\in \mathfrak{so}(n)\right] ,$ and (applying $\ ad\left( \left[ \mathbf{L}_{h%
\mathbf{X}},\mathbf{e}_{h\mathbf{X}}\right] \right) $ \ again)
\begin{equation*}
ad\left( \left[ \mathbf{L}_{h\mathbf{X}},\mathbf{e}_{h\mathbf{X}}\right]
\right) ^{2}\mathbf{e}_{h\mathbf{X}}=-|\overrightarrow{v}|^{2}\left[
\begin{array}{cc}
0 & \left( 1,\overrightarrow{\mathbf{0}}\right) \\
-\left( 1,\overrightarrow{\mathbf{0}}\right) ^{T} & \mathbf{0}%
\end{array}%
\right] =-|\overrightarrow{v}|^{2}\mathbf{e}_{h\mathbf{X}},
\end{equation*}%
the equation (\ref{1map}) can be represented in alternative form
\begin{equation*}
-\left( h\gamma \right) _{\tau }=\mathbf{D}_{h\mathbf{X}}^{2}\left( h\gamma
\right) _{h\mathbf{X}}-\frac{3}{2}\overrightarrow{R}^{-1}ad\left( \mathbf{D}%
_{h\mathbf{X}}\left( h\gamma \right) _{h\mathbf{X}}\right) ^{2}~\left(
h\gamma \right) _{h\mathbf{X}},
\end{equation*}%
which is more convenient for analysis of higher order flows on $%
\overrightarrow{v}$ subjected to higher--order geometric partial
differential equations. Here we note that the $0$ flow one $\overrightarrow{v%
}$ corresponds to just a convective (linear travelling h--wave but subjected
to certain nonholonomic constraints ) map equation (\ref{trmap}).

Now we consider a -1 flow contained in the h--hierarchy derived from the
property that $h\overrightarrow{\mathbf{e}}_{\perp }$ is annihilated by the
h--operator $h\mathcal{J}$ and mapped into $h\mathfrak{R}(h\overrightarrow{%
\mathbf{e}}_{\perp })=0.$This mean that $h\mathcal{J}(h\overrightarrow{%
\mathbf{e}}_{\perp })=\overrightarrow{\varpi }=0.$ Such properties together
with (\ref{auxaaa}) and equations (\ref{floweq}) imply $\mathbf{L}_{\tau }=0$
and hence $h\mathcal{D}_{\tau }\mathbf{e}_{h\mathbf{X}}=[\mathbf{L}_{\tau },%
\mathbf{e}_{h\mathbf{X}}]=0$ for $h\mathcal{D}_{\tau }=h\mathbf{D}_{\tau }+[%
\mathbf{L}_{\tau },\cdot ].$ We obtain the equation of motion for the
h--component of curve, $h\gamma (\tau ,\mathbf{l}),$ following the
correspondences $\mathbf{D}_{h\mathbf{Y}}\longleftrightarrow h\mathcal{D}%
_{\tau }$ and $h\gamma _{\mathbf{l}}\longleftrightarrow \mathbf{e}_{h\mathbf{%
X}},$ $\mathbf{D}_{h\mathbf{Y}}\left( h\gamma (\tau ,\mathbf{l})\right) =0,$
which is just the first equation in (\ref{-1map}).

Finally, we note that the formulas for the v--components, stated by Theorem %
\ref{mt} can be derived in a similar form by respective substitution in the
the above proof of the h--operators and h--variables into v--ones, for
instance, $h\gamma \rightarrow v\gamma ,$ $h\overrightarrow{\mathbf{e}}%
_{\perp }\rightarrow v\overleftarrow{\mathbf{e}}_{\perp },$ $\overrightarrow{%
v}\rightarrow \overleftarrow{v},\overrightarrow{\varpi }\rightarrow
\overleftarrow{\varpi },\mathbf{D}_{h\mathbf{X}}\rightarrow \mathbf{D}_{v%
\mathbf{X}},$ $\mathbf{D}_{h\mathbf{Y}}\rightarrow \mathbf{D}_{v\mathbf{Y}},%
\mathbf{L\rightarrow C,}\overrightarrow{R}\rightarrow \overleftarrow{S},h%
\mathcal{D\rightarrow }v\mathcal{D},$ $h\mathfrak{R\rightarrow }v\mathfrak{R,%
}h\mathcal{J\rightarrow }v\mathcal{J}$,...

\subsection{Nonholonomic mKdV and SG hierarchies}

We consider explicit constructions when solitonic hierarchies are derived
following the conditions of Theorem \ref{mt}.

The h--flow and v--flow equations resulting from (\ref{-1map}) are%
\begin{equation}
\overrightarrow{v}_{\tau }=-\overrightarrow{R}h\overrightarrow{\mathbf{e}}%
_{\perp }\mbox{ \ and \ }\overleftarrow{v}_{\tau }=-\overleftarrow{S}v%
\overleftarrow{\mathbf{e}}_{\perp },  \label{deveq}
\end{equation}%
when, respectively,%
\begin{equation*}
0=\overrightarrow{\varpi }=-\mathbf{D}_{h\mathbf{X}}h\overrightarrow{\mathbf{%
e}}_{\perp }+h\mathbf{e}_{\parallel }\overrightarrow{v},~\mathbf{D}_{h%
\mathbf{X}}h\mathbf{e}_{\parallel }=h\overrightarrow{\mathbf{e}}_{\perp
}\cdot \overrightarrow{v}
\end{equation*}%
and
\begin{equation*}
0=\overleftarrow{\varpi }=-\mathbf{D}_{v\mathbf{X}}v\overleftarrow{\mathbf{e}%
}_{\perp }+v\mathbf{e}_{\parallel }\overleftarrow{v},~\mathbf{D}_{v\mathbf{X}%
}v\mathbf{e}_{\parallel }=v\overleftarrow{\mathbf{e}}_{\perp }\cdot
\overleftarrow{v}.
\end{equation*}%
The d--flow equations possess horizontal and vertical conservation laws%
\begin{equation*}
\mathbf{D}_{h\mathbf{X}}\left( (h\mathbf{e}_{\parallel })^{2}+|h%
\overrightarrow{\mathbf{e}}_{\perp }|^{2}\right) =0,
\end{equation*}%
for $(h\mathbf{e}_{\parallel })^{2}+|h\overrightarrow{\mathbf{e}}_{\perp
}|^{2}=<h\mathbf{e}_{\tau },h\mathbf{e}_{\tau }>_{h\mathfrak{p}}=|\left(
h\gamma \right) _{\tau }|_{h\mathbf{g}}^{2},$ and
\begin{equation*}
\mathbf{D}_{v\mathbf{Y}}\left( (v\mathbf{e}_{\parallel })^{2}+|v%
\overleftarrow{\mathbf{e}}_{\perp }|^{2}\right) =0,
\end{equation*}%
for $(v\mathbf{e}_{\parallel })^{2}+|v\overleftarrow{\mathbf{e}}_{\perp
}|^{2}=<v\mathbf{e}_{\tau },v\mathbf{e}_{\tau }>_{v\mathfrak{p}}=|\left(
v\gamma \right) _{\tau }|_{v\mathbf{g}}^{2}.$ This corresponds to
\begin{equation*}
\mathbf{D}_{h\mathbf{X}}|\left( h\gamma \right) _{\tau }|_{h\mathbf{g}}^{2}=0%
\mbox{ \ and \ }\mathbf{D}_{v\mathbf{X}}|\left( v\gamma \right) _{\tau }|_{v%
\mathbf{g}}^{2}=0.
\end{equation*}

It is possible to rescale conformally the variable $\tau $ in order to get $%
|\left( h\gamma \right) _{\tau }|_{h\mathbf{g}}^{2}$ $=1$ and (it could be
for other rescaling) $|\left( v\gamma \right) _{\tau }|_{v\mathbf{g}}^{2}=1,
$ i.e. to have%
\begin{equation*}
(h\mathbf{e}_{\parallel })^{2}+|h\overrightarrow{\mathbf{e}}_{\perp }|^{2}=1%
\mbox{ \ and \ }(v\mathbf{e}_{\parallel })^{2}+|v\overleftarrow{\mathbf{e}}%
_{\perp }|^{2}=1.
\end{equation*}%
In this case, we can express $h\mathbf{e}_{\parallel }$ and $h%
\overrightarrow{\mathbf{e}}_{\perp }$ in terms of $\overrightarrow{v}$ and
its derivatives and, similarly, we can express $v\mathbf{e}_{\parallel }$
and $v\overleftarrow{\mathbf{e}}_{\perp }$ in terms of $\overleftarrow{v}$
and its derivatives, which follows from (\ref{deveq}). The N--adapted wave
map equations describing the -1 flows reduce to a system of two independent
nonlocal evolution equations for the h-- and v--components,%
\begin{equation*}
\overrightarrow{v}_{\tau }=-\mathbf{D}_{h\mathbf{X}}^{-1}\left( \sqrt{%
\overrightarrow{R}^{2}-|\overrightarrow{v}_{\tau }|^{2}}~\overrightarrow{v}%
\right) \mbox{ \ and \ }\overleftarrow{v}_{\tau }=-\mathbf{D}_{v\mathbf{X}%
}^{-1}\left( \sqrt{\overleftarrow{S}^{2}-|\overleftarrow{v}_{\tau }|^{2}}~%
\overleftarrow{v}\right) .
\end{equation*}%
For d--connections with constant scalar d--curvatures, we can rescale the
equations on $\tau $ to the case when the terms $\overrightarrow{R}^{2},%
\overleftarrow{S}^{2}=1,$ and the evolution equations transform into a
system of hyperbolic d--vector equations,%
\begin{equation}
\mathbf{D}_{h\mathbf{X}}(\overrightarrow{v}_{\tau })=-\sqrt{1-|%
\overrightarrow{v}_{\tau }|^{2}}~\overrightarrow{v}\mbox{ \ and \ }\mathbf{D}%
_{v\mathbf{X}}(\overleftarrow{v}_{\tau })=-\sqrt{1-|\overleftarrow{v}_{\tau
}|^{2}}~\overleftarrow{v},  \label{heq}
\end{equation}%
where $\mathbf{D}_{h\mathbf{X}}=\partial _{h\mathbf{l}}$ and $\mathbf{D}_{v%
\mathbf{X}}=\partial _{v\mathbf{l}}$ are usual partial derivatives on
direction $\mathbf{l=}h\mathbf{l+}v\mathbf{l}$ with $\overrightarrow{v}%
_{\tau }$ and $\overleftarrow{v}_{\tau }$ considered as scalar functions for
the covariant derivatives $\mathbf{D}_{h\mathbf{X}}$ and $\mathbf{D}_{v%
\mathbf{X}}$ defined by the canonical d--connection. It also follows that $h%
\overrightarrow{\mathbf{e}}_{\perp }$ and $v\overleftarrow{\mathbf{e}}%
_{\perp }$ obey corresponding vector sine--Gordon (SG) equations%
\begin{equation}
\left( \sqrt{(1-|h\overrightarrow{\mathbf{e}}_{\perp }|^{2})^{-1}}~\partial
_{h\mathbf{l}}(h\overrightarrow{\mathbf{e}}_{\perp })\right) _{\tau }=-h%
\overrightarrow{\mathbf{e}}_{\perp }  \label{sgeh}
\end{equation}%
and
\begin{equation}
\left( \sqrt{(1-|v\overleftarrow{\mathbf{e}}_{\perp }|^{2})^{-1}}~\partial
_{v\mathbf{l}}(v\overleftarrow{\mathbf{e}}_{\perp })\right) _{\tau }=-v%
\overleftarrow{\mathbf{e}}_{\perp }.  \label{sgev}
\end{equation}

The above presented formulas and Corollary \ref{c2} imply

\begin{conclusion}
The recursion d--operator $\mathfrak{R}=(h\mathfrak{R,}h\mathfrak{R})$ (\ref%
{reqop}), see (\ref{reqoph}) and (\ref{reqopv}), generates two hierarchies
of vector mKdV symmetries: the first one is horizontal,
\begin{eqnarray}
\overrightarrow{v}_{\tau }^{(0)} &=&\overrightarrow{v}_{h\mathbf{l}},~%
\overrightarrow{v}_{\tau }^{(1)}=h\mathfrak{R}(\overrightarrow{v}_{h\mathbf{l%
}})=\overrightarrow{v}_{3h\mathbf{l}}+\frac{3}{2}|\overrightarrow{v}|^{2}~%
\overrightarrow{v}_{h\mathbf{l}},  \label{mkdv1a} \\
\overrightarrow{v}_{\tau }^{(2)} &=&h\mathfrak{R}^{2}(\overrightarrow{v}_{h%
\mathbf{l}})=\overrightarrow{v}_{5h\mathbf{l}}+\frac{5}{2}\left( |%
\overrightarrow{v}|^{2}~\overrightarrow{v}_{2h\mathbf{l}}\right) _{h\mathbf{l%
}}  \notag \\
&&+\frac{5}{2}\left( (|\overrightarrow{v}|^{2})_{h\mathbf{l~}h\mathbf{l}}+|%
\overrightarrow{v}_{h\mathbf{l}}|^{2}+\frac{3}{4}|\overrightarrow{v}%
|^{4}\right) ~\overrightarrow{v}_{h\mathbf{l}}-\frac{1}{2}|\overrightarrow{v}%
_{h\mathbf{l}}|^{2}~\overrightarrow{v},  \notag \\
&&...,  \notag
\end{eqnarray}%
with all such terms commuting with the -1 flow
\begin{equation}
(\overrightarrow{v}_{\tau })^{-1}=h\overrightarrow{\mathbf{e}}_{\perp }
\label{mkdv1b}
\end{equation}%
associated to the vector SG equation (\ref{sgeh}); the second one is
vertical,
\begin{eqnarray}
\overleftarrow{v}_{\tau }^{(0)} &=&\overleftarrow{v}_{v\mathbf{l}},~%
\overleftarrow{v}_{\tau }^{(1)}=v\mathfrak{R}(\overleftarrow{v}_{v\mathbf{l}%
})=\overleftarrow{v}_{3v\mathbf{l}}+\frac{3}{2}|\overleftarrow{v}|^{2}~%
\overleftarrow{v}_{v\mathbf{l}},  \label{mkdv2a} \\
\overleftarrow{v}_{\tau }^{(2)} &=&v\mathfrak{R}^{2}(\overleftarrow{v}_{v%
\mathbf{l}})=\overleftarrow{v}_{5v\mathbf{l}}+\frac{5}{2}\left( |%
\overleftarrow{v}|^{2}~\overleftarrow{v}_{2v\mathbf{l}}\right) _{v\mathbf{l}}
\notag \\
&&+\frac{5}{2}\left( (|\overleftarrow{v}|^{2})_{v\mathbf{l~}v\mathbf{l}}+|%
\overleftarrow{v}_{v\mathbf{l}}|^{2}+\frac{3}{4}|\overleftarrow{v}%
|^{4}\right) ~\overleftarrow{v}_{v\mathbf{l}}-\frac{1}{2}|\overleftarrow{v}%
_{v\mathbf{l}}|^{2}~\overleftarrow{v},  \notag \\
&&...,  \notag
\end{eqnarray}%
with all such terms commuting with the -1 flow
\begin{equation}
(\overleftarrow{v}_{\tau })^{-1}=v\overleftarrow{\mathbf{e}}_{\perp }
\label{mkdv2b}
\end{equation}%
associated to the vector SG equation (\ref{sgev}).
\end{conclusion}

In its turn, using the above Conclusion, we derive that the adjoint
d--operator $\mathfrak{R}^{\ast }=\mathcal{J\circ H}$ generates a horizontal
hierarchy of Hamiltonians,%
\begin{eqnarray}
hH^{(0)} &=&\frac{1}{2}|\overrightarrow{v}|^{2},~hH^{(1)}=-\frac{1}{2}|%
\overrightarrow{v}_{h\mathbf{l}}|^{2}+\frac{1}{8}|\overrightarrow{v}|^{4},
\label{hhh} \\
hH^{(2)} &=&\frac{1}{2}|\overrightarrow{v}_{2h\mathbf{l}}|^{2}-\frac{3}{4}|%
\overrightarrow{v}|^{2}~|\overrightarrow{v}_{h\mathbf{l}}|^{2}-\frac{1}{2}%
\left( \overrightarrow{v}\cdot \overrightarrow{v}_{h\mathbf{l}}\right) +%
\frac{1}{16}|\overrightarrow{v}|^{6},...,  \notag
\end{eqnarray}%
and vertical hierarchy of Hamiltonians%
\begin{eqnarray}
vH^{(0)} &=&\frac{1}{2}|\overleftarrow{v}|^{2},~vH^{(1)}=-\frac{1}{2}|%
\overleftarrow{v}_{v\mathbf{l}}|^{2}+\frac{1}{8}|\overleftarrow{v}|^{4},
\label{hhv} \\
vH^{(2)} &=&\frac{1}{2}|\overleftarrow{v}_{2v\mathbf{l}}|^{2}-\frac{3}{4}|%
\overleftarrow{v}|^{2}~|\overleftarrow{v}_{v\mathbf{l}}|^{2}-\frac{1}{2}%
\left( \overleftarrow{v}\cdot \overleftarrow{v}_{v\mathbf{l}}\right) +\frac{1%
}{16}|\overleftarrow{v}|^{6},...,  \notag
\end{eqnarray}%
all of which are conserved densities for respective horizontal and vertical
-1 flows and determining higher conservation laws for the corresponding
hyperbolic equations (\ref{sgeh}) and (\ref{sgev}).

The above presented horizontal equations (\ref{sgeh}), (\ref{mkdv1a}), (\ref%
{mkdv1b}) and (\ref{hhh}) and of vertical equations (\ref{sgev}), (\ref%
{mkdv2a}), (\ref{mkdv2b}) and (\ref{hhv}) have similar mKdV scaling
symmetries but on different parameters $\lambda _{h}$ and $\lambda _{v}$
because, in general, there are two independent values of scalar curvatures $%
\overrightarrow{R}$ and $\overleftarrow{S},$ see (\ref{sdccurv}). The
horizontal scaling symmetries are $h\mathbf{l\rightarrow }\lambda _{h}h%
\mathbf{l,}\overrightarrow{v}\rightarrow \left( \lambda _{h}\right) ^{-1}%
\overrightarrow{v}$ and $\tau \rightarrow \left( \lambda _{h}\right)
^{1+2k}, $ for $k=-1,0,1,2,...$ For the vertical scaling symmetries, one has
$v\mathbf{l\rightarrow }\lambda _{v}v\mathbf{l,}\overleftarrow{v}\rightarrow
\left( \lambda _{v}\right) ^{-1}\overleftarrow{v}$ and $\tau \rightarrow
\left( \lambda _{v}\right) ^{1+2k},$ for $k=-1,0,1,2,...$

\begin{example}
The simplest way to generate a solitonic hierachy, for instance, defining a
solution of vacuum Einstein equations is to take the value $h_{4}$
(equivalently, $f)$ in (\ref{coeff4d}) to be a solution of a three
dimensional solionic equation. For instance, for $h_{4}$ being a solution of
\begin{equation}
h_{4}{}^{\bullet \bullet }+\epsilon (h_{4}{}^{\prime }+6h_{4}\ h_{4}^{\ast
}+h_{4}^{\ast \ast \ast })^{\ast }=0,\ \epsilon =\pm 1,  \label{solit1}
\end{equation}%
a class of solitonic generic off--diagonal metrics (\ref{es4s}) is obtained
for $\varsigma =1$ and $\ ^{2}n_{k}=0$ and any $w_{i}$ and $\ ^{1}n_{k}$
solving the constraints (\ref{ep2b1}) and (\ref{ep2b2}).
\end{example}

In Refs. \cite{vncg,ijgmmp1,rf3,vrfg,vsgg}, there were provided and reviewed
a number of exact solutions in Einstein, string, gauge, extra dimension,
metric--affine, generalized Finsler--Lagrange and other gravity and/or Ricci
flow theories with nonholonomic commutative and noncommutative variables
generated by nonlinear superpositions of two and/or three dimensional
gravitational solitonic waves on nontrivial (black hole, Taub NUT, pp--wave,
wormhole, black ellipsoid etc) backgrounds. Those classes of solutions where
derived using the N--connection and canonical d--connection and constraints
to the Levi--Civita connection. Such constructions can be equivalently
reformulated in terms of a metric compatible d--connection $\ _{0}\widetilde{%
\mathbf{\Gamma }}_{\ \alpha ^{\prime }\beta ^{\prime }}^{\gamma ^{\prime }}$
(\ref{ccandcon}) and considered as explicit examples of solitonic hierachies
constructed in general form following Theorem \ref{mt}.

\section{Conclusion}

In this paper we have developed a method of converting geometric data
\footnote{%
and physical data, for instance, in Einstein gravity} for a (pseudo)
Riemannian metric into alternative nonholonomic structures and metric
compatible linear connections completely defined by the 'original' metric
tensor. We proved that for any (semi) Riemannian metric on a nonholonomic
manifold $\mathbf{V,}\ \dim \mathbf{V}=n+m,\ n\geq 2$ and $m\geq 1,$ and
corresponding classes of nonholonomic frame deformations, there is a choice
for a linear connection (and corresponding Riemannian and Ricci tensors)
with constant coefficients with respect to a class of nonholonomic frames
with associated nonlinear connection (N--connection) structure.

So, the general conclusion is that a (pseudo) Riemannian geometry can be
described not only in terms of the Levi--Civita connection but also using
any metric compatible linear connection if such an alternative connection is
completely defined the same metric structure. We outline in Table \ref{tab1}
the basic formulas for decomposition of the fundamental geometric objects
under such nonholonomic deformations (in the simplest case) determined by a
N--connection structure.

The local algebraic structure of modelled nonholonomic spaces is defined by
a conventional splitting of dimensions with certain holonomic and
nonholonomic variables (defining a distribution of horizontal and vertical
subspaces). Such subspaces are modelled locally as Riemannian symmetric
manifolds and their properties are exhausted by the geometry of
distinguished Lie groups $\mathbf{G}=GO(n)\oplus $ $GO(m)$ and $\mathbf{G}%
=SU(n)\oplus $ $SU(m)$ and the geometry of N--connections on a conventional
vector bundle with base manifold $\ M,$ $\dim M=n,$ and typical fiber $F,$ $%
\dim F=n.$ This can be formulated equivalently in terms of geometric objects
on couples of Klein spaces. The bi--Hamiltonian and related solitonic (of
type mKdV and SG) hierarchies are generated naturally by wave map equations
and recursion operators associated to the horizontal and vertical flows of
curves on such spaces.

\begin{table}[tbp]
\caption{Metric connections \& geometric structures generated by $\mathbf{g}%
=\{g_{\protect\alpha \protect\beta} \}$ }
\label{tab1}\vskip5pt {\footnotesize \textrm{%
\begin{tabular*}{\textwidth}{@{}l@{\extracolsep{0pt plus11pt}}l|@{\extracolsep{0pt plus11pt}}l|@{\extracolsep{0pt plus11pt}}l|}
\hline\hline
\vline \qquad Geometric & \vline \qquad Levi--Civita & \ canonical
d--connection & \ constant coefficients \\
\vline \qquad objects for: & \vline \qquad connection &  & \qquad
d--connection \\ \hline\hline
\vline Co-frames & \vline \ $e_{\ }^{\beta }= A_{\ \underline{\beta}}^{\beta
}(u)du^{\underline{\beta}}$ & \ $\mathbf{e}^{\alpha}= [e^i=dx^i, $ & \ $%
\mathbf{e}^{{\alpha}^{\prime}}= [e^{i^{\prime}}=dx^{i^{\prime}}, $ \\
\vline & \vline & \qquad$\mathbf{e}^a=dy^a-N^a_j dx^j]$ & \quad ${\mathbf{e}}%
^{a^{\prime}}=dy^{a^{\prime}}- N^{a^{\prime}}_{j^{\prime}} dx^{j^{\prime}}]$
\\ \hline
\vline Metric decomp. & \vline \ $g_{\alpha \beta}= A^{\ \underline{\alpha}%
}_{\alpha } A^{\ \underline{\beta}}_{\beta }g_{\underline{\alpha} \underline{%
\beta}}$ & \quad $\mathbf{g}_{\alpha \beta}= [g_{ij},h_{ab}],\ $ \  & $\ _{0}%
\mathbf{g}_{\alpha ^{\prime }\beta ^{\prime }}= [\ _{0}g_{i ^{\prime }j
^{\prime }},\ _{0}h_{a ^{\prime }b ^{\prime }}],\ $ \\
\vline & \vline & \quad $\mathbf{g}=g_{ij}\ e^{i}\otimes e^{j}$ & \quad $%
\mathbf{g}=\ _{0}g_{i ^\prime j ^\prime}\ e^{i ^{\prime }}\otimes e^{j
^{\prime }}$ \\
\vline & \vline & \qquad $+h_{ab}\ \mathbf{e}^{a}\otimes \mathbf{e}^{b}$ &
\qquad $+\ _{0}h_{a ^\prime b ^\prime}\ \mathbf{e}^{a ^\prime}\otimes
\mathbf{e}^{b ^\prime}$ \\
\vline & \vline &  & \ $g_{i^{\prime }j^{\prime }}=A_{\ i^{\prime }}^{i}A_{\
j^{\prime}}^{j}g_{ij},$ \\
\vline & \vline &  & \ $h_{a^{\prime }b^{\prime }}=A_{\ a^{\prime }}^{a}A_{\
b^{\prime }}^{b}h_{ab}$ \\ \hline
\vline Connections & \vline \qquad $_{\shortmid }\Gamma _{\ \alpha \beta
}^{\gamma }$ & $\ _{\shortmid }\Gamma _{\ \alpha \beta }^{\gamma }=\widehat{%
\mathbf{\Gamma }}_{\ \alpha \beta }^{\gamma }+\ _{\shortmid }Z_{\ \alpha
\beta }^{\gamma }$ & $\ _{0}\widetilde{\mathbf{\Gamma }}_{\ \alpha ^{\prime
}\beta ^{\prime}}^{\gamma ^{\prime }} =( \widehat{L}_{j^{\prime
}k^{\prime}}^{i^{\prime }}=0,$ \\
\vline and distorsions & \vline &  & \ $\widehat{L}_{b^{\prime }k^{\prime
}}^{a^{\prime }}=\ _{0}\widehat{L}_{b^{\prime }k^{\prime }}^{a^{\prime
}}=const.,$ \\
\vline & \vline &  & \qquad $\widehat{C}_{j^{\prime }c^{\prime
}}^{i^{\prime}}=0, \widehat{C}_{b^{\prime }c^{\prime }}^{a^{\prime }}=0)$ \\
\hline
\vline Riemannian & \vline \qquad $_{\shortmid }R_{~\beta \gamma \delta
}^{\alpha }$ & $\qquad \widehat{\mathbf{R}}_{~\beta \gamma \delta }^{\alpha
} $ & $\ _{0}\widetilde{\mathbf{R}}_{\ \beta ^{\prime }\gamma ^{\prime
}\delta ^{\prime }}^{\alpha ^{\prime }} =(0,\ _{0}\widetilde{R}%
_{~b^{\prime}j^{\prime}k^{\prime}}^{a^{\prime}}$ \\
\vline (d--)tensors & \vline &  & $\qquad =const.,0,0,0,0)$ \\ \hline
\vline \ Ricci(d-)tensors & \vline & $\qquad \widehat{R}_{ij}=\ ^h\lambda \
g_{ij},$ & \qquad constraints \\
\vline \ Einstein eqs. & \vline \quad $_{\shortmid }R_{\beta \gamma}=\lambda
g_{\beta \gamma}$ & $\qquad \widehat{R}_{ab}=\ ^v\lambda \ h_{ab},$ & \qquad
on distorsion \\
\vline & \vline & \quad $\widehat{R}_{ib}=0,\ \widehat{R}_{bi}=0$ & \qquad
d--tensors \\ \hline\hline
\end{tabular*}
}}
\end{table}

One should be emphasized that N--connections can be considered both in
(pseudo) Riemannian and Finsler--Lagrange geometries, see discussions in
Refs. \cite{ijgmmp1,vrfg,vsgg,ma2}, but in the first case to prescribe a
N--connection is to fix a conventional (in general, nonholonomic) splitting
on the manifold under consideration. The point is to consider such a
splitting and relevant nonholonomic distribution which are convenient for
further solitonic constructions This allowed us to elaborate a "solitonic"
approach when the geometry of (semi) Riemannian / Einstein manifolds is
encoded into nonholonomic hierarchies of bi--Hamiltonian structures and
related solitonic equations derived for curve flows on spaces with
conventional splitting of dimensions.

The main result of this work is the proof that any metric structure on a
(pseudo) Riemannian manifold can be decomposed into solitonic data with
corresponding hierarchies of nonlinear waves. Such constructions hold true
for more general classes of commutative and noncommutative metric--affine,
Finsler--Lagrange--Hamilton, their generalizations to nonsymmetric metrics
and/or nonholonomic Fedosov manifolds and their Ricci flows \cite%
{vncg,vsgg,rf2,rf3,ncrf,rfns,plafq,vlqgfq}. Nevertheless, the solution of
the "inverse" problem to state the conditions when it is possible to extract
certain general (non) commutative / (non) holonomic / (non) symmetric
geometries etc from a given solitonic hierarchy it is a purpose for future
work.

Let us speculate on possible important consequences for further research and  developments in different branches of mathematics, classical and quantum physics and mechanics of the results obtained in this work:
\begin{enumerate}
\item We illustrated how the geometric constructions on Riemannian spaces can re--defined equivalently in terms of nonholonomic structures, nonlinear connections and associated nonholonomic frames and encoded into alternative connections with constant coefficient curvatures. Such nonholonomic transforms allow us to apply a number of geometric methods formally elaborated in Finsler geometry and Lagrange and Hamilton mechanics for constructing new classes of exact solutions in gravity theories and Ricci flow models.
\item The method of nonholonomic frames and deformations can be considered with "inverse" purposes, when various types of nonlinear fundamental physical interactions are modelled as certain effective mechanical or continuous media theories. This presents certain interest for possible computer simulations and experimental laboratory research for a number of effects in modern quantum gravity, black hole physics, astrophysics and cosmology when real experiments with high energies are not possible.
 \item Analyzing possible curve flows on various type of  phase spaces, classical manifolds and fibred spaces, it is possible to draw very fundamental conclusions on the type of interactions and their symmetries, invariants and conservation laws. Such information can be encoded as geometric data on a corresponding class of nonholonomic Klein spaces and relevant bi--Hamilton operators.
 \item A  fundamental result advocated in this paper is that the dynamics of gravitational field interactions in the Einstein gravity and generalizations can be encoded into certain series of solitonic hierarchies and associated conservation laws and nonholonomic constraints.
 \item The mentioned "solitonic nonlinear decomposition" of gravitational interactions presents a substantial interest for elaborating new methods of quantization and their verification by associated models with effective quantum liquids, in low--temperature physics etc.
 \item Inversely, to the previous point, further developments of the theory of quantum solitonic equations are possible by using former methods of deformation and geometric quantization, new concepts from quantum gravity and gauge theories.
 \item In our recent works, we also investigated some  connections between the theory of nonholonomic Ricci flows and constrained curve flows. The approach has strong connections to the theory of diffusion, stochastic and kinetic processes and non--equilibrium thermodynamics in locally anisotropic spaces/media.
\item The elaborated theoretical/geometrical methods from classical and quantum gravity, for instance, can be applied for a study of nonlinear solitonic interactions in wave mechanics, electrodynamic processes in various continuous and nonhomogeneous media.
\end{enumerate}

Finally we note that there are many physically interesting models when
solitonic hierarchies were constructed in modern theories of gravity and
Ricci flows of physically valuable solutions in gravity \cite%
{vncg,ijgmmp1,rf3,vrfg,vsgg}. They positively can be imbedded as particular
cases of bi--Hamiltonian structures and related solitonic (of type mKdV and
SG) hierarchies constructed in this work. There are various ideas how to
consider anholonomic and parametric deformations (like in Refs. \cite%
{ijgmmp1,rf3}) of such a hierarchy into another nonlinear solitonic
superposition in order to generate a new class of classical or quantum
solutions of the Einstein equations. This way, by nonholonomic parametric
distributions, we naturally model on (pseudo) Riemannian manifolds various
classes of generalized geometries encoded into (non) commutative / classical
and/or quantum generalizations of soltionic equations. We are continuing to
work in such directions.

\vskip5pt

\textbf{Acknowledgement: }The work was performed during a visit at Fields
Institute.

\setcounter{equation}{0} \renewcommand{\theequation}
{A.\arabic{equation}} \setcounter{subsection}{0}
\renewcommand{\thesubsection}
{A.\arabic{subsection}}

\appendix

\section{N--anholonomic Riemann Manifolds}

\label{s2} In this section we briefly recall some basic definitions and
facts concerning the geometry of (pseudo) Riemannian nonholonomic manfolds.

A pair $(\mathbf{V},\mathcal{N}),$ where $\mathbf{V}$ is a manifold and $%
\mathcal{N}$ \ is a nonintegrable distribution on $\mathbf{V}$, is a
nonholonomic manifold (in this work, we consider real manifolds of necessary
smooth class).

The concept of nonholonomic manifold was introduced independently by G. Vr\v{%
a}nceanu \cite{vr1,vr2} and Z. Horak \cite{hor} for geometric
interpretations of nonholonomic mechanical systems and considered new
classes of linear connections, which were different from the Levi--Civita
connection (see modern approaches and historical remarks in Refs. \cite%
{bejf,vsgg,vrfg,ijgmmp1}).

\subsection{Nonholonomic distributions and N--connections}

Let us consider a real smooth (pseudo) Riemann $(n+m)$--dimensional manifold
$\mathbf{V,}$ with $n\geq 2$ and $m\geq 1.$ The local coordinates on $%
\mathbf{V}$ are denoted $u=(x,y),$ or $u^{\alpha }=\left( x^{i},y^{a}\right)
,$ where the ''horizontal'' (h) indices run the values $i,j,k,\ldots
=1,2,\ldots ,n$ and the ''vertical'' (v) indices run the values $%
a,b,c,\ldots =n+1,n+2,\ldots ,n+m.$ With respect to a local coordinate base,
we parameterize a metric structure on $\mathbf{V}$ in the form

\begin{equation}
\mathbf{\ g}=\underline{g}_{\alpha \beta }\left( u\right) du^{\alpha
}\otimes du^{\beta }  \label{metr}
\end{equation}%
with coefficients%
\begin{equation}
\underline{g}_{\alpha \beta }=\left[
\begin{array}{cc}
g_{ij}\left( u\right) +\underline{N}_{i}^{a}\left( u\right) \underline{N}%
_{j}^{b}\left( u\right) h_{ab}\left( u\right) & \underline{N}_{j}^{e}\left(
u\right) h_{ae}\left( u\right) \\
\underline{N}_{i}^{e}\left( u\right) h_{be}\left( u\right) & h_{ab}\left(
u\right)%
\end{array}%
\right] .  \label{ansatz}
\end{equation}

We consider a map $\pi :\mathbf{V}\rightarrow V,$ $\dim V=n,$ and denote by $%
\pi ^{\top }:T\mathbf{V}\rightarrow TV$ the differential of $\pi $ defined
by fiber preserving morphisms of the tangent bundles $T\mathbf{V}$ and $TV.$
The kernel of $\pi ^{\top }$ is just the vertical subspace $v\mathbf{V}$
with a related inclusion mapping $i:v\mathbf{V}\rightarrow T\mathbf{V}.$

\begin{definition}
A nonlinear connection (N--connection) $\mathbf{N}$ on a manifold $\mathbf{V}
$ is defined by the splitting on the left of an exact sequence
\begin{equation*}
0\rightarrow v\mathbf{V}\overset{i}{\rightarrow} T\mathbf{V}\rightarrow T%
\mathbf{V}/v\mathbf{V}\rightarrow 0,
\end{equation*}%
i. e. by a morphism of submanifolds $\mathbf{N:\ \ }T\mathbf{V}\rightarrow v%
\mathbf{V}$ such that $\mathbf{N\circ i}$ is the unity in $v\mathbf{V}.$
\end{definition}

Locally, a N--connection is defined by its coefficients $N_{i}^{a}(u),$%
\begin{equation}
\mathbf{N}=N_{i}^{a}(u)dx^{i}\otimes \frac{\partial }{\partial y^{a}}.
\label{coeffnc}
\end{equation}%
Globalizing the local splitting, one prove that any N--connection is defined
by a Whitney sum of conventional horizontal (h) subspace, $\left( h\mathbf{V}%
\right) ,$ and vertical (v) subspace, $\left( v\mathbf{V}\right) ,$
\begin{equation}
T\mathbf{V}=h\mathbf{V}\oplus v\mathbf{V}.  \label{whitney}
\end{equation}

The sum (\ref{whitney}) states on $T\mathbf{V}$ a nonholonomic distribution
of horizontal and vertical subspaces. The linear connections those which are
linear on $y^{a},$ i.e. $N_{i}^{a}(u)=\Gamma _{bj}^{a}(x)y^{b}.$

For simplicity, we shall work with a particular class of nonholonomic
manifolds:

\begin{definition}
\label{defanhm} A manifold $\mathbf{V}$ is N--anholonomic if its tangent
space $T\mathbf{V}$ is enabled with a N--connection structure (\ref{whitney}%
).
\end{definition}

On a (pseudo) Riemannian manifold, we can define a N--connection structure
induced by a formal $(n+m)$--splitting, when the N--connection coefficients (%
\ref{coeffnc}) are determined by certain off--diagonal terms in (\ref{ansatz}%
) for\ $N_{i}^{a}=\underline{N}_{i}^{a}.$ Such a N--anholonomic manifold is
provided with a local fibered structure which is fixed following certain
symmetry conditions and/or constraints imposed on the dynamics of
gravitational fields.

A N--anholonomic manifold is characterized by its curvature:

\begin{definition}
The N--connection curvature is defined as the Neijenhuis tensor,%
\begin{equation*}
\mathbf{\Omega }(\mathbf{X,Y})\doteqdot \lbrack v\mathbf{X},v\mathbf{Y}]+\ v[%
\mathbf{X,Y}]- v[v\mathbf{X},\mathbf{Y}]-v[\mathbf{X},v\mathbf{Y}].
\label{njht}
\end{equation*}
\end{definition}

In local form, we have for $\mathbf{\Omega }=\frac{1}{2}\Omega _{ij}^{a}\
d^{i}\wedge d^{j}\otimes \partial _{a}$ the coefficients%
\begin{equation}
\Omega _{ij}^{a}=\frac{\partial N_{i}^{a}}{\partial x^{j}}-\frac{\partial
N_{j}^{a}}{\partial x^{i}}+N_{i}^{b}\frac{\partial N_{j}^{a}}{\partial y^{b}}%
-N_{j}^{b}\frac{\partial N_{i}^{a}}{\partial y^{b}}.  \label{ncurv}
\end{equation}

Performing a frame (vielbein) transform $\mathbf{e}_{\alpha }=\mathbf{A}%
_{\alpha }^{\ \underline{\alpha }}\partial _{\underline{\alpha }}$ and $%
\mathbf{e}_{\ }^{\beta }=\mathbf{A}_{\ \underline{\beta }}^{\beta }du^{%
\underline{\beta }},$ where we underline the local coordinate indices, when $%
\partial _{\underline{\alpha }}=\partial /\partial u^{\underline{\alpha }%
}=(\partial _{\underline{i}}=\partial /\partial x^{\underline{i}},\partial
/\partial y^{\underline{a}}),$ with coefficients

\begin{equation}
\mathbf{A}_{\alpha }^{\ \underline{\alpha }}(u)=\left[
\begin{array}{cc}
e_{i}^{\ \underline{i}}(u) & N_{i}^{b}(u)e_{b}^{\ \underline{a}}(u) \\
0 & e_{a}^{\ \underline{a}}(u)%
\end{array}%
\right] ,~\mathbf{A}_{\ \underline{\beta }}^{\beta }(u)=\left[
\begin{array}{cc}
e_{\ \underline{i}}^{i\ }(u) & -N_{k}^{b}(u)e_{\ \underline{i}}^{k\ }(u) \\
0 & e_{\ \underline{a}}^{a\ }(u)%
\end{array}%
\right] ,  \label{naft}
\end{equation}%
we transform the metric (\ref{whitney}) into a distinguished metric
(d--metric)
\begin{equation}
\mathbf{g}=~^{h}g+~^{v}h=\ g_{ij}(x,y)\ e^{i}\otimes e^{j}+\ h_{ab}(x,y)\
\mathbf{e}^{a}\otimes \mathbf{e}^{b},  \label{m1}
\end{equation}%
for an associated, to a N--connection, frame (vielbein) structure $\mathbf{e}%
_{\nu }=(\mathbf{e}_{i},e_{a}),$ where
\begin{equation}
\mathbf{e}_{i}=\frac{\partial }{\partial x^{i}}-N_{i}^{a}(u)\frac{\partial }{%
\partial y^{a}}\mbox{ and
}e_{a}=\frac{\partial }{\partial y^{a}},  \label{dder}
\end{equation}%
and the dual frame (coframe) structure $\mathbf{e}^{\mu }=(e^{i},\mathbf{e}%
^{a}),$ where
\begin{equation}
e^{i}=dx^{i}\mbox{ and }\mathbf{e}^{a}=dy^{a}+N_{i}^{a}(u)dx^{i}.
\label{ddif}
\end{equation}

The geometric objects on $\mathbf{V}$ can be defined in a form adapted to
the N--connection structure following certain decompositions which are
invariant under parallel transports preserving the splitting (\ref{whitney}%
). In this case, we call them to be distinguished (by the N--connection
structure), i.e. d--objects. For instance, a vector field $\mathbf{X}\in T%
\mathbf{V}$ \ is expressed
\begin{equation*}
\mathbf{X}=(hX,\ vX),\mbox{ \ or \ }\mathbf{X}=X^{\alpha }\mathbf{e}_{\alpha
}=X^{i}\mathbf{e}_{i}+X^{a}e_{a},
\end{equation*}%
where $hX=X^{i}\mathbf{e}_{i}$ and $vX=X^{a}e_{a}$ state, respectively, the
adapted to the N--connection structure horizontal (h) and vertical (v)
components of the vector. In brief, $\mathbf{X}$ is called a distinguished
vectors, in brief, d--vector). In a similar fashion, the geometric objects
on $\mathbf{V}$ like tensors, spinors, connections, ... are called
respectively d--tensors, d--spinors, d--connections if they are adapted to
the N--connection splitting (\ref{whitney}).

The vielbeins (\ref{dder}) and (\ref{ddif}) are called respectively
N--adapted frames and coframes. In order to preserve a relation with some
previous our notations \cite{vncg,vsgg}, we emphasize that $\mathbf{e}_{\nu
}=(\mathbf{e}_{i},e_{a})$ and $\mathbf{e}^{\mu }=(e^{i},\mathbf{e}^{a})$ are
correspondingly the former ''N--elongated'' partial derivatives $\delta
_{\nu }=\delta /\partial u^{\nu }=(\delta _{i},\partial _{a})$ and
''N--elongated'' differentials $\delta ^{\mu }=\delta u^{\mu }=(d^{i},\delta
^{a}).$

The vielbeins (\ref{ddif}) satisfy the nonholonomy relations
\begin{equation}
\lbrack \mathbf{e}_{\alpha },\mathbf{e}_{\beta }]=\mathbf{e}_{\alpha }%
\mathbf{e}_{\beta }-\mathbf{e}_{\beta }\mathbf{e}_{\alpha }=W_{\alpha \beta
}^{\gamma }\mathbf{e}_{\gamma }  \label{anhrel}
\end{equation}%
with anholonomy coefficients $W_{ia}^{b}=\partial _{a}N_{i}^{b}$ and $%
W_{ji}^{a}=\Omega _{ij}^{a}.$

\subsection{D--Connections}

\label{assectdcon}We perform all geometric constructions on N--anholonomic
manifolds.

\begin{definition}
A distinguished connection (in brief, d--connection) $\mathbf{D}=(h\mathbf{D}%
,v\mathbf{D})$ is a linear connection preserving under parallel transports
the nonholonomic decomposition (\ref{whitney}).
\end{definition}

The N--adapted components $\mathbf{\Gamma }_{\ \beta \gamma }^{\alpha }$ of
a d--connection $\mathbf{D}_{\alpha }=(\mathbf{e}_{\alpha }\rfloor \mathbf{D}%
)$ are defined by equations
\begin{equation}
\mathbf{D}_{\alpha }\mathbf{e}_{\beta }=\mathbf{\Gamma }_{\ \alpha \beta
}^{\gamma }\mathbf{e}_{\gamma },\mbox{\ or \ }\mathbf{\Gamma }_{\ \alpha
\beta }^{\gamma }\left( u\right) =\left( \mathbf{D}_{\alpha }\mathbf{e}%
_{\beta }\right) \rfloor \mathbf{e}^{\gamma }.  \label{dcon1}
\end{equation}%
The N--adapted splitting into h-- and v--covariant derivatives is stated by
\begin{equation*}
h\mathbf{D}=\{\mathbf{D}_{k}=\left( L_{jk}^{i},L_{bk\;}^{a}\right) \},%
\mbox{
and }\ v\mathbf{D}=\{\mathbf{D}_{c}=\left( C_{jk}^{i},C_{bc}^{a}\right) \},
\end{equation*}%
where, by definition, $L_{jk}^{i}=\left( \mathbf{D}_{k}\mathbf{e}_{j}\right)
\rfloor e^{i},$ $L_{bk}^{a}=\left( \mathbf{D}_{k}e_{b}\right) \rfloor
\mathbf{e}^{a},$ $C_{jc}^{i}=\left( \mathbf{D}_{c}\mathbf{e}_{j}\right)
\rfloor e^{i},$ $C_{bc}^{a}=\left( \mathbf{D}_{c}e_{b}\right) \rfloor
\mathbf{e}^{a}.$ The components $\mathbf{\Gamma }_{\ \alpha \beta }^{\gamma
}=\left( L_{jk}^{i},L_{bk}^{a},C_{jc}^{i},C_{bc}^{a}\right) $ completely
define a d--connection $\mathbf{D}$ on $\mathbf{E}.$

From the class of arbitrary d--connections $\mathbf{D}$ on $\mathbf{V,}$ one
distinguishes those which are metric compatible (metrical) satisfying the
condition%
\begin{equation}
\mathbf{Dg=0};  \label{metcomp}
\end{equation}%
i.e. for h- and v-projections $D_{j}g_{kl}=0,$ $D_{a}g_{kl}=0,$ $%
D_{j}h_{ab}=0,$ $D_{a}h_{bc}=0.$

On a N--anholonomic (semi) Riemannian manifold $\mathbf{V},$ there are two
types of preferred linear connections uniquely determined by a generic
off--diagonal metric structure with $n+m$ splitting, see $\mathbf{g}=g\oplus
_{N}h$ (\ref{m1}):

\begin{enumerate}
\item The Levi--Civita connection $\nabla =\{\Gamma _{\beta \gamma }^{\alpha
}\}$ is by definition torsionless, $~\ _{\shortmid }\mathcal{T}=0,$ and
satisfies the metric compatibility condition, $\nabla \mathbf{g}=0.$

\item The canonical d--connection $\widehat{\mathbf{\Gamma }}_{\ \alpha
\beta }^{\gamma }=\left( \widehat{L}_{jk}^{i},\widehat{L}_{bk}^{a},\widehat{C%
}_{jc}^{i},\widehat{C}_{bc}^{a}\right) $ is also metric compatible, i. e. $%
\widehat{\mathbf{D}}\mathbf{g}=0,$ but the torsion vanishes only on h-- and
v--subspaces, i.e. $\widehat{T}_{jk}^{i}=0$ and $\widehat{T}_{bc}^{a}=0,$
for certain nontrivial values of $\widehat{T}_{ja}^{i},\widehat{T}_{bi}^{a},%
\widehat{T}_{ji}^{a}.$
\end{enumerate}

For simplicity, we omit hats on symbols and write $L_{jk}^{i}$ instead of $%
\widehat{L}_{jk}^{i},$ $T_{ja}^{i}$ instead of $\widehat{T}_{ja}^{i}$ and so
on, for a d--connection $\mathbf{\Gamma }_{\ \alpha \beta }^{\gamma }.$

By a straightforward calculus with respect to N--adapted frames (\ref{dder})
and (\ref{ddif}), one can verify that the requested properties for $\widehat{%
\mathbf{D}}$ on $\mathbf{V}$ are satisfied if
\begin{eqnarray}
L_{jk}^{i} &=&\frac{1}{2}g^{ir}\left( \mathbf{e}_{k}g_{jr}+\mathbf{e}%
_{j}g_{kr}-\mathbf{e}_{r}g_{jk}\right) ,  \label{candcon} \\
L_{bk}^{a} &=&e_{b}(N_{k}^{a})+\frac{1}{2}h^{ac}\left( \mathbf{e}%
_{k}h_{bc}-h_{dc}\ e_{b}N_{k}^{d}-h_{db}\ e_{c}N_{k}^{d}\right) ,  \notag \\
C_{jc}^{i} &=&\frac{1}{2}g^{ik}e_{c}g_{jk},\ C_{bc}^{a}=\frac{1}{2}%
h^{ad}\left( e_{c}h_{bd}+e_{c}h_{cd}-e_{d}h_{bc}\right) .  \notag
\end{eqnarray}%
For dimensions $n=m,$ we can consider the so--called normal d--connection%
\footnote{%
i.e. it has the same coefficients as the Levi--Civita connection with
respect to N--elongated bases (\ref{dder}) and (\ref{ddif})} $\ \mathbf{%
\tilde{D}}=(h\tilde{D},v\tilde{D})$ with the coefficients $\Gamma _{\ \beta
\gamma }^{\alpha }=(L_{\ jk}^{i},L_{bc}^{a}),$
\begin{eqnarray}
L_{\ jk}^{i} &=&\frac{1}{2}g^{ih}(\mathbf{e}_{k}g_{jh}+\mathbf{e}_{j}g_{kh}-%
\mathbf{e}_{h}g_{jk}),  \label{candcontm} \\
C_{\ bc}^{a} &=&\frac{1}{2}h^{ae}(e_{c}h_{be}+e_{b}h_{ce}-e_{e}h_{bc}).
\notag
\end{eqnarray}%
A straightforward calculus shows that the coefficients of the Levi--Civita
connection can be expressed in the form
\begin{equation}
\ _{\shortmid }\Gamma _{\ \alpha \beta }^{\gamma }=\widehat{\mathbf{\Gamma }}%
_{\ \alpha \beta }^{\gamma }+\ _{\shortmid }Z_{\ \alpha \beta }^{\gamma },
\label{cdeft}
\end{equation}%
where
\begin{eqnarray}
\ _{\shortmid }Z_{jk}^{i} &=&0,\ _{\shortmid
}Z_{jk}^{a}=-C_{jb}^{i}g_{ik}h^{ab}-\frac{1}{2}\Omega _{jk}^{a},~_{\shortmid
}Z_{bk}^{i}=\frac{1}{2}\Omega _{jk}^{c}h_{cb}g^{ji}-\Xi
_{jk}^{ih}~C_{hb}^{j},  \notag \\
_{\shortmid }Z_{bk}^{a} &=&~^{+}\Xi _{cd}^{ab}~\left[
L_{bk}^{c}-e_{b}(N_{k}^{c})\right] ,\ _{\shortmid }Z_{kb}^{i}=\frac{1}{2}%
\Omega _{jk}^{a}h_{cb}g^{ji}+\Xi _{jk}^{ih}~C_{hb}^{j},  \label{cdeftc} \\
\ _{\shortmid }Z_{jb}^{a} &=&-~^{-}\Xi _{cb}^{ad}~~^{\circ }L_{dj}^{c},\
_{\shortmid }Z_{bc}^{a}=0,\ _{\shortmid }Z_{ab}^{i}=-\frac{g^{ij}}{2}\left[
~^{\circ }L_{aj}^{c}h_{cb}+~^{\circ }L_{bj}^{c}h_{ca}\right] ,  \notag \\
\Xi _{jk}^{ih} &=&\frac{1}{2}(\delta _{j}^{i}\delta
_{k}^{h}-g_{jk}g^{ih}),~^{\pm }\Xi _{cd}^{ab}=\frac{1}{2}(\delta
_{c}^{a}\delta _{d}^{b}{\pm }h_{cd}h^{ab}),  \notag
\end{eqnarray}%
for $\Omega _{jk}^{a}$ computed as in formula (\ref{ncurv}), $~^{\circ
}L_{aj}^{c}=L_{aj}^{c}-e_{a}(N_{j}^{c})$ and
\begin{eqnarray*}
\ _{\shortmid }\Gamma _{\beta \gamma }^{\alpha } &=&\left( \ _{\shortmid
}L_{jk}^{i},\ _{\shortmid }L_{jk}^{a},\ _{\shortmid }L_{bk}^{i},\
_{\shortmid }L_{bk}^{a},\ _{\shortmid }C_{jb}^{i},\ _{\shortmid
}C_{jb}^{a},\ _{\shortmid }C_{bc}^{i},\ _{\shortmid }C_{bc}^{a}\right) , \\
\bigtriangledown _{\mathbf{e}_{k}}(\mathbf{e}_{j}) &=&\ _{\shortmid
}L_{jk}^{i}\mathbf{e}_{i}+\ _{\shortmid }L_{jk}^{a}e_{a},\ \bigtriangledown
_{\mathbf{e}_{k}}(e_{b})=\ _{\shortmid }L_{bk}^{i}\mathbf{e}_{i}+\
_{\shortmid }L_{bk}^{a}e_{a}, \\
\bigtriangledown _{e_{b}}(\mathbf{e}_{j}) &=&~_{\shortmid }C_{jb}^{i}\mathbf{%
e}_{i}+\ _{\shortmid }C_{jb}^{a}e_{a},\ \bigtriangledown _{e_{c}}(e_{b})=\
_{\shortmid }C_{bc}^{i}\mathbf{e}_{i}+\ _{\shortmid }C_{bc}^{a}e_{a}.
\end{eqnarray*}%
It should be emphasized that all components of $\ _{\shortmid }\Gamma _{\
\alpha \beta }^{\gamma },\widehat{\mathbf{\Gamma }}_{\ \alpha \beta
}^{\gamma }$ and$\ _{\shortmid }Z_{\ \alpha \beta }^{\gamma }$ are defined
by the coefficients of \ d--metric $\mathbf{g}$ (\ref{m1}) and N--connection
$\mathbf{N}$ (\ref{coeffnc}), or equivalently by the coefficients of the
corresponding generic off--diagonal metric\ (\ref{ansatz}).

The simplest way to perform computations with d--connections is to use
N--adapted differential forms like $\mathbf{\Gamma }_{\ \beta }^{\alpha }=%
\mathbf{\Gamma }_{\ \beta \gamma}^{\alpha }\mathbf{e}^{\gamma }$ with the
coefficients defined with respect to (\ref{ddif}) and (\ref{dder}). Torsion
of a d--connection can be computed
\begin{equation*}
\mathcal{T}^{\alpha }\doteqdot \mathbf{De}^{\alpha }=d\mathbf{e}^{\alpha
}+\Gamma _{\ \beta }^{\alpha }\wedge \mathbf{e}^{\beta }.
\end{equation*}%
Locally it is characterized by (N--adapted) d--torsion coefficients
\begin{eqnarray}
T_{\ jk}^{i} &=&L_{\ jk}^{i}-L_{\ kj}^{i},\ T_{\ ja}^{i}=-T_{\ aj}^{i}=C_{\
ja}^{i},\ T_{\ ji}^{a}=\Omega _{\ ji}^{a},\   \label{dtors} \\
T_{\ bi}^{a} &=&-T_{\ ib}^{a}=\frac{\partial N_{i}^{a}}{\partial y^{b}}-L_{\
bi}^{a},\ T_{\ bc}^{a}=C_{\ bc}^{a}-C_{\ cb}^{a}.  \notag
\end{eqnarray}

The curvature of a d--connection $\mathbf{D,}$
\begin{equation}
\mathcal{R}_{~\beta }^{\alpha }\doteqdot \mathbf{D\Gamma }_{\ \beta
}^{\alpha }=d\mathbf{\Gamma }_{\ \beta }^{\alpha }-\mathbf{\Gamma }_{\ \beta
}^{\gamma }\wedge \mathbf{\Gamma }_{\ \gamma }^{\alpha },  \label{curv}
\end{equation}%
splits into six types of N--adapted components with respect to (\ref{dder})
and (\ref{ddif}),
\begin{equation*}
\mathbf{R}_{~\beta \gamma \delta }^{\alpha }=\left(
R_{~hjk}^{i},R_{~bjk}^{a},P_{~hja}^{i},P_{~bja}^{c},S_{~jbc}^{i},S_{~bdc}^{a}\right) ,
\end{equation*}%
\begin{eqnarray}
R_{\ hjk}^{i} &=&\mathbf{e}_{k}L_{\ hj}^{i}-\mathbf{e}_{j}L_{\ hk}^{i}+L_{\
hj}^{m}L_{\ mk}^{i}-L_{\ hk}^{m}L_{\ mj}^{i}-C_{\ ha}^{i}\Omega _{\ kj}^{a},
\label{dcurv} \\
R_{\ bjk}^{a} &=&\mathbf{e}_{k}L_{\ bj}^{a}-\mathbf{e}_{j}L_{\ bk}^{a}+L_{\
bj}^{c}L_{\ ck}^{a}-L_{\ bk}^{c}L_{\ cj}^{a}-C_{\ bc}^{a}\Omega _{\ kj}^{c},
\notag \\
P_{\ jka}^{i} &=&e_{a}L_{\ jk}^{i}-D_{k}C_{\ ja}^{i}+C_{\ jb}^{i}T_{\
ka}^{b},~P_{\ bka}^{c}=e_{a}L_{\ bk}^{c}-D_{k}C_{\ ba}^{c}+C_{\ bd}^{c}T_{\
ka}^{c},  \notag \\
S_{\ jbc}^{i} &=&e_{c}C_{\ jb}^{i}-e_{b}C_{\ jc}^{i}+C_{\ jb}^{h}C_{\
hc}^{i}-C_{\ jc}^{h}C_{\ hb}^{i},  \notag \\
S_{\ bcd}^{a} &=&e_{d}C_{\ bc}^{a}-e_{c}C_{\ bd}^{a}+C_{\ bc}^{e}C_{\
ed}^{a}-C_{\ bd}^{e}C_{\ ec}^{a}.  \notag
\end{eqnarray}

Contracting respectively the components, $\mathbf{R}_{\alpha \beta
}\doteqdot \mathbf{R}_{\ \alpha \beta \tau }^{\tau },$ one computes the h-
v--components of the Ricci d--tensor
\begin{equation}
R_{ij}\doteqdot R_{\ ijk}^{k},\ \ R_{ia}\doteqdot -P_{\ ika}^{k},\
R_{ai}\doteqdot P_{\ aib}^{b},\ S_{ab}\doteqdot S_{\ abc}^{c}.
\label{dricci}
\end{equation}%
The scalar curvature is defined by contracting the Ricci d--tensor with the
inverse metric $\mathbf{g}^{\alpha \beta },$
\begin{equation}
\overleftrightarrow{\mathbf{R}}\doteqdot \mathbf{g}^{\alpha \beta }\mathbf{R}%
_{\alpha \beta }=g^{ij}R_{ij}+h^{ab}S_{ab}=\overrightarrow{R}+\overleftarrow{%
S}.  \label{sdccurv}
\end{equation}

For any $\mathbf{\Gamma (g)},$ there is a nontrivial torsion $\mathbf{T}(%
\mathbf{g})$ \ with coefficients (\ref{dtors}). This torsion is induced
nonholonomically as an effective one (by anholonomy coefficients, see (\ref%
{anhrel}) and (\ref{ncurv})) and constructed only from the coefficients of
metric $\mathbf{g}.$ Being defined by certain off--diagonal metric
coefficients, such a torsion is completely deferent from that in string, or
Einstein--Cartan, theory when the torsion tensor is an additional (to
metric) field defined by an antisymmetric $H$--field, or spinning matter,
discussing in Ref. \cite{vncg,vrfg,vsgg}.


\begin{thebibliography}{99}
\bibitem{vncg} Vacaru S., Exact solutions with noncommutative symmetries in
Einstein and gauge Gravity, J. Math. Phys. \textbf{46} (2005) 042503

\bibitem{ijgmmp1} Vacaru S., Parametric nonholonomic frame transforms and
exact solutions in gravity, Int. J. Geom. Methods in Mod. Phys. [IJGMMP]
\textbf{4} (2007) 1285--1334

\bibitem{rf2} Vacaru S., Nonholonomic Ricci Flows: II. Evolution Equations
and Dynamics, J. Math. Phys. \textbf{\ 49} \ (2008) 043504

\bibitem{rf3} Vacaru S., Nonholonomic Ricci Flows: III. Curve Flows and
Solitonic Hierarchies, arXiv: 0704.2062 [math.DG]

\bibitem{ncrf} Vacaru S., Nonholonomic Ricci Flows, Exact Solutions in
Gravity, and Symmetric and Nonsymmetric Metrics, to appear in Int. J. Theor.
Phys. \textbf{47} (2008); arXiv: 0806.3812 [gr-qc]

\bibitem{ma2} Miron R. and Anastasiei M., The Geometry of Lagrange Spaces:\
Theory and Applications, FTPH no. \textbf{59} (Kluwer Academic Publishers,
Dordrecht, Boston, London, 1994)

\bibitem{vrfg} Vacaru S., Finsler and Lagrange Geometries in Einstein and
String Gravity, Int. J. Geom. Methods. Mod. Phys. [IJGMMP] \textbf{5} (2008)
473--511

\bibitem{vsgg} Vacaru S., P. Stavrinos, E. Gaburov and D. Gon\c{t}a, \textit{%
Clifford and Riemann- Finsler Structures in Geometric Mechanics and Gravity,}%
\ Selected Works, Differential Geometry -- Dynamical Systems, Monograph 7
(Geometry Balkan Press, 2006);\newline
www.mathem.pub.ro/dgds/mono/va-t.pdf and gr-qc/0508023

\bibitem{chou1} Chou K. -S. and Qu C., Integrable equations arising from
motions of plane curves, Phys. D, \textbf{162} (2002) 9--33

\bibitem{mbsw} Mari Beffa G., Sanders J., Wang J. -P., Integrable systems in
three--dimensional Riemannian geometry, J. Nonlinear. Sci., \textbf{12}
(2002) 143--167

\bibitem{ath} Athorne C., Local Hamiltonian structures of multicomponent KdV
equations, J. Phys. A: Math. Gen. \textbf{21} (1988) 4549--4556

\bibitem{saw} Sanders J. and Wang J. -P., Integrable systems in $n$
dimensional Riemannian geometry, Mosc. Math. J. \textbf{3} (2003) 1369--1393

\bibitem{serg} Sergyeyev A., Why nonlocal recursion operators produce local
symmetries: new results and applications, J. Phys. A: Math. Gen. \textbf{38}
(2005) 3397--3407

\bibitem{fours} Foursov M. V., Classification of certain integrable coupled
potential KdV and modified KdV--type equations, J. Math. Phys., \textbf{41}
(2000) 6173--6185

\bibitem{wang} Wang J. -P., Generalized Hasimoto transformation and vector
sine--Gorodon equation, in SPT 2002: Symmetry and Perturbation Theory (Cala
Gonone), Editors S. Abenda, G. Gaeta and S.\ Walcher, River Edge (World
Scientific, 2002), pp. 276--283

\bibitem{anc1} Anco S. C., Hamiltonian flows of curves in $G/SO(n)$ and
vector soliton equations of mKdV and sine--Gordon Type, Symmetry,
Integrability and Geometry: Methods and Applications (SIGMA), \textbf{2}
(2006) 044

\bibitem{anc2} Anco S. C., Bi--Hamiltonian operators, integrable flows of
curves using moving frames, and geometric map equations, J. Phys. A: Math.
Gen. \textbf{39} (2006) 2043--2072

\bibitem{rfns} Vacaru S., Spectral Functionals, Nonholonomic Dirac
Operators, and Noncommutative Ricci Flows, arXiv: 0806.3814 [math-ph]

\bibitem{plafq} Vacaru S., Deformation Quantization of Nonholonomic Almost K%
\"{a}hler Models and Einstein Gravity, Phys. Lett. A \textbf{\ 372 } \
(2008) 2949-2955

\bibitem{bej} Bejancu A., Finsler Geometry and Applications (Ellis Horwood,
Chichester, England, 1990)

\bibitem{bcs} Bao D., Chern S. -S., and Shen Z., An Introduction to
Riemann--Finsler Geometry. Graduate Texts in Math., 200 (Springer--Verlag,
2000)

\bibitem{vr1} Vranceanu G., Sur les espaces non holonomes. C. R. Acad. Paris
103 (1926) 852--854

\bibitem{vr2} Vranceanu G., Lecons de Geometrie Differentielle, Vol II
(Edition de l'Academie de la Republique Populaire de Roumanie, 1957)

\bibitem{hor} Horak Z., Sur les syst\`{e}mes non holonomes, Bull. Internat.
Acad. Sci. Boh\`{e}me, (1927) 1--18

\bibitem{bejf} Bejancu A. and Farran H. R., Foliations and Geometric
Structures (Springer, 2005)

\bibitem{vlank} S. Vacaru, Locally Anisotropic Kinetic Processes and
Thermodynamics in Curved Spaces, Ann. Phys. (N.Y.) \textbf{290} (2001) 83-123

\bibitem{vlsp} S. Vacaru, Stochastic Processes and Thermodynamics on Curved
Spaces, Ann. Phys. (Leipzig), \textbf{9} (2000) Special Issue, 175-176

\bibitem{ancvac} S. Anco and S. Vacaru, Curve Flows in Lagrange-Finsler
Geometry, Bi-Hamiltonian Structures and Solitons, accepted to J. Geom. Phys.
(2008), math-ph/0609070

\bibitem{vrepmp} S. Vacaru, The Entropy of Lagrange-Finsler Spaces and Ricci
Flows, accepted to Rep. Math. Phys. (2008), math.DG/0701621

\bibitem{vlqgfq} Vacaru S., Loop Quantum Gravity in Ashtekar and Lagrange-Finsler Variables and Fedosov Quantization of General Relativity, arXiv: 0801.4942 [gr-qc]

%\bibitem{yano} Yano K. and Ishihara S., Tangent and Cotangent Bundles (M. Dekker, Inc. New York, 1978)

\bibitem{helag} Helgason S., Differential geometry, Lie groups, and
Symmetric Spaces (Providence, Amer. Math. Soc., 2001)

\bibitem{kob} Kobayashi S. and Nomizu K., Foundations of Differential
Geometry, Vols. I and II (Wiley, 1969)

\bibitem{sharpe} Sharpe R. W., Differential Geometry (New York,
Springer--Verlag, 1997)
\end{thebibliography}
\end{document}